\newtheorem{theorem}{Theorem}
\newtheorem{lemma}{Lemma}
\newtheorem*{proposition*}{Proposition}
\theoremstyle{remark}
\newtheorem{corollary}{Corollary}
\newtheorem{remark}{Remark}
\DeclareSymbolFontAlphabet{\mathbbol}{bbold}
\DeclareSymbolFontAlphabet{\mathbb}{AMSb}
\newcommand{\SLD}{\mathrm{C}_{\mathrm{SLD}}}
\newcommand{\RLD}{\mathrm{C}_{\mathrm{RLD}}}
\newcommand{\CH}{\mathrm{C}_{\mathrm{HCRB}}}
\newcommand{\CNH}{\mathrm{C}_{\mathrm{NHCRB}}}
\newcommand{\CGM}{\mathrm{C}_{\mathrm{GMCRB}}}
\newcommand{\MI}{\mathrm{C}_{\mathrm{MI}}}
\newcommand{\rat}{\mathcal{R}^{\mathrm{NH}}}
\newcommand{\ratGM}{\mathcal{R}^{\mathrm{GM}}}
\newcommand{\ratMI}{\mathcal{R}^{\mathrm{MI}}}
\newcommand{\B}{\mathcal{B}}
\newcommand{\bigtrace}{\mathbb{T}\mathrm{r}}
\newcommand{\ctrace}{\mathsf{Tr}}
\newcommand{\qtrace}{\Tr}
\newcommand{\nmax}{n_\mathrm{max}}
\def\ANU{Centre for Quantum Computation and Communication Technology,
Department of Quantum Science and Technology, Australian National University, Canberra, ACT 2601, Australia}
\def\ASTAR{Quantum Innovation Centre (Q.InC), Agency for Science Technology and Research (A*STAR), 2 Fusionopolis Way, Innovis 08-03, Singapore 138634, Singapore}
\def\UEC{Graduate School of Informatics and Engineering, The University of Electro-Communications, 1-5-1 Chofugaoka, Chofu-shi, Tokyo 182-8585, Japan}
\begin{document}

\title{Holevo Cram\'{e}r-Rao bound: How close can we get without entangling measurements?}

\author{Aritra Das \texorpdfstring{\orcidlink{0000-0001-7840-5292}}{}}
\email{Aritra.Das@anu.edu.au}
\affiliation{\ANU}
\author{Lorcán~O.~Conlon \texorpdfstring{\orcidlink{0000-0002-0921-5003}}{}}
\affiliation{\ASTAR}
\author{Jun Suzuki \texorpdfstring{\orcidlink{0000-0003-1975-6003}}{}}
\affiliation{\UEC}
\author{Simon K. Yung \texorpdfstring{\orcidlink{0009-0007-1509-3958}}{}}
\affiliation{\ANU}
\affiliation{\ASTAR}
\author{Ping K. Lam  \texorpdfstring{\orcidlink{0000-0002-4421-601X}}{}}
\affiliation{\ASTAR}
\affiliation{\ANU}
\author{Syed M. Assad \texorpdfstring{\orcidlink{0000-0002-5416-7098}}{}}
\affiliation{\ASTAR}
\affiliation{\ANU}

\date{September 25, 2025}

\begin{abstract}
In multi-parameter quantum metrology,
the resource of entanglement
can lead to an increase in efficiency
of the estimation process.
Entanglement can be used
in the state preparation stage,
or the measurement stage, or both,
to harness this advantage---here
we focus on the role of entangling measurements.
Specifically,
entangling or collective measurements
over multiple identical copies of a probe state
are known to be superior to measuring each probe individually,
but the extent of this improvement is an open problem.
It is also known that such entangling measurements,
though resource-intensive,
are required to attain the ultimate limits
in multi-parameter quantum metrology
and quantum information processing tasks.
In this work we investigate
the maximum precision improvement that
collective quantum measurements
can offer over individual measurements,
calling this the `collective quantum enhancement'.
We show that, whereas the maximum enhancement
can, in principle, be a factor of~\texorpdfstring{$n$}{n} for estimating~\texorpdfstring{$n$}{n} parameters,
this bound is not tight for large~\texorpdfstring{$n$}{n}.
Instead, our results prove an enhancement
linear in dimension of the qudit probe is possible
using collective measurements
and lead us to conjecture that
this is the maximum collective quantum enhancement
in any local estimation scenario.
\end{abstract}

\maketitle

\section{Introduction}

Over half-a-century of advances
in quantum metrology~\cite{Giovannetti2006,Giovannetti2011,Szczykulska2016}
has vastly improved our ability
to measure, sense, image, and estimate with enhanced precision~\cite{Caves1981,Giovannetti2001,Dorner2009,Kacprowicz2010,Yonezawa2012,Tsang2016}.
Of significant interest is
the multi-parameter estimation scenario~\cite{Paris2004,Hayashi2005,Paris2009,Szczykulska2016,ABGG20},
where two hall-mark quantum effects manifest
themselves, playing opposing roles.
On the one hand,
incompatibility
between the unknown parameters
of a quantum system~\cite{Yuen1973,Helstrom1974,
Hayashi1999,Sidhu2021}
hinders their simultaneous estimation
from a single copy of an unknown state~\cite{Helstrom1969,Szczykulska2016,OptiMeasRev2cite1,OptiMeasRev2cite2}.
On the other hand,
given multiple identical copies of the state,
an entangling measurement on all the copies,
called a collective (or joint) measurement~\cite{Lorcan21},
can extract more information about the parameters
than any measurement where
the copies are measured individually~\cite{Massar1995,Gill2000,Mansouri2022}.
As individual and separable measurements~\cite{Zhou2020}
can be recovered as special cases of
collective measurements,
it is clear that the latter can only lead
to precision enhancements in estimation tasks~\cite{Conlon2023,Mansouri2022},
but the extent of this improvement is a major open problem~\cite{Gill2000,Ballester2004}.
In this work, we study \emph{the maximum enhancement
collective measurements stand to offer
over individual measurements},
specifically in the context of parameter estimation and state tomography.

Despite their advantages,
collective measurements
are challenging to implement in any real
estimation scenario
and experimental demonstrations are few and far between~\cite{Hou2018,Yuan2020,Mansouri2022,Conlon2023,Conlon2023c}.
Resultantly, the ratio between the
optimal precisions attainable via
collective versus individual measurement
serves as a useful quantifier
of both the quantum advantage offered
by collective measurements,
and the utility
of performing complicated entangling measurements~\cite{PPDG25}
and expending vast amounts of resources.
If this ratio is small,
then there is not much advantage
to be gained from entangling measurements.
But even if the ratio is large,
our ability to perform the requisite measurements
might be limited, meaning that the
collective performance is just an overly optimistic goal
that is far from being achievable.

In local estimation theory,
where unknown parameters
are assumed to be close
to known true values,
collective measurements
on identical copies of a separable state
do
not offer any advantage
for estimating a single parameter~\cite{Giovannetti2006}
or multiple parameters of a pure state~\cite{Matsumoto2002}.
Beyond this, except for some simple cases,
not much is known about the optimal individual
or collective measurement strategies
or their performance relative to each other~\cite{OptiMeasRev2cite1,OptiMeasRev2cite2}.
One reason for this is that the analytic evaluation
of the optimal performance of either class of
measurements is notoriously difficult.
In fact, instead of finding the optimal measurements,
it is easier (and more common)
to evaluate bounds on their precision.
The most widely-used precision bounds
for local estimation
are quantum generalisations of
the classical Cram\'er-Rao bound
(CRB),
called quantum CRBs (QCRBs)~\cite{Helstrom1967,Helstrom1968,Yuen1973,Holevo2011,Nagaoka2005b}.
These include the quantum Fisher information
(QFI)-based CRBs~\cite{Helstrom1967,Helstrom1968,Yuen1973},
the Holevo CRB (HCRB)~\cite{Holevo2011},
the Nagaoka-Hayashi CRB (NHCRB)~\cite{Nagaoka2005a,Nagaoka2005b},
the Gill-Massar CRB (GMCRB)~\cite{Gill2000},
and the most informative CRB (MICRB)~\cite{Nagaoka2005b,Hayashi1997,HayashiOuyang2023}.

In general, QCRBs
are not always attainable,
especially in the multi-parameter setting~\cite{Szczykulska2016},
and the exploration of criteria
for their attainability~\cite{HayashiOuyang2023,ZhuThesis}
is an active albeit challenging area
of research~\cite{Conlon2025}. That said,
in the collective measurement setting,
the HCRB is known to be
attainable in the asymptotic limit
by performing collective measurements
on a large number of identical copies of
the unknown
state~\cite{Kahn2009,Yamagata2013,Yang2019,Massar1995}.
Additionally, the HCRB
can be computed efficiently
through a semi-definite program~\cite{Albarelli2019},
making it amenable to both numerical
and analytical techniques~\cite{ZhuThesis}.
In contrast,
the attainable bound for the individual measurement setting,
given by the MICRB~\cite{Nagaoka2005b},
requires a conic program~\cite{Hayashi1997,HayashiOuyang2023} that
is challenging to compute
even numerically. Further,
analytical solutions to the MICRB,
as reformulated in Ref.~\cite{HayashiOuyang2023},
are only known for either two-level systems
or single-parameter problems~\cite{HayashiOuyang2023}.
For exploring multi-parameter estimation
from arbitrary finite-dimensional mixed quantum states,
which is the problem we address in this work,
these limitations render the MICRB
a computationally-intractable choice for gauging individual precision.

As a result, other
individual-measurement precision bounds
such as the NHCRB~\cite{Nagaoka2005a,Nagaoka2005b,Lorcan21}
and the GMCRB~\cite{Gill2000,ZhuThesis},
which are more tractable computationally
but less tight,
are frequently used as
substitutes for the MICRB~\cite{Lorcan21,ZhuThesis}.
In particular,
the NHCRB
is a well-suited candidate to quantify
individual measurement precision
in multi-parameter qudit estimation.
This is because of two reasons.
First, the NHCRB is more
analytically-tractable than the MICRB,
because it can be efficiently computed
through a semi-definite program
that scales reasonably with system dimension
and number of parameters~\cite{Lorcan21};
this has, for instance, led to analytical
lower and upper bounds to the NHCRB~\cite{Suzuki2023}.
Second, the NHCRB is provably
attainable in many cases of
interest~\cite{Jun2024Qest,HayashiOuyang2023,Lorcan21},
including two-level systems~\cite{Nagaoka2005a,Nagaoka2005b},
and despite this attainability
not extending to higher dimensions,
the NHCRB is still very close to the tight MICRB
(with a gap of less than~5\% reported
for random qudit models
up to dimension~17,
see Fig.~6 of Ref.~\cite{HayashiOuyang2023}),
and serves as a good approximation to it.

Armed with these bounds,
we study how far off
the collective-optimal precision
can be from the individual-optimal one
by looking at their ratio.
Specifically, by investigating
the maximum ratio between the NHCRB
and the HCRB,
we identify situations where
collective quantum measurements
are the most advantageous.
Because the NHCRB is close
to the MICRB,
the ratio we study should be close
to the attainable maximum collective enhancement.
The majority of our results
concern the NHCRB-to-HCRB ratio,
which we represent using the symbol~$\rat$
and refer to as \textit{the ratio}.
To distinguish from this
the attainable collective enhancement,
given by the ratio between the MICRB
and the HCRB, we refer to the latter
ratio as
\textit{the true ratio}
and use the symbol~$\ratMI$.

Our first result
is preliminary and
shows
that the ratio of precisions~$\rat$
is at most equal to
the number of unknown parameters
being estimated,~$n$. This
result can be intuitively
expected and
agrees with empirical data
for estimating a few parameters
(blue line for~$n=1,2,3$ in Fig.~\ref{fig:summarymainresult})
but deviates with increasing~$n$.
To find a tighter bound
for large numbers of parameters,
we focus on state tomography, where
the number of parameters is maximal,~$n=\nmax$.
For the qubit tomography case,
we extrapolate existing results
to find a decreasing trend
of the ratio~$\ratMI$ with purity~\cite{BacuiLi}.
Motivated by this,
we propose a model of estimating
the coefficients of the
generalised Gell-Mann matrices
(GMMs)~\cite{GellMann1962},
which extend the Pauli matrices
to higher dimensions,
in mixed~$d$-dimensional qudit states~\cite{Watanabe2011}.
This ``linear GMM model''
is symmetric enough to admit
analytical results in the
full-parameter case~$(n=\nmax)$
for both~$\rat$ and~$\ratMI$.
Further, this model is
equivalent to the problem of tomography
in arbitrary orthonormal basis (ONB)
and is therefore representative of
a large class of full parameter models.
Then, via semi-definite programming
arguments, we extend our results
to tomography in non-orthonormal bases
and to the~$n<\nmax$ case.

A summary of our main
analytical and numerical results comparing
the collective-
and individual-optimal precisions
for local estimation from smooth models
on~$d$-dimensional qudits now
follows:
\begin{itemize}
    \item for any model comprising~$n$ parameters,
    the ratio~$\rat$ is at most~$n$ (blue line in Fig.~\ref{fig:summarymainresult}),
    (proved in Sec.~\ref{subsec:ration}),
    \item for ONB tomography of the maximally-mixed state,
    the ratio~$\rat$ is exactly~$d+1$ (green line in Fig.~\ref{fig:summarymainresult}),
    (proved in Sec.~\ref{subsec:OdRatioMaxMixState}),
    and the true ratio~$\ratMI$ is exactly~$d+1$ (proved in Sec.~\ref{sec:trueratGMMmodel}),
    \item
    for ONB tomography of any state,
    the maximum ratio~$\rat$ is upper-bounded by~$d+2$
    (proved in Sec.~\ref{subsec:ArbitStates})
    and the maximum true ratio~$\ratMI$ is upper-bounded by~$d+2$
    (proved in Sec.~\ref{sec:trueratGMMmodel}),
    \item for tomography of the maximally-mixed state
    in any non-orthogonal basis, the ratio~$\rat$ is upper-bounded
    by~$d+1$ (proved in Appendix~\ref{sec:arbitweight}),
    \item for tomography of any state
    in any non-orthogonal basis,
    the ratio~$\rat$ is upper-bounded by~$d+2$
    (numerical result
    in Appendix~\ref{sec:arbitweight}),
    \item for estimating any number of GMM-coefficients
    of the maximally-mixed state,
    the ratio~$\rat$ is at most~$d+1$ (proved in Sec.~\ref{sec:EstimatingFewGGMMs}),
    \item for ONB tomography of any state,
    the maximum ratio~$\rat$ at fixed (known) purity
    decreases with purity and is at most~$d+1$
    (numerical result in Sec.~\ref{sec:numericalresults}),
    \item for any model comprising~$n$ parameters,
    the ratio~$\rat$ is upper-bounded by~$\min(n, d+1)$ (red line in Fig.~\ref{fig:summarymainresult}), (conjecture).
\end{itemize}

Notably, the~$d+2$ bounds
in the third and fifth points above
are loose and based on numerical evidence,
we expect the attainable bound therein to be~$d+1$.
The rest of our paper is structured as follows.
In Sec.~\ref{sec:Background}
we introduce and define precision bounds
for the individual and the collective measurement
scenarios and
review relevant background on them.
In Sec.~\ref{sec:npjqiresults},
we present our results,
formalising the key quantities~$\rat$ and~$\ratMI$
in Sec.~\ref{subsec:prelims},
and presenting analytical and numerical results
in Secs.~\ref{subsec:analresults} and~\ref{sec:numericalresults},
respectively.
We follow up with a discussion
of our results in Sec.~\ref{sec:Disc}.
Finally, our methodology is presented in
Sec.~\ref{sec:Methods},
whilst deferring mathematical proofs
to Appendices~\ref{sec:appB}--\ref{sec:appMICRB}.

\section{Background: Precision Bounds for Parameter Estimation}
\label{sec:Background}

In this section, we present a brief recap on
quantum parameter estimation
and introduce
precision bounds
for the collective and the individual
measurement scenarios.
The general recipe
to estimate~$n$ parameters~$\theta\coloneqq\{\theta_j\}_{j\in [n]} \in \Theta \subseteq \mathbb{R}^n $
(where we define~$[n]\coloneqq \{1, \dots, n\}$
and denote by~$\Theta$ the set of all possible parameter values)
of a quantum state~$\rho_\theta$
belonging to a~$d$-dimensional Hilbert
space~$\mathcal{H}_d$
involves two steps.
First, one performs quantum measurements,
generally positive operator-valued measures (POVMs)~$\{\Pi_l\}_{l\in [m]}$
with~$m$ outcomes, on~$\rho_\theta$.
Second, a classical estimator operator~$\hat{\theta}_{jl}$
is constructed that assigns an estimated value
to~$\theta_j$ for each measurement outcome~$l\in[m]$,
which occurs with probability~$p_l \coloneqq \qtrace (\rho_\theta \Pi_l)$.
Here~$\qtrace$ (in serif font) denotes
tracing over the quantum system.

\begin{figure}[hbtp]
    \centering
    \includegraphics[width=0.85\columnwidth]{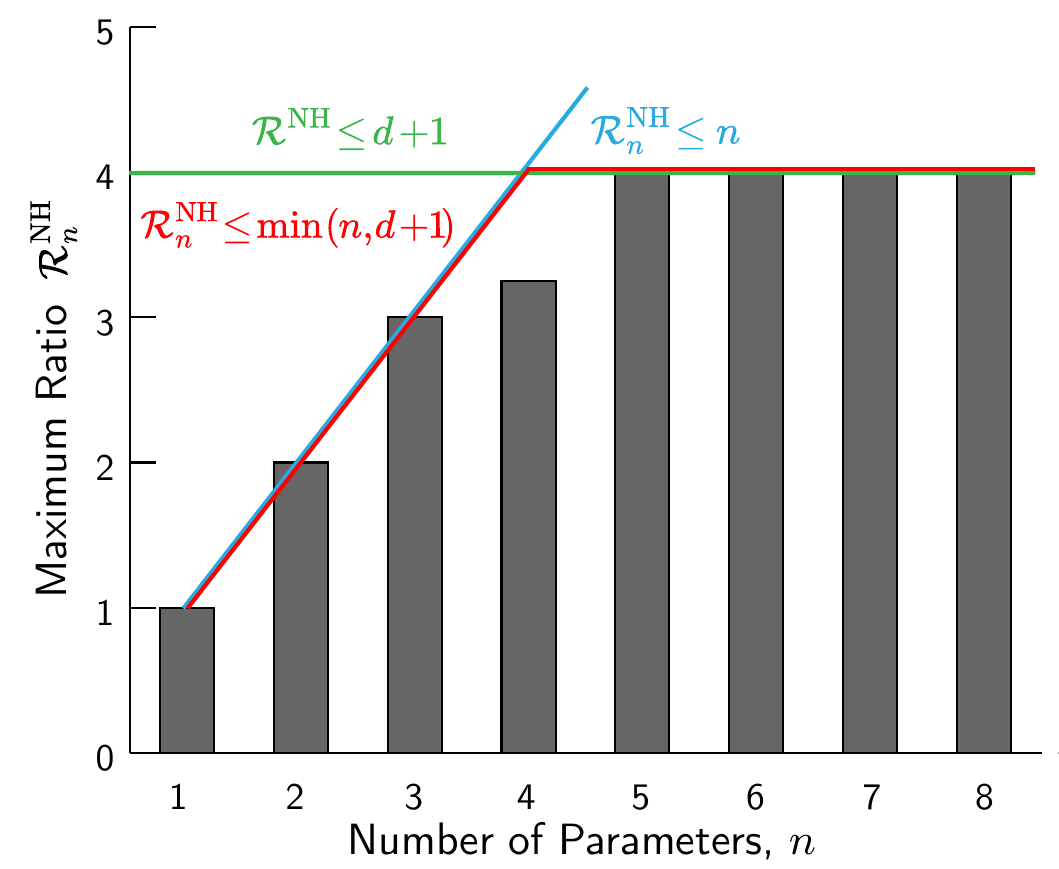}
    \caption{Summary of our main results on
    the maximum ratio quantities~$\rat_n$~\&~$\rat$.
    Our~$\rat_n \leq n$ (blue),~$\rat\leq d+1$ (green)
    and~$\rat_n\leq\min(n,d+1)$ (red) bounds are plotted
    against numerically and analytically found
    maximum collective enhancement values (bar chart)
    for~$n$-parameter estimation
    from qutrits~($d=3$).}
    \label{fig:summarymainresult}
\end{figure}

The performance of the estimator
is quantified via its
mean squared error (MSE) matrix
\begin{equation}
    (V_\theta)_{jk} \coloneqq \sum_{l\in[m]} (\hat{\theta}_{jl} - \theta_j)(\hat{\theta}_{kl} - \theta_k) p_l \, ,
\end{equation}
the trace of which gives the
total average squared deviation~$\ctrace (V_\theta)  = \sum_{l,j} (\hat{\theta}_{jl} - \theta_j)^2 p_l$.
Here~$\ctrace$ (in sans serif font)
denotes tracing over the classical
or parameter indices.
In this work, we focus on
the local setting,
where the parameters of interest
are close to
their true values,~$\theta^* \coloneqq \{ \theta_j^*\}_{j\in[n]}$,
i.e.,~$\theta\approx\theta^*$.
For locally-unbiased (LUB) estimators,
which have zero bias at the true parameter
values,~$V_\theta$
is equivalent to the covariance matrix
of parameter estimates
and~$\ctrace (V_\theta)$ is simply the
sum of variances of each parameter.

Precision bounds lower-bound
the uncertainties
in estimating multiple (possibly) incompatible parameters.
In this work, we focus on precision bounds on~$\ctrace (V_\theta)$;
the classical CRB yields a lower bound to this via
\begin{equation}
\label{eq:CFICRB}
    V_\theta \succcurlyeq J^{-1} \implies \ctrace (V_\theta)\geq \ctrace (J^{-1}) \, ,
\end{equation}
where~$A\succcurlyeq B$ denotes positive semi-definiteness
of~$A-B$, and~$J\equiv J(\rho_\theta, \{\Pi_l\}_{l\in[m]})$ is the classical Fisher information (CFI) matrix.
The CFI
(defined later in Eq.~\eqref{eq:QFImatrix11})
is best understood as a measure
on the parameter space~$\Theta \subseteq \mathbb{R}^n$
of the local sensitivity of measurements~$\{\Pi_l\}_{l\in[m]}$
towards each~$\theta_j$ when measuring state~$\rho_\theta$.
Minimising~$\ctrace (J^{-1})$
in Eq.~\eqref{eq:CFICRB}
over all possible measurements~$\{\Pi_l\}_{l\in[m]}$
yields the tight
MICRB~\cite{Nagaoka2005b,Hayashi1997,HayashiOuyang2023},
\begin{equation}
\label{eq:CMIdef}
    \MI \coloneqq \min_{\{\Pi_l\}_{l\in[m]}} \ctrace (J^{-1}) \, ,
\end{equation}
stipulating the ultimate precision attainable
via individual measurements.
A recent reformulation of the MICRB~\cite{HayashiOuyang2023}
recasts Eq.~\eqref{eq:CMIdef} into a conic program
(see Eq.~\eqref{eq:CMIdef2} for definition)
and presents SDP lower bounds to it.
However, neither closed-form expressions
nor efficient numerical techniques
are known for evaluating the conic program
(the SDP approximation's complexity
scales as~$d^{12}$ for qudit tomography)
and analytic solutions are known
only for qubit models,
single-parameter problems,
and Gaussian models~\cite{HayashiOuyang2023}.

A different precision bound on~$\ctrace(V_\theta)$
for the separable-measurement case,
developed by Nagaoka~\cite{Nagaoka2005a}
and Hayashi~\cite{Hayashi1999},
is the NHCRB,
\begin{equation}
\label{eq:HNCRBdefn}
    \begin{aligned}
        \CNH \coloneqq &\min_{\mathbb{L}, \mathbb{X}}  \Big \{ \mathbb{T}\mathrm{r} [ \mathbb{S}_\theta \mathbb{L} ] \; \vert \;
        \mathbb{L} \succcurlyeq \mathbb{X} \mathbb{X}^\top \, , \\
        &\mathbb{L}_{jk} = \mathbb{L}_{kj} \; \mathrm{Hermitian} \,  \Big \} -\ctrace \left( \theta \theta^\top\right) .
    \end{aligned}
\end{equation}
Here~$\mathbb{X}\coloneqq \{X_1, \dots, X_n\}^\top$
are the Hermitian LUB operators
that satisfy (abbreviating~$\frac{\partial}{\partial \theta_j}$ as~$\partial_j$)
\begin{equation}
\label{eq:origunbiaseq}
     \qtrace(\rho_\theta X_j) = \theta_j \quad \& \quad \qtrace(\partial_j\rho_\theta X_k) = \delta_{jk} \, ,
\end{equation}
and~$\mathbb{S}_\theta = \mathds{1}_n\otimes \rho_\theta$,
$\mathbb{T}\mathrm{r}$ denotes trace over both
classical and quantum
subsystems,~$(\,\cdot\,)^\top$ denotes transpose with respect to
the classical (parameter) index,~$\mathds{1}_n$
denotes the $n$-dimensional identity matrix,
and blackboard fonts represent classical-quantum matrices.
The minimisation in Eq.~\eqref{eq:HNCRBdefn} is
a conic optimisation over the extended Hilbert space~$\mathbb{H} := \mathbb{C}^n\otimes \mathcal{H}_d$,
combining parameter space~$\mathbb{C}^n$
and Hilbert space~$\mathcal{H}_d$ of the qudit.
The other optimisation variable~$\mathbb{L}$
is a positive operator on~$\mathbb{H}$
that is also symmetric under
a partial transpose with respect to the first subspace~$\mathbb{C}^n$,
so as to reduce to valid covariance matrices on~$\mathbb{C}^n$
under~$\qtrace [ \mathbb{S}_\theta \mathbb{L}]$~\cite{Suzuki2023, HayashiOuyang2023}.

Although
the NHCRB is not always tight~($\CNH\leq \MI$),
it is an analytically-tractable
lower bound to~$\MI$
that is often provably tight~($\CNH=\MI$)~\cite{Jun2024Qest,Lorcan21},
and always efficiently-computable
(the SDP's complexity scales as~$d^{4.5}$
for qudit tomography)~\cite{Lorcan21}.
Further, Ref.~\cite{HayashiOuyang2023}
reformulated the tight bound and demonstrated
the NHCRB to be a good approximation to
the tight bound, with a gap of less than~5\%,
for large qudit dimensions.
The GMCRB, another
separable-measurement precision bound,
is defined as~$\CGM \coloneqq 
\left ( \ctrace[J_\mathrm{SLD}^{-1/2}]\right )^2/(d-1)$~\cite{Gill2000},
where~$J_\mathrm{SLD}$ is the
symmetric logarithmic derivative (SLD)
QFI
(see Appendix~\ref{sec:appSLDRLD} for definition).
The GMCRB
is generally not tight for~$d$-dimensional states,
and for~$d>2$ and~$n\leq d-1$, the GMCRB
is known to be weaker than the SLD CRB~\cite{Gill2000}
(see Appendix~\ref{sec:appSLDRLD} for definition).
Nonetheless, the GMCRB is analytically-tractable
and, for the problem of
tomography in orthonormal bases,
has been proven
to be at most a factor of~2 away from
the tight bound~\cite{ZhuThesis}.

The HCRB is a collective-measurement
precision bound on~$\ctrace(V_\theta)$,
defined as
\begin{align}
    \CH \coloneqq &\min_\mathbb{X} \left \{ \ctrace \left ( \mathbb{Z}_\theta [\mathbb{X}]\right ) + \left \Vert \Im \mathbb{Z}_\theta [ \mathbb{X} ]  \right \Vert_1 \right \} \nonumber \\
    & \quad \quad \quad \quad \quad \quad  -\ctrace \left( \theta \theta^\top\right)  \, ,  
 \nonumber \\
    \mathbb{Z}_\theta[\mathbb{X}]_{jk} \coloneqq &\qtrace ( \rho_\theta X_j X_k ) \, ,
    \label{eq:Holevodefn1}
\end{align}
where~$\Vert X \Vert_1 \coloneqq \Tr (\sqrt{X^\dagger X})$
denotes the trace norm.
An equivalent expression for~$\CH$,
written in a similar form as Eq.~\eqref{eq:HNCRBdefn},
is
\begin{equation}
\begin{gathered}
\label{eq:Holevodefn2}
    {\CH} {\coloneqq}  \min_{\mathbb{L}, \mathbb{X}}
    \Big{\{} \bigtrace [ \mathbb{S}_\theta \mathbb{L}] \, \vert \,
    {\qtrace} [ \mathbb{S}_\theta \mathbb{L} ] \; {\mathrm{{real}, {symmetric,}}} \,   \\
     \qtrace [ \mathbb{S}_\theta \mathbb{L} ] \succcurlyeq \qtrace [ \mathbb{S}_\theta \mathbb{X} \mathbb{X}^\top ] \Big{\}}  -\ctrace \left( \theta \theta^\top\right) \, .
\end{gathered}
\end{equation}
Note that the minimisations
in Eqs.~\eqref{eq:HNCRBdefn},~\eqref{eq:Holevodefn1}
and~\eqref{eq:Holevodefn2}
have no explicit closed-form solution
for general mixed states~$\rho_\theta$~\cite{Jun2016}
and are typically
evaluated numerically via
SDPs~\cite{Albarelli2019,Lorcan21}.

Besides $\CH \leq \CNH$,
the following ordering
between the various precision bounds
is known
\begin{equation}
\label{eq:orderingprecisionbounds}
   {\max} \left ( {\SLD}, {\RLD} \right ) \leq {\CH} \leq {\CNH}  \leq {\MI} \, .
\end{equation}
Here~$\SLD$ and~$\RLD$ are, respectively,
the SLD CRB
and the right-logarithmic derivative (RLD) CRB
(see Appendix~\ref{sec:appSLDRLD} for definitions).
By definition,~$\MI$
is the tightest precision bound for separable measurements,
and hence is greater than or equal to
all four other CRBs in Eq.~\eqref{eq:orderingprecisionbounds}.
We know that all three inequalities in Eq.~\eqref{eq:orderingprecisionbounds}
are saturated
for single-parameter estimation~\cite{Matsumoto2002}.
Moreover,~$\CNH = \CH$ for estimating any number of parameters
from pure states~\cite{Matsumoto2002}.
On the other hand, if the single-copy NHCRB and the HCRB are
unequal, this gap persists between the finite-copy NHCRB
and the HCRB, shrinking asymptotically
with the number of copies~\cite{LorcanGap}.
As for the GMCRB, its position
in Eq.~\eqref{eq:orderingprecisionbounds}
depends on model properties:
for~$d=2$, we have~$\SLD \leq \CGM$
but for~$d>2$ and~$n\leq d-1$, we have~$\CGM \leq \SLD$,
whereas for~$d>2$ and~$n>d-1$,
no ordering can be specified.

Beyond ordering,
some ratio relationships
between the precision bounds
are known.
The HCRB is known
to be at most twice the SLD CRB,
i.e.,~$\SLD \leq \CH \leq 2 \SLD$~\cite{Carollo2019,TAD20}
so that, up to a factor of~2,
the SLD CRB also
quantifies collective precision.
For the problem of tomography
in an orthonormal basis, the
relation~$\CGM \leq \MI \leq 2 \CGM$ was established
in Ref.~\cite{ZhuThesis}.
At the level of the Fisher information,
the Gill-Massar trace~$\ctrace[J^{-1}_\mathrm{SLD} J] \leq d-1$
reduces to a ratio of the QFI~$J_\mathrm{SLD}$
and the CFI~$J$ when they are proportional:~$J \propto J_\mathrm{SLD}$.
This is the case for Fisher symmetric measurements~\cite{FS16}
that attain a CFI such that~$J_\mathrm{SLD} = (d+1) J$,
and thus~$\MI/\SLD=d+1$,
for tomography of the maximally-mixed state~\cite{Zhu1}.
However, Fisher symmetric measurements exist
only for maximally-mixed states and pure states~\cite{FS16,Zhu1},
so the ratio for general mixed states remains undecided.
The recent reference~\cite{Candeloro2024}
analysed the
impact of dimensionality on precision
and parameter incompatibility~\cite{Compatibility16}
by defining the normalised gap~$\Delta = (\CH-\SLD)/\SLD$,
which satisfies~$0\leq \Delta \leq 1$.
By considering the estimation
of~$n=2$ and~$3$ parameters
encoded unitarily onto~$d$-dimensional states,
the authors showed that asymptotic incompatibility
can vanish altogether when~$d>n$,
making the gap~$\Delta=0$.
Surprisingly, a direct ratio relationship
between~$\MI$ and~$\CH$,
or even between~$\CNH$ and~$\CH$
is not known~\cite{BG21},
leading to a gap in our knowledge
of the potential quantum advantage
offered by collective measurements.

\section{Results}
\label{sec:npjqiresults}


\subsection{Collective Quantum Enhancement}
\label{subsec:prelims}

We first define a measure of collective
enhancement for
estimation tasks.
Whereas the truly attainable
ratio of collective and separable
variances is given by~$\MI/\CH$,
the analytic intractability of~$\MI$
renders this quantity beyond the reach
of currently available tools.
Instead, we consider the ratio
between the NHCRB and the HCRB,
which provides a lower bound to~$\MI/\CH$.
Specifically, in this work,
we shall analyse
the collective quantum enhancement
\begin{equation}
\label{eq:ratiodef}
    \rat \left [\{\rho_\theta \vert \, \theta \in \Theta \} \right ] \coloneqq \max_{\theta \in \Theta} \;  \frac{\CNH[\rho_\theta]}{\CH[\rho_\theta]} \, ,
\end{equation}
for a given quantum statistical
model~$\{\rho_\theta \vert \, \theta \in \Theta \}$
with~$n$ parameters for~$d$-dimensional qudit states,
where the maximum is over
all allowed parameter values for this model.
This quantity can be interpreted as
a measure of the maximum
quantum enhancement obtainable from using collective measurements
over separable measurements for this particular model.

A further maximisation over all
quantum models with the
same number of parameters,~$n$,
and for the same probe dimension,~$d$,
is possible,
\begin{equation}
\label{eq:defmaxration}
    \rat_{n} \coloneqq \max_{n\text{-parameter models}} \; \rat \left [\{\rho_\theta \vert \, \theta \in \Theta \} \right ] \, ,
\end{equation}
where we have suppressed the dependence
on~$d$ for conciseness.
A final maximisation over
all possible~$n$ for a given system dimension~$d$
($1\leq n \leq \nmax $) then leads to
\begin{equation}
\label{eq:defmaxratio}
    \rat\coloneqq \max_{1 \leq n \leq \nmax} \; \rat_n \, ,
\end{equation}
corresponding to
the ultimate collective quantum enhancement
in precision
for this system
dimension~\cite{Szczykulska2016,Conlon2023b}.
Here~$\nmax$ represents the maximum possible number
of independent parameters
and equals~$2(d-1)$ for pure qudits
and~$d^2-1$ for mixed qudits.
In this work we consider general mixed states
and, thus, define~$\nmax\coloneqq d^2-1$.

Physically,
Eqs.~\eqref{eq:defmaxration}
\&~\eqref{eq:defmaxratio}
define model-free quantities that capture
the maximum potential advantage
of collective measurements across various quantum models,
comparing their general utility
beyond particular estimation scenarios.
Practically, these two quantities,~$\rat_n$
and~$\rat$, offer insights into
the behaviour of
the maximum utility
of collective measurements
from two opposing extremes
of the number of parameters.
As depicted in Fig.~\ref{fig:summarymainresult},
analysis using~$\rat_n$
reveals a maximum utility linear in~$n$
for small~$n$ whereas
analysis using~$\rat$
reveals a plateauing utility,
fixed by~$d$ instead of~$n$,
for large~$n$.
In the following we
abbreviate the model
representation~$\{\rho_\theta \vert \, {\theta \in \Theta} \}$
to~$\{\rho_\theta\}$
while specifying the model explicitly.

The NHCRB is generally close
to the MICRB~\cite{HayashiOuyang2023},
but whether~$\rat_n$ (or~$\rat$) corresponds to
the true maximum collective enhancement
depends on the NHCRB's attainability
for the model maximising~$\MI/\CH$
in the setting of Eq.~\eqref{eq:defmaxration}
(Eq.~\eqref{eq:defmaxratio}).
The non-attainability
of the NHCRB in specific
cases~\cite{HayashiOuyang2023}
means that the quantities~$\rat[\{\rho_\theta\}]$,
$\rat_n$ and~$\rat$
could be smaller than
the corresponding true
maximum collective enhancements,
denoted~$\ratMI[\{\rho_\theta\}]$,~$\ratMI_n$
and~$\ratMI$. These are defined
similarly to Eqs.~\eqref{eq:ratiodef}---\eqref{eq:defmaxratio}
through~\mbox{$\ratMI[\{\rho_\theta\}] \coloneqq \max_{\theta \in \Theta} \MI[\rho_\theta]/\CH[\rho_\theta]$},
$\ratMI_n \coloneqq \max_{n\text{-param. models}} \ratMI[\{\rho_\theta\}]$
and
\begin{equation}
\label{eq:ratlessactualrat}
     \ratMI \coloneqq 
    \max_{1 \leq n \leq \nmax} \ratMI_n \, .
\end{equation}
The same construction of maximum ratio quantities
applies to other choices of separable
measurement bounds,
leading to~$\ratGM[\{\rho_\theta\}]$ and~$\ratGM_n$
for the GMCRB, studied later
and depicted in Fig.~\ref{fig:GMNHratioComp}.

The quantities~$\rat$ and~$\ratMI$ can only increase with~$d$.
This is because of two reasons:
(i) higher dimensions allow more incompatible parameters
to estimate, and
(ii) a lower-dimensional model
can always be embedded in a higher dimension
keeping the ratio invariant.
Note that we only consider full-rank models
in defining Eqs.~\eqref{eq:ratiodef},~\eqref{eq:defmaxration},
\&~\eqref{eq:defmaxratio},
i.e., we assume~$\rho_\theta$ to be
non-singular or non-degenerate
for all~$\theta \in \Theta$.\footnote{%
Singular or degenerate
states can be approximated
arbitrarily well by non-degenerate
ones via regularisation~\cite{Watanabe2011}.}
Our definitions imply
that for any~$n$-parameter model
over~$d$-dimensional states,~$\rat[\{\rho_\theta\}] \leq \rat_n \leq \rat$
and~$\ratMI[\{\rho_\theta\}] \leq \ratMI_n \leq \ratMI$.

Below, in Sec.~\ref{subsec:analresults},
we present our analytical results.
First, in Sec.~\ref{subsubsec:NHCRBratios},
we present our results on
the ratio quantities~$\rat_n$ and~$\rat$
and on~$\rat[\{\rho_\theta\}]$ for the linear GMM model.
The GMM model results apply
to tomography in arbitrary ONB, and
we conjecture that
this model at~$\theta=0$
maximises Eq.~\eqref{eq:defmaxratio},
thereby attaining~$\rat$;
we prove the attainability
of the NHCRB at this point.
Then, in Sec.~\ref{subsec:analresultsothrbnds},
we present our analytical result
using the MICRB to upper-bound~$\ratMI[\{\rho_\theta\}]$
for the problem of ONB tomography
and compare this to
an upper bound derived from~$\ratGM_n$.
Finally,
in Sec.~\ref{sec:numericalresults},
we present
our numerical results including
the extension of~$\rat[\{\rho_\theta\}]$ to tomography in non-orthogonal bases
and to tomography under fixed probe purity,
as well as random sampling experiments
addressing~$\rat_n$ and~$\rat$
for generic qudit estimation models.

\subsection{Analytical Results}
\label{subsec:analresults}

\subsubsection{Results for NHCRB ratio}
\label{subsubsec:NHCRBratios}

In Sec.~\ref{subsec:ration},
we prove
a model-independent and attainable upper bound
on the collective precision enhancement.
Specifically,
by using an upper bound to~$\CNH$~\cite{Suzuki2023,HayashiOuyang2023}
and using~$\SLD$ to lower-bound~$\CH$,
we prove that for estimating~$n$ independent parameters
of a quantum state, the collective enhancement
is at most a factor of~$n$,
i.e.,~$\rat_n\leq n$.
We state this as
Theorem~\ref{th:ratngenstate}
(see Sec.~\ref{subsec:ration} for proof):

\begingroup
\begin{theorem}
For estimating~$n$ parameters of any
qudit state~$\rho_\theta$,
the collective enhancement~${\CNH[\rho_\theta]}/{\CH[\rho_\theta]}\, {\leq}\,{n}$,
i.e.,~\mbox{$\rat_n \leq n$}.
\end{theorem}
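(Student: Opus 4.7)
My plan is to sandwich the NHCRB between $\SLD$ and $n\,\SLD$: specifically, I would prove $\CNH[\rho_\theta] \leq n\,\SLD[\rho_\theta]$ and then invoke the established ordering $\SLD \leq \CH$ from Eq.~\eqref{eq:orderingprecisionbounds} to conclude that $\CNH/\CH \leq n$ pointwise in $\theta$. Taking the maximum over $\theta$ and over $n$-parameter models then yields $\rat_n \leq n$. Since the ratio is invariant under an affine shift of the parameter origin (both bounds transform identically under the $-\ctrace(\theta\theta^\top)$ term), I would assume $\theta = 0$ without loss of generality.

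The crucial ingredient is the upper bound $\CNH \leq n\,\SLD$, which I would establish by exhibiting an explicit feasible point of the SDP in Eq.~\eqref{eq:HNCRBdefn}. For $\mathbb{X}$ I would take the vector of optimal SLD estimators $X_j^\mathrm{SLD}$, which are Hermitian, locally unbiased, and satisfy $\sum_j \qtrace[\rho_0 (X_j^\mathrm{SLD})^2] = \SLD$ at $\theta = 0$ (since each is an expansion in the SLDs with coefficients given by $J_\mathrm{SLD}^{-1}$). For $\mathbb{L}$ I would propose the purely block-diagonal choice
\begin{equation}
\mathbb{L}_{jk} \coloneqq n\,\delta_{jk}\,(X_j^\mathrm{SLD})^2 \, ,
\end{equation}
which is manifestly Hermitian and symmetric in the classical indices, and whose objective $\bigtrace[\mathbb{S}_0 \mathbb{L}]$ evaluates directly to $n\,\SLD$.

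The remaining step—and the main technical point—is to verify the positivity constraint $\mathbb{L} \succcurlyeq \mathbb{X}\mathbb{X}^\top$. For any vector $v = \bigoplus_{j=1}^n v_j$ in the extended Hilbert space $\mathbb{H}$, I would expand
\begin{equation}
\langle v | (\mathbb{L} - \mathbb{X}\mathbb{X}^\top) | v \rangle = n \sum_{j=1}^n \|X_j^\mathrm{SLD} v_j\|^2 - \Bigl\|\sum_{j=1}^n X_j^\mathrm{SLD} v_j\Bigr\|^2 \, ,
\end{equation}
which is non-negative by the Cauchy-Schwarz inequality applied to the $n$ vectors $\{X_j^\mathrm{SLD} v_j\}$. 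The naïve diagonal choice $\mathbb{L}_{jj} = (X_j^\mathrm{SLD})^2$ (without the factor of $n$) would fail the positivity constraint because $\mathbb{X}\mathbb{X}^\top$ has non-trivial off-diagonal blocks $X_j^\mathrm{SLD} X_k^\mathrm{SLD}$; the multiplicative factor of $n$ is precisely what Cauchy-Schwarz requires to dominate these cross-terms, and it is this factor that propagates into the final linear-in-$n$ scaling of the bound. This also gives a transparent reason to expect the bound to become loose when $n > d+1$, where such diagonal ansätze are wasteful.
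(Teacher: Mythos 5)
Your proposal is correct, and it reaches the bound by a genuinely different route than the paper. The paper's proof does not construct a feasible point of the Nagaoka--Hayashi program directly; it invokes the analytic upper bound of Ref.~\cite{Suzuki2023}, $\CNH \leq \min_\mathbb{X}\{\ctrace(\mathbb{Z}_\theta[\mathbb{X}]) + \sum_{j,k}\Vert\rho_\theta[X_j,X_k]\Vert_1\}$, bounds each commutator term by $\tfrac12\qtrace[\rho_\theta(X_j^2+X_k^2)]$, and so arrives at $\CNH \leq n\min_\mathbb{X}\ctrace(\mathbb{Z}_\theta[\mathbb{X}]) = n\,\SLD \leq n\,\CH$. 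You instead exhibit the explicit feasible pair $\mathbb{X}=\mathbb{X}^{\mathrm{SLD}}$, $\mathbb{L}_{jk}=n\,\delta_{jk}(X_j^{\mathrm{SLD}})^2$ in Eq.~\eqref{eq:HNCRBdefn}; the positivity check via $\bigl\Vert\sum_j X_j v_j\bigr\Vert^2 \leq n\sum_j\Vert X_j v_j\Vert^2$ is sound, the objective evaluates to $n\,\ctrace[J_{\mathrm{SLD}}^{-1}]=n\,\SLD$, the block-symmetry and Hermiticity constraints hold trivially for a diagonal ansatz, and the reduction to $\theta^*=0$ is legitimate because all the bounds are translation-invariant under a shift of the parameter origin. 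Your route is more self-contained (no external lemma needed beyond $\SLD\leq\CH$) and makes transparent exactly where the factor $n$ enters, namely in dominating the off-diagonal blocks $X_jX_k$ of $\mathbb{X}\mathbb{X}^\top$ by a diagonal operator; the paper's route buys a physical reading of the same factor, attributing the NHCRB--HCRB gap to the non-commutativity term $\sum_{j,k}\Vert\rho_\theta[X_j,X_k]\Vert_1$ isolated in Ref.~\cite{Suzuki2023}, and reuses machinery already needed elsewhere in the paper. Both arguments prove the same intermediate inequality $\CNH\leq n\,\SLD$, so the final step and the conclusion $\rat_n\leq n$ coincide.
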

\endgroup

\noindent
The factor-of-$n$ smaller
attainable precision
intuitively aligns with
the approach of dividing the
multi-parameter problem into~$n$
single-parameter problems,
where the SLD-optimal separable measurement for each
parameter~$\theta_j$ individually
is performed on a fraction~$\nicefrac{1}{n}$
of the number of available state copies.\footnote{However,
this does not constitute a locally-unbiased multi-parameter
estimation strategy, which must assign locally-unbiased estimates
to all the parameters in each trial.
When performing the SLD-optimal measurement for~$\theta_j$,
the procedure for assigning locally-unbiased estimates
for~$\theta_k$ ($k\neq j$) is discussed in Ref.~\cite{Fujiwara1999}.}
Due to technical difficulties
in guaranteeing the local-unbiasedness
of parameter estimates within this approach~\cite{Fujiwara1999},
we instead use the analytic upper bound
to~$\CNH$ developed in Ref.~\cite{Suzuki2023}
(summarised in Appendix~\ref{sec:summaryJunTR})
to prove Theorem~\ref{th:ratngenstate}.

The upper bound in Theorem~\ref{th:ratngenstate}
can be a tight relation for
models with small~$n$.
For example, a ratio~$\rat_{n=2}=2$ is attained
for~$d=3$ by the model of estimating
the coefficients of~$\lambda_1$ \&~$\lambda_2$
(see Appendix~\ref{sec:appB} for definitions)
in the maximally-mixed qutrit state,
and a ratio~$\rat_{n=3}=3$ is attained
for~$d=2$ by the model of
estimating the three Pauli
coefficients of the maximally-mixed
qubit state~\cite{BacuiLi}.
We also expect the ratio~$\rat_n$
to increase with~$n$, because
having more parameters to estimate
can lead to higher incompatibility---this
implies~$\rat_n\leq \rat_{\nmax}$.
On the other hand,
at the~${n = \nmax= d^2-1}$ limit,
the upper bound from
Theorem~\ref{th:ratngenstate}
would imply
an enhancement~$\rat_{\nmax}\leq d^2-1$---quadratic
in~$d$---is attainable;
this is not the case,
as we subsequently establish.

In Sec.~\ref{subsec:model},
we introduce the `linear GMM model':
an~$\nmax$-parameter family of
$d$-dimensional qudit states
given by
\mbox{$\rho_\theta = \mathds{1}_d/d
+\sum_{j=1}^{\nmax} \theta_j \lambda_j$}~\cite{Watanabe2011}.
Here, the~$\nmax$ parameters
of interest,~$\{\theta_j\}_{j\in[\nmax]}$,
are the coefficients
of the GMMs~$\Lambda_d\coloneqq\{\lambda_j\}_{j\in [\nmax]}$.
The GMMs~$\Lambda_d$ are traceless,
Hermitian generalisations
of the qubit Pauli operators
(see Appendix~\ref{sec:appB})
and the parameterisation
is valid for any qudit state.
Estimating~$\{\theta_j\}_{j\in[\nmax]}$
thus corresponds to tomography
in the GMM basis,
which is an ONB
given that~$\qtrace(\lambda_j  \lambda_k) = \delta_{jk}$.
In fact,
the unweighted GMM model at any~$\rho_\theta$
is equivalent to tomography of~$\rho_\theta$
in any other ONB,
as we show using Lemma~\ref{lemma:orthtransmat}~in
Sec.~\ref{subsec:model}.
This means that the QCRBs
as well as ratios between them are invariant
to the particular choice of ONB
and the following results derived for
the GMM basis hold for any ONB.

For the GMM model,
we prove that the maximum collective enhancement~$\rat[\{\rho_\theta\}]$
is a factor linear in dimension~$d$,
as opposed to quadratic.
Specifically,
we show that the collective enhancement
is exactly~$d+1$ for estimating
from the maximally-mixed state,
$\rho_\mathrm{m} = \mathds{1}_d / d$,
in Theorem~\ref{th:ratdmms}
(see Sec.~\ref{subsec:OdRatioMaxMixState} for proof):

\begingroup
\begin{theorem}
For ONB tomography of the
maximally-mixed qudit state~$\rho_\mathrm{m}$, the collective enhancement~$\CNH[\rho_\mathrm{m}]/\CH[\rho_\mathrm{m}] = d+1$.
\end{theorem}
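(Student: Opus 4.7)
The plan is to compute $\CH[\rho_\mathrm{m}]$ and $\CNH[\rho_\mathrm{m}]$ separately in closed form and then divide, exploiting the $SU(d)$-covariance of the linear GMM model at the maximally mixed state. Setting $\rho = \rho_\mathrm{m} = \mathds{1}_d/d$ and $\theta=0$, the local-unbiasedness conditions $\qtrace(\rho X_j)=0$ and $\qtrace(\lambda_j X_k)=\delta_{jk}$ together with Hermiticity of $X_j$ uniquely force $X_j = \lambda_j$, since $\{\lambda_j\}$ is an orthonormal basis for the traceless Hermitian operators on $\mathcal{H}_d$. The SLD equation $\partial_j\rho = \tfrac{1}{2}(L_j\rho + \rho L_j)$ then gives $L_j = d\lambda_j$, yielding $J_\mathrm{SLD} = d\,\mathds{1}_{\nmax}$ and hence $\SLD = (d^2-1)/d$. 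Moreover, $\CH$ coincides with $\SLD$ here: the imaginary-part entries of the $Z$-matrix in Eq.~\eqref{eq:Holevodefn1} are proportional to $\qtrace(\rho_\mathrm{m}[\lambda_j,\lambda_k])$ and vanish because any commutator is traceless, so that $\CH[\rho_\mathrm{m}] = (d^2-1)/d$.

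For the NHCRB, the same unbiasedness argument fixes $X_j = \lambda_j$, so only $\mathbb{L}$ remains free in the SDP of Eq.~\eqref{eq:HNCRBdefn}. The $SU(d)$-covariance of the model, together with the symmetry and Hermiticity constraints on $\mathbb{L}$, allows me (via a Schur-type averaging over the group action) to restrict to the ansatz $\mathbb{L}_{jk} = \alpha\,\delta_{jk}\mathds{1} + \beta\, d_{jkl}\lambda_l$, where $d_{jkl}$ are the totally-symmetric GMM structure constants; antisymmetric $f_{jkl}\lambda_l$ terms are forbidden by $\mathbb{L}_{jk}=\mathbb{L}_{kj}$. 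The objective $\bigtrace(\mathbb{S}_\theta\mathbb{L}) = (d^2-1)\alpha$ depends only on $\alpha$ because $\qtrace(\rho_\mathrm{m}\lambda_l)=0$, so the SDP collapses to minimising $\alpha$ subject to $\mathbb{L}\succcurlyeq \mathbb{X}\mathbb{X}^\top$. Using the GMM product identity $\lambda_j\lambda_k = \tfrac{1}{d}\delta_{jk}\mathds{1} + \tfrac{1}{2}(d_{jkl}+if_{jkl})\lambda_l$, the anti-Hermitian-in-$(j,k)$ piece $-\tfrac{i}{2}f_{jkl}\lambda_l$ of $\mathbb{L}-\mathbb{X}\mathbb{X}^\top$ must be absorbed by the $\alpha\,\delta_{jk}\mathds{1}$ term, and computing its operator norm on $\mathbb{C}^{\nmax}\otimes\mathcal{H}_d$ should yield $\alpha_\mathrm{opt} = (d+1)/d$, hence $\CNH[\rho_\mathrm{m}] = (d+1)(d^2-1)/d$.

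The main obstacle lies in this last SDP step --- rigorously justifying the $SU(d)$-symmetric ansatz (ruling out higher-order invariant terms) and carefully evaluating the operator norm of the antisymmetric residual generated by the $f_{jkl}$ constants --- both of which require nontrivial Lie-algebraic bookkeeping. A cleaner alternative would be to bypass the SDP entirely: one can invoke the existence of a Fisher-symmetric POVM for $\rho_\mathrm{m}$ (known to exist in all dimensions, e.g.\ via SIC-POVMs where available or symmetric measurement constructions generally), which attains the CFI $J = \tfrac{d}{d+1}\mathds{1}_{\nmax}$ and hence $\MI[\rho_\mathrm{m}] = (d+1)(d^2-1)/d$. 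Combining $\CNH \leq \MI$ from Eq.~\eqref{eq:orderingprecisionbounds} with the NHCRB-attainability claim the authors announce at this symmetric point directly yields $\CNH[\rho_\mathrm{m}] = \MI[\rho_\mathrm{m}] = (d+1)(d^2-1)/d$. Dividing by $\CH[\rho_\mathrm{m}] = (d^2-1)/d$ then delivers the claimed ratio $d+1$.
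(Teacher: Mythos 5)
Your computation of $\CH[\rho_\mathrm{m}]$ is fine and essentially matches the paper: the local-unbiasedness conditions force $X_j=\lambda_j$, and $\mathbb{Z}_\theta[\mathbb{X}]=\tfrac{1}{d}\mathds{1}_{\nmax}$ has vanishing imaginary part, giving $\CH[\rho_\mathrm{m}]=\SLD=(d^2-1)/d$. The problem is the NHCRB side, where neither of your two routes actually establishes $\CNH[\rho_\mathrm{m}]=(d^2-1)(d+1)/d$. In your first route, the step you flag as the ``main obstacle'' is precisely the entire content of the paper's proof: the paper exhibits an explicit primal candidate $\mathbb{L}^*_{jk}=\tfrac{d+1}{d+2}(\{\lambda_j,\lambda_k\}+\delta_{jk}\mathds{1}_d)$ (which in your parameterisation is $\alpha=(d+1)/d$, $\beta=(d+1)/(d+2)$), proves feasibility $\mathbb{L}^*\succcurlyeq\mathbb{X}\mathbb{X}^\top$ via Gell-Mann identities (showing $(\mathbb{N}_1-\mathbb{N}_2)^2=\mathds{1}_{nd}$ so the eigenvalues are $\pm1$), and then certifies optimality by constructing a dual-feasible point of the SDP with matching objective value. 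Your sketch neither justifies the $SU(d)$-symmetric ansatz nor carries out the positivity/operator-norm computation that would pin down $\alpha_{\mathrm{opt}}$, and, crucially, a feasible ansatz alone only yields an upper bound on $\CNH$; some certificate of the matching lower bound (the paper's dual solution $y^*$) is indispensable.

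Your ``cleaner alternative'' has a genuine logical gap: it is circular. A Fisher-symmetric or SIC measurement attaining $J=\tfrac{d}{d+1}\mathds{1}_{\nmax}$ gives $\MI[\rho_\mathrm{m}]\leq\ctrace(J^{-1})=(d^2-1)(d+1)/d$ (and equality only after invoking the Gill--Massar inequality to lower-bound $\MI$, a step you omit). But since $\CNH\leq\MI$, knowing $\MI$ says nothing about $\CNH$ from below; the NHCRB is generally \emph{not} tight for $d>2$, as the paper stresses, so nothing rules out $\CNH[\rho_\mathrm{m}]<\MI[\rho_\mathrm{m}]$ a priori. The ``NHCRB-attainability claim the authors announce at this symmetric point'' is not an independent input you may invoke: in the paper it is a \emph{consequence} of first computing $\CNH[\rho_\mathrm{m}]=(d^2-1)(d+1)/d$ via SDP duality and then observing that the SIC-POVM classical bound coincides with it. (As a secondary point, existence of SIC POVMs in arbitrary $d$ is conjectural, so an unconditional version of that route would need, e.g., weighted 2-design measurements.) The missing idea, in short, is a lower-bound certificate for $\CNH[\rho_\mathrm{m}]$ — the dual-feasible solution of Lemma~\ref{lemma:dualsolutionNHCRB}, whose feasibility rests on the Gell-Mann identities of Appendix~\ref{sec:appIdentities} — without which the claimed equality, and hence the ratio $d+1$, is not established.
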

\endgroup

\noindent
The factor of~$d+1$ here previously
appeared in the relation~$J_\mathrm{SLD} = (d+1) J$
satisfied by Fisher SIC measurements for~$\rho_\mathrm{m}$~\cite{Zhu1}.
This is because for the particular
model considered in Theorem~\ref{th:ratdmms},
the NHCRB is tight (see below)
and the SLD CRB equals the HCRB.
Therefore, the ratio~$\CNH/\CH$
reduces to the ratio~$\MI/\SLD$, which equals
the proportionality constant between the
QFI~$J_\mathrm{SLD}$ and the CFI~$J$.
However, the true merit of Theorem~\ref{th:ratdmms}
lies in our novel method for its proof,
which directly extends
to upper-bound the collective enhancement
in several settings beyond Theorem~\ref{th:ratdmms}.
Specifically, this lets us
prove that the maximum enhancement~$\rat[\{\rho_\theta\}]$
is less than~$d+2$ for estimating
from arbitrary states in
Theorem~\ref{th:ratdarbstate}
(see Sec.~\ref{subsec:ArbitStates} for proof):
\begingroup
\begin{theorem}
    For ONB tomography of arbitrary~$d$-dimensional qudit state~$\rho_\theta$, the maximum collective enhancement~$\rat[\{\rho_\theta\}] 
    \leq d+2$. 
\end{theorem}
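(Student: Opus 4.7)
The plan is to extend the proof strategy of Theorem~\ref{th:ratdmms} beyond the maximally-mixed state, while using the orthonormal-basis invariance established in Sec.~\ref{subsec:model} to reduce arbitrary ONB tomography to the linear GMM model $\rho_\theta = \mathds{1}_d/d + \sum_{j} \theta_j \lambda_j$. Because $\rat[\{\rho_\theta\}]$ requires a maximum over all admissible parameter values $\theta$ (not only $\theta = 0$), the task is to show that no departure from the maximally-mixed point inflates the ratio by more than a single additional unit above the value $d+1$ already established in Theorem~\ref{th:ratdmms}. The strategy is to construct explicit bounds on $\CNH[\rho_\theta]$ from above and on $\CH[\rho_\theta]$ from below whose ratio stays at most $d+2$ uniformly in $\theta$.

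For the numerator, I would build a feasible point $(\mathbb{L}, \mathbb{X})$ of the NHCRB semi-definite program defined in Eq.~\eqref{eq:HNCRBdefn} by perturbing the optimal choice used for $\rho_\mathrm{m}$. A natural candidate is to take the LUB operators as $X_j = d\,\lambda_j$ plus a small $\theta$-dependent correction chosen so that the unbiasedness conditions in Eq.~\eqref{eq:origunbiaseq} remain satisfied at arbitrary $\theta$, and to take $\mathbb{L}$ as the rank-one block $\mathbb{X}\mathbb{X}^\top$ symmetrised by the minimal positive addition needed to satisfy both $\mathbb{L} \succcurlyeq \mathbb{X}\mathbb{X}^\top$ and the partial-transpose symmetry demanded by the SDP. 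Using $\qtrace(\lambda_j \lambda_k) = \delta_{jk}$ together with the multiplicative structure constants of the GMMs, the objective $\bigtrace[\mathbb{S}_\theta \mathbb{L}]$ can be evaluated in closed form as a $\theta$-linear expression whose constant term reproduces the bound from Theorem~\ref{th:ratdmms}.

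For the denominator, I would use the ordering in Eq.~\eqref{eq:orderingprecisionbounds} to invoke $\CH[\rho_\theta] \geq \SLD[\rho_\theta]$ and evaluate $\SLD$ explicitly for the GMM model, where the SLD QFI matrix admits a closed form. Dividing the two bounds and maximising the resulting ratio over $\theta$ subject to the physical constraint $\rho_\theta \succcurlyeq 0$ should produce the claimed bound of $d+2$: the unperturbed terms contribute the $d+1$ of Theorem~\ref{th:ratdmms}, and the $\theta$-dependent corrections in the NHCRB feasible point contribute an extra unit uniformly in $\theta$.

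The main obstacle will be controlling the interplay between the perturbative correction to $\mathbb{X}$ and the joint positivity-plus-symmetry constraint on $\mathbb{L}$. Feasibility of $\mathbb{L}$ in directions transverse to the unperturbed subspace must be maintained as $\theta$ varies, and any naive choice of correction introduces a loose multiplicative factor. It is precisely this looseness that is responsible for the gap between the proven bound $d+2$ and the sharp value $d+1$ conjectured on the basis of the numerics in Sec.~\ref{sec:numericalresults}; tightening the argument to recover $d+1$ would require a genuinely non-perturbative feasible point that tracks the full eigenstructure of $\rho_\theta$, which appears out of reach of the present SDP-duality tools.
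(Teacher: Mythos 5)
Your overall architecture matches the paper's: bound $\CNH[\rho_\theta]$ from above by exhibiting a feasible point of the SDP in Eq.~\eqref{eq:HNCRBdefn} related to the maximally-mixed optimum, bound $\CH[\rho_\theta]$ from below via Eq.~\eqref{eq:orderingprecisionbounds}-type reasoning, and maximise the quotient over $\theta$. But the execution you sketch contains concrete errors and a missing idea, and the missing idea is precisely what makes the paper's proof go through. First, your feasible point starts from the wrong operators: for the full-parameter ONB/GMM model the locally-unbiased operators are \emph{uniquely} $X_j=\lambda_j$ for every $\theta$ (this is shown in Sec.~\ref{subsec:model} from Eq.~\eqref{eq:unbiasednessnew}); your proposed base point $X_j=d\,\lambda_j$ is the SLD operator at $\rho_\mathrm{m}$ and violates $\qtrace(\partial_j\rho_\theta X_k)=\delta_{jk}$, and no ``small $\theta$-dependent correction'' exists or is needed, since the unbiasedness equations pin $\mathbb{X}$ down exactly. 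Second, your plan to perturb $\mathbb{L}$ and then fight the ``interplay between the perturbative correction to $\mathbb{X}$ and the joint positivity-plus-symmetry constraint'' is a phantom obstacle: because $\mathbb{X}$ is fixed and the NHCRB feasibility constraints ($\mathbb{L}_{jk}=\mathbb{L}_{kj}$ Hermitian, $\mathbb{L}\succcurlyeq\mathbb{X}\mathbb{X}^\top$) do not involve $\rho_\theta$ at all, the maximally-mixed-optimal $\mathbb{L}^*$ of Eq.~\eqref{eq:optLmat} is feasible \emph{verbatim} for every state. Moreover its objective is not ``$\theta$-linear'': by the GMM Casimir identity, $\bigtrace[\mathbb{S}_\theta\mathbb{L}^*]=(d^2-1)(d+1)/d$ independently of $\theta$, so the only $\theta$-dependence in the bound is the $-\ctrace(\theta\theta^\top)$ term in the definition, giving $\CNH[\rho_\theta]\leq d^2+d-1-\mathrm{P}(\rho_\theta)$ (Eq.~\eqref{eq:upperboundNHCRB}).

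Third, you misattribute the source of the extra unit above $d+1$. It does not come from ``corrections in the NHCRB feasible point'' (there are none); it comes from the denominator. Since $\mathbb{X}$ is unique, $\CH[\rho_\theta]\geq\bigtrace(\mathbb{S}_\theta\mathbb{X}\mathbb{X}^\top)-\sum_j{\theta_j^*}^2=(d^2-1)/d-\sum_j{\theta_j^*}^2=d-\mathrm{P}(\rho_\theta)$ (Eqs.~\eqref{eq:CHarbitstateineq}--\eqref{eq:lowerboundHCRB}); note this is obtained from the uniqueness of $\mathbb{X}$, not by inverting the SLD QFI, which has no simple closed form for generic full-rank $\rho_\theta$. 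Both bounds shed the same amount $\sum_j{\theta_j^*}^2$ as purity grows, and maximising $\bigl(d^2+d-1-\mathrm{P}\bigr)/\bigl(d-\mathrm{P}\bigr)$ over $\mathrm{P}\in[1/d,1]$ gives $d+2$ at $\mathrm{P}=1$; the looseness relative to the conjectured $d+1$ is due to the HCRB lower bound decreasing with purity while the true HCRB increases, not to any slack in a perturbative construction. As written, your proposal would stall at the construction of the feasible $(\mathbb{L},\mathbb{X})$ and would not produce the closed-form, state-independent objective value on which the final maximisation rests.
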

\endgroup
\noindent
Although we prove an upper bound
of~$d+2$,
based on numerical evidence
shown in Fig.~\ref{fig:estimateallarbitstate},
we expect the attainable bound to be~$d+1$
and expect~$\rat[\{\rho_\theta\}]$ to be maximised
at~$\theta=0$ over~$\Theta$.
Thus for the linear GMM model,
we propose~$\rat[\{\rho_\theta\}]\leq d+1$.
Further, for the maximally-mixed case,
we prove the SIC POVM
to be an optimal individual measurement.
This proves the NHCRB to be
attainable or tight in this case,
implying~$\MI/\CH=d+1$ as well
at this maximum point.

We then treat two extensions
of the linear GMM model.
The first is the weighted version,
where an arbitrary, full-rank,
parameter-independent weight
matrix~$W$ is included
in the cost function (Appendix~\ref{sec:arbitweight}).
Importantly,
weighted models are
equivalent to reparameterisations
of the unweighted model,
i.e., estimating parameters
that are not
coefficients
in any particular ONB~\cite{Lorcan21,Fujiwara1999,ABGG20}.
Our results in Appendix~\ref{sec:arbitweight}
prove that for the maximally-mixed state,~$\rho_\mathrm{m}$,
and for estimating any~$\nmax$
independent parameters,
the collective enhancement is at most~$d+1$.
We also numerically show (Fig.~\ref{fig:weightedtomo})
that the collective enhancement for
estimating from any state~$\rho_\theta\neq \rho_\mathrm{m}$
under a weight~$W$ is smaller than
the collective enhancement for
estimating from~$\rho_\mathrm{m}$
under the same weight~$W$.
This suggests the maximum collective enhancement~$\rat_{\nmax}$
for any full-parameter problem
to be at most~$d+1$,
i.e.,
\begin{equation}
      \rat_{\nmax} = \max_{\substack{\nmax\text{-parameter} 
      \\ \mathrm{models}}}
      \; \rat \left [\{\rho_\theta \} \right ] \leq d+1 \, ,
\end{equation}
and this upper bound is attained by the model
studied in Sec.~\ref{subsec:OdRatioMaxMixState}.

In Sec.~\ref{sec:EstimatingFewGGMMs},
we treat the extension of the linear
GMM model to the~${n< \nmax}$ case,
assuming the remaining~${\nmax-n}$
GMM coefficients to be zero.
In this case, we prove that for
estimating from
the maximally-mixed state,
the maximum collective enhancement
is~$d+1$ in
Theorem~\ref{th:estfewGMMs}
(Sec.~\ref{sec:EstimatingFewGGMMs}):

\begingroup
\begin{theorem}
For estimating fewer-than-$\nmax$ coefficients
of GMMs of the maximally-mixed qudit state~$\rho_\mathrm{m}$,
the collective enhancement~$\CNH[\rho_\mathrm{m}]/\CH[\rho_\mathrm{m}]\leq d+1$.
\end{theorem}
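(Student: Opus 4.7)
The plan is to bound the NHCRB-to-HCRB ratio by the tighter MICRB-to-HCRB ratio (using $\CNH \leq \MI$ from Eq.~\eqref{eq:orderingprecisionbounds}) and to evaluate both separately at the maximally-mixed state. For the HCRB I would exploit the symmetry of $\rho_\mathrm{m}$: for any Hermitian LUB operators, $\Im\mathbb{Z}_{jk} = \qtrace(\rho_\mathrm{m}[X_j,X_k])/(2i) = \qtrace([X_j,X_k])/(2id) = 0$ since the trace of a commutator vanishes. Thus $\CH[\rho_\mathrm{m}]$ reduces to minimising $\ctrace(\mathbb{Z}) = \sum_j \qtrace(X_j^2)/d$ subject to the LUB constraints. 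Expanding $X_j = \alpha_j\mathds{1}_d + \sum_l c_{jl}\lambda_l$ in the GMM basis, the LUB conditions at $\theta=0$ fix $\alpha_j=0$ and $c_{jk}=\delta_{jk}$ for $k\in[n]$, while leaving $c_{jl}$ free for $l>n$. Setting the free coefficients to zero minimises the objective and yields $\CH[\rho_\mathrm{m}] = n/d$; this coincides with $\SLD$, since the SLD QFI equals $d\,\mathds{1}_n$ for the GMM parametrisation at $\rho_\mathrm{m}$.

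Next, I would upper-bound $\MI[\rho_\mathrm{m}]$ using a Fisher-symmetric POVM, as recalled in the Background. Such measurements (including SIC POVMs where available) attain a CFI matrix $J = J_\mathrm{SLD}/(d+1) = \tfrac{d}{d+1}\mathds{1}_{\nmax}$ at the maximally-mixed state. When restricted to the $n$-parameter subset (with the remaining $\nmax - n$ GMM coefficients known to be zero), the induced CFI is the $n\times n$ sub-block $\tfrac{d}{d+1}\mathds{1}_n$, whose inverse has trace $n(d+1)/d$. The classical CRB then yields $\MI[\rho_\mathrm{m}] \leq n(d+1)/d$. Combining,
\begin{equation*}
\frac{\CNH[\rho_\mathrm{m}]}{\CH[\rho_\mathrm{m}]} \;\leq\; \frac{\MI[\rho_\mathrm{m}]}{\CH[\rho_\mathrm{m}]} \;\leq\; \frac{n(d+1)/d}{n/d} \;=\; d+1 ,
\end{equation*}
establishing the theorem for any $n\leq\nmax$.

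The main subtlety is guaranteeing the existence of a Fisher-symmetric POVM for $\rho_\mathrm{m}$ in arbitrary dimension $d$; SIC POVMs are only explicitly constructed in many (but not all) dimensions, yet Fisher-symmetric measurements for the maximally-mixed state are known to exist more generally, so the argument is not obstructed. A minor additional check is that restricting the CFI to a parameter subset genuinely gives the corresponding sub-block of $J$, which follows from the LUB condition being imposed only on the estimated parameters while the remaining GMM coefficients are treated as known. Since we are invoking $\MI$ rather than computing $\CNH$ directly, the resulting bound is automatically slack whenever the NHCRB is not tight for the subset model, but this does not affect the inequality being claimed.
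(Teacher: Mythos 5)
Your proof is correct, but it takes a genuinely different route from the paper. The paper never touches $\MI$ or any explicit measurement for this theorem: it upper-bounds $\CNH[\rho_\mathrm{m}]$ directly by an algebraic ansatz, writing the feasible LUB operators as $\mathbb{X}=\mathbb{C}^{(2)}\Lambda_d$ and lifting the full-model optimal matrix $\mathbb{L}^*$ of Lemma~\ref{lemma:primalsolutionNHCRB} to $\mathbb{L}^{**}=\mathbb{C}^{(2)}\mathbb{L}^*{\mathbb{C}^{(2)}}^\top$, which inherits feasibility from $\mathbb{L}^*\succcurlyeq\Lambda_d\Lambda_d^\top$; minimising $\bigtrace[\mathbb{S}_\theta\mathbb{L}^{**}]$ over $\mathbb{C}^{(2)}$ then gives $\CNH[\rho_\mathrm{m}]\leq n(d+1)/d$, and the HCRB value $n/d$ is obtained via the SLD lower bound together with feasibility of $\mathbb{L}=\mathbb{X}\mathbb{X}^\top$. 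You instead pass through the chain $\CNH\leq\MI\leq\ctrace(J^{-1})$ evaluated at a Fisher-symmetric (SIC-type) POVM, whose restricted CFI sub-block $\tfrac{d}{d+1}\mathds{1}_n$ gives the same $n(d+1)/d$; your HCRB computation via $\Im\mathbb{Z}_\theta[\mathbb{X}]=0$ at $\rho_\mathrm{m}$ (trace of a commutator vanishes) is a clean, valid alternative to the paper's argument. Your route is shorter and actually buys a stronger conclusion—an upper bound on the \emph{true} ratio $\MI[\rho_\mathrm{m}]/\CH[\rho_\mathrm{m}]\leq d+1$ for the sub-model, which the paper only establishes for the full-parameter case—but it rests on the existence of a measurement with CFI proportional to $J_\mathrm{SLD}$ in every dimension $d$. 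SIC existence is open in general, so you should anchor this step on rank-one weighted 2-designs or the Fisher-symmetric constructions for the maximally-mixed state (as cited in the Background), which do exist for all $d$; with that citation made explicit the argument is complete. The paper's construction, by contrast, is measurement-free and dimension-unconditional, which is why it avoids this caveat entirely.
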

\endgroup

\noindent
We do not analytically solve this model
for other states, or for states
with the remaining~$\nmax-n$
GMM coefficients non-zero.
However, the~$n$ bound in
Sec.~\ref{subsec:ration}
and numerical results in
Sec.~\ref{sec:numericalresults}
suggest that the maximum enhancement~$\rat_n$
is non-decreasing with increasing~$n$ at fixed~$d$.
Having analysed the~$n=\nmax$
case in depth, we expect
that for any~${n}\,{<}\,{\nmax}$ model,
the same bound
of~${d+1}$ should hold,
i.e.,~$\rat_n \leq \rat_{\nmax} \leq d+1$.

In summary,
we find that there
are two different upper bounds
on the maximum collective enhancement
in the low~$n$ and the high~$n$ regimes.
Taking the example of qutrits~($d=3$),
for~$n=1,2$ \&~3, models attaining~$\rat_n=n$
are found, whereas for~$n=5,6,7$ \&~8,
models attaining~$\rat_n=d+1$ are found.
Combining these two cases,
we propose~$\rat_n \leq \min(n,d+1)$
and~$\rat\leq d+1$.
Figure~\ref{fig:summarymainresult}
summarises these results
along with numerically-found
maximum ratios
and analytically-found
ratios from known models.

\subsubsection{Results for MICRB \& GMCRB ratios}
\label{subsec:analresultsothrbnds}

The results on~$\rat_n$ and~$\rat[\{\rho_\theta\}]$
in Sec.~\ref{subsubsec:NHCRBratios}
lower-bound the corresponding quantities~$\ratMI_n$
and~$\ratMI[\{\rho_\theta\}]$.
In particular,
for the linear GMM model
that addresses ONB tomography,
we have
proved~$\ratMI[\{\rho_\theta\}] \geq d+1$.
Based on analytical and numerical results,
we conjecture that~$\rat \leq d+1$;
accordingly, Theorem~\ref{th:ratdmms}
implies~$\ratMI \geq d+1$.
In Sec.~\ref{sec:trueratGMMmodel},
we use the NHCRB solution
behind
Theorem~\ref{th:ratdmms}
to solve the MICRB for the linear
GMM model at~$\theta=0$,
obtaining a ratio of~$d+1$.
We then extend this solution
to upper-bound the true ratio
for arbitrary~$\theta$
by~$d+2$
in Theorem~\ref{th:ratdMIarbstate}
(see Appendix~\ref{sec:appMICRB} for proof):

\begingroup
\begin{theorem}
        For ONB tomography of arbitrary~$d$-dimensional qudit state~$\rho_\theta$, the maximum true collective enhancement~$\ratMI[\{\rho_\theta\}]
        \leq d+2$.
\end{theorem}
\addtocounter{theorem}{-5}
\endgroup
\noindent
Theorem~\ref{th:ratdMIarbstate} therefore
constrains the true maximum enhancement
for ONB tomography as
\begin{equation}
\label{eq:ratlessactualrat2}
    d+1 \leq \ratMI[\{\rho_\theta\}] \leq d+2 \, .
\end{equation}

A different way to quantify
collective precision enhancement
is to consider the ratio~$\CGM/\SLD$.
As noted previously, the SLD CRB
can capture collective performance
up to a factor, and the GMCRB,
though not tight for qudit problems with small~$n$,
can be directly computed given the SLD QFI~$J_\mathrm{SLD}$.
This significantly simplifies
the evaluation of the ratio~$\CGM/\SLD$,
which then upper-bounds~$\CGM/\CH$,
compared to the ratios considered in Sec.~\ref{subsubsec:NHCRBratios}.

Let us denote the maximum GMCRB-to-HCRB ratio
for a fixed model as~$\ratGM[\{\rho_\theta\}]$,
for any~$n$-parameter~$d$-dimensional model as~$\ratGM_n$,
and for any~$d$-dimensional model as~$\ratGM$
(similar to Eqs.~\eqref{eq:ratiodef},~\eqref{eq:defmaxration}
and~\eqref{eq:defmaxratio} for the NHCRB).
A straightforward application
of the Cauchy-Schwartz inequality
to the eigenvalues~$\{\nu_j\}_{j\in[n]}$ of~$J_\mathrm{SLD}^{-1/2}$
(the positive square-root of the inverse of the SLDQFI matrix)
leads to
\begin{equation}
\label{eq:GMSLDratio}
    \frac{\left ( \ctrace[J_\mathrm{SLD}^{-1/2}]\right )^2}{\ctrace[J_\mathrm{SLD}^{-1}]}  = \frac{(\sum_j \nu_j)^2}{\sum_j \nu_j^2} \leq n \, ,
\end{equation}
with equality attained if~$J_\mathrm{SLD}$ is a scalar matrix.
The ratio in Eq.~\eqref{eq:GMSLDratio}
proves the upper bound
\begin{equation}
\label{eq:ratGMupbound}
\begin{split}
    \ratGM_n &\coloneqq \max_{\substack{\theta\in\Theta\\n\text{-parameter}\\ \text{models}\, \{\rho_\theta\}}} \frac{\CGM[\rho_\theta]}{\CH[\rho_\theta]} \\
    &\leq \max_{\substack{\theta\in\Theta\\n\text{-parameter}\\ \text{models}\, \{\rho_\theta\}}} \frac{\CGM[\rho_\theta]}{\SLD[\rho_\theta]} \leq \frac{n}{d-1}
\end{split}
\end{equation}
for any~$n$-parameter~$d$-dimensional model,
reducing to~$\ratGM_{\nmax} \leq \nmax/(d-1) = d+1$
at the maximum number of parameters.
Moreover, Eq.~\eqref{eq:ratGMupbound}
is a tight bound:
a ratio~$\ratGM_n = n/(d-1)$ is attained
by the model of estimating any~$n$
GMM coefficients of the maximally-mixed state.
This is because~$J_\mathrm{SLD} = d \, \mathds{1}_n$
here, so that~$\CGM = n^2/(d(d-1))$, dividing which
by~$\SLD=\CH=n/d$
(see Methods Sec.~\ref{sec:EstimatingFewGGMMs})
produces the ratio~$n/(d-1)$,
thereby proving~$\ratGM_n = n/(d-1)$.

In Fig.~\ref{fig:GMNHratioComp},
we compare the
maximum collective enhancement
predicted by~$\ratGM_n$
and by~$\rat_n$
for generic~$n$ parameter
models
($1\leq n \leq \nmax$)
on qutrits.
The comparison with the NHCRB,
which is a tighter bound,
reveals
that whereas Eq.~\eqref{eq:ratGMupbound}
upper-bounding~$\ratGM_n$ is tight,~$\ratGM_n$
itself generally underestimates collective enhancement.
This is because the GMCRB
can be far from the attainable
individual precision
for qudit problems,
especially for small~$n$,
but also for full-parameter models
(see Fig.~\ref{fig:GMNHfewpara} in Appendix~\ref{sec:appNHvsGM}).
Nonetheless,
if it holds that the GMCRB is
separably attainable up to a scaling factor~$g$
(that is independent of~$d$ and~$n$,
e.g., $g$ = 2 for ONB tomography~\cite{ZhuThesis}),
implying~$\MI \leq g \, \CGM$,
then we can reason that
\begin{equation*}
    \frac{\MI}{\CH} \leq g \frac{\CGM}{\SLD} \leq g \frac{n}{d-1} \, ,
\end{equation*}
so that in general,~$\rat_n \leq \ratMI_n \leq g \, \ratGM_n = g \,  n/(d-1)$.
For ONB tomography, which
includes the GMM model as a special case,
this proves that
\begin{equation}
\label{eq:ratlessactualrat3}
\begin{aligned}
    d+1 \leq \, &\rat[\{\rho_\theta\}] \leq \ratMI[\{\rho_\theta\}] \\
    &\leq 2 \, \ratGM[\{\rho_\theta\}] = 2(d+1) \, .
\end{aligned}
\end{equation}
However, no such scaling factor~$g$ is currently known
for general models
beyond the setting of tomography.

\begin{figure}
    \centering
    \includegraphics[width=0.9\columnwidth]{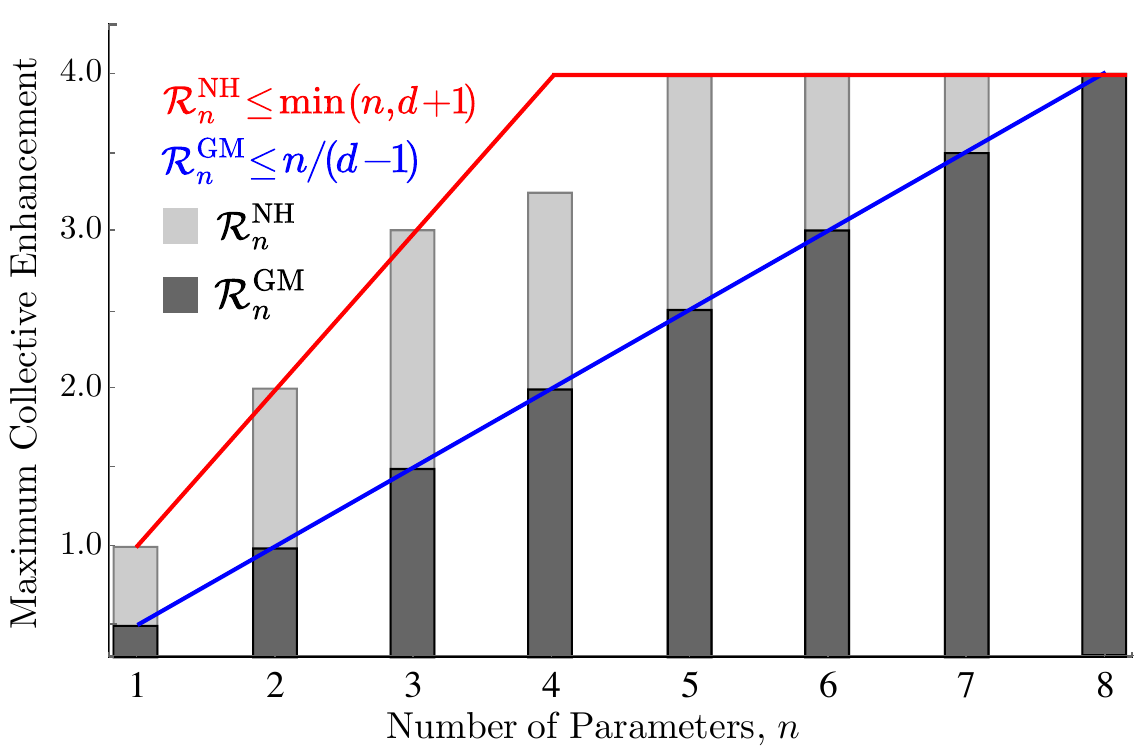}
    \caption{Comparison of the
    collective enhancements specified
    by the GMCRB ($\ratGM_n$)
    and by the NHCRB ($\rat_n$)
    for~$n$-parameter models
    ($1\leq n\leq \nmax$).
    The maximum GMCRB ratios
    (dark gray bar chart)
    satisfy~$\ratGM_n = n/(d-1)$
    (blue line).
    The maximum NHCRB ratios
    (light gray bar chart)
    satisfy~$\rat_n \leq \min(n,d+1)$
    (red line).
    The ratios are plotted against
    the number of parameters,~$n$,
    and include numerically found maximum
    ratios from random-sampling experiments
    (1300 samples for each~$n$ for~$d=3$)
    as well as analytically found ratios.
    The NHCRB ratio~$\rat_n$
    generally predicts a larger collective
    enhancement
    than the GMCRB ratio~$\ratGM_n$,
    except at
    the maximum number of parameters~($n=\nmax$).}
    \label{fig:GMNHratioComp}
\end{figure}

\subsection{Numerical Results}
\label{sec:numericalresults}

In this section,
we report our numerical results
on the effect
of probe-purity on
the collective enhancement
and its maximum,
and on the
dependence of the
maximum collective enhancement~$\rat_n$
on~$d$ and~$n$
for randomly-sampled~$n$-parameter,~$d$-dimensional models.
For both of these analyses,
we rely on random sampling
to study the dependence of the
maximum enhancement on state or model properties
because directly maximising the ratio~$\CNH/\CH$
over the state space~$\mathcal{H}_d$
or the model space
is computationally demanding for large~$d$.
For each randomly-generated problem instance, we
solve the~NHCRB and~HCRB SDPs numerically
and compute their ratio~\cite{Lorcan21}.
Further details of the random-sampling
procedure used for subsequent results
are presented
in Appendix~\ref{sec:appGridPlot}.

\begin{figure*}[hbtp]
    \centering
    \includegraphics[width=\textwidth]{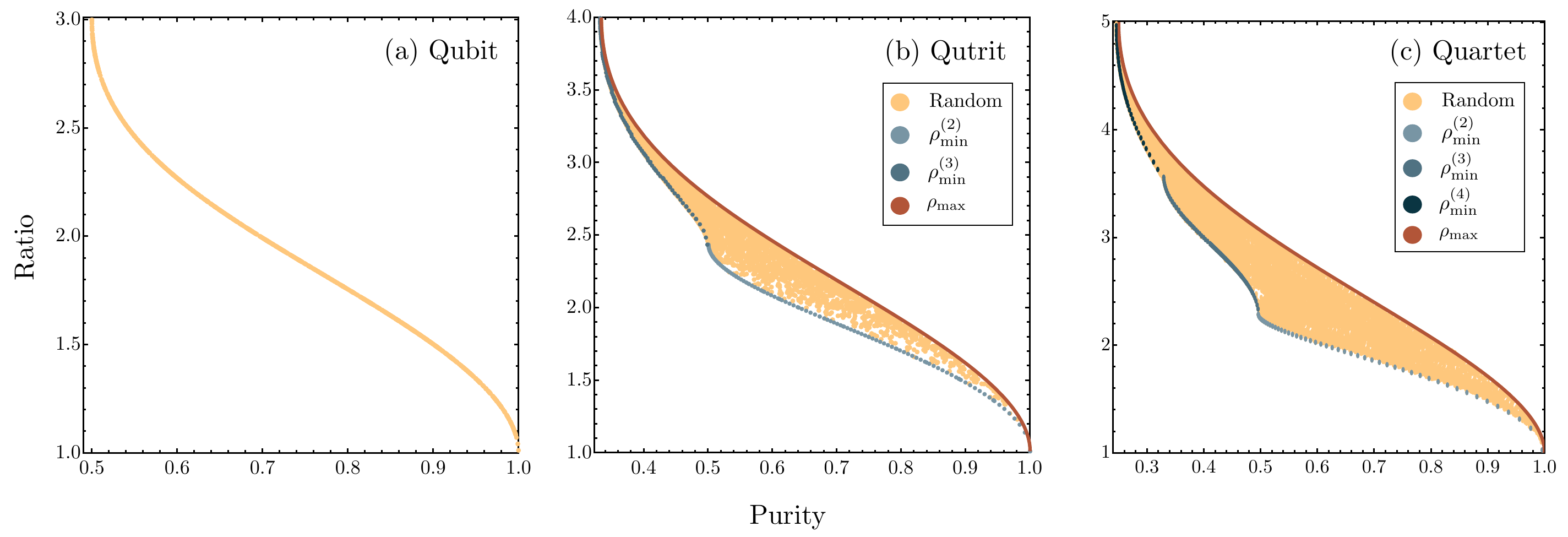}
    \caption{Ratio between the NHCRB and the HCRB versus purity for estimating all~$d^2-1$ GMMs from arbitrary states.
    For qubits (a), we find a one-to-one dependence between ratio and purity (10,000 samples). However,
    for qutrits (b) and quartets (c), there is a region of allowed ratios at any given purity
    (15,000 and 25,000 samples, respectively).
    The ratio at any fixed purity is maximised by the state~$\rho_\mathrm{max}$,
    which is a depolarised pure state,
    and minimised by the states
    $\rho_\mathrm{min}^{(2)}, \rho_\mathrm{min}^{(3)}$
    (and $\rho_\mathrm{min}^{(4)}$ in (c)),
    which are rank-deficient classical states}.
    \label{fig:estimateallarbitstate}
\end{figure*}

In Fig.~\ref{fig:estimateallarbitstate},
we plot our random-sampling results for
the ratio versus purity
in the linear GMM model for $d=2,3$ and~$4$.
For each~$d$, the overall maximum ratio
observed is~$d+1$.
Interestingly, whereas for qubits
the ratio is uniquely determined by purity,
the higher dimensionality of the qudit state space
allows for a range of ratios at any given purity.
We find that the ratio
at a given purity
is maximised by full-rank
depolarised pure states,
$p \ketbra{\phi} + (1-p) \mathds{1}_d/d$ for any pure state~$\ket{\phi}$
and~$p\in[0,1]$.
To simplify computation,
we choose the more specific family
$\rho_\mathrm{max}(p)=p \ketbra{+}_d + (1-p) \mathds{1}_d/d$,
where~$\ket{+}_d = (\ket0+\dots+\ket{d-1})/\sqrt{d}$
and calculate the HCRB to be
\begin{equation}
    \CH[\rho_\mathrm{max}(p)]= \frac{d^2-1}{d} + p(d-1)-\frac{d-1}{d} p^2 \, .
    \label{eq:maxratiostateHCRB}
\end{equation}
For the NHCRB,
based on numerical evidence
for~$d=3$ to~$8$,
the analytic solutions at the boundary cases
($(d^2-1)(d+1)/d$ at $p=0$ and~$2(d-1)$ at~$p=1$),
and the analytic solution for~$d=2$
(Eq.~\eqref{eq:PatricksEqs}),
we find that
\begin{align}
    \CNH&[\rho_\mathrm{max}(p)] = \frac{d^2+1}{2} -\frac{d^2-4d+5}{2} p^2 \nonumber \\
    &+\frac{d^3+2d^2-3d-2}{2d}\sqrt{1-p^2} \, .
\label{eq:maxratiostateNHCRB}
\end{align}
Accordingly,
the maximum collective enhancement at a fixed purity~$\mathrm{P}^*$
is $\CNH[\rho_\mathrm{max}(p^*)]/\CH[\rho_\mathrm{max}(p^*)]$
with~$p^*=\sqrt{\frac{\mathrm{P}^* d-1}{d-1}}$.
Eqs.~\eqref{eq:maxratiostateHCRB} and~\eqref{eq:maxratiostateNHCRB}
reveal that the HCRB
grows at most linearly with $d$,
whereas the NHCRB grows at most quadratically,
so that the maximum enhancement at fixed purity
(dark red line in Fig.~\ref{fig:estimateallarbitstate})
grows at most linearly with dimension
and is at most~$d+1$.

In contrast to the maximum ratio,
the minimum-ratio states
(blue dots in Fig.~\ref{fig:estimateallarbitstate})
are rank-deficient states\footnote{Although
we have treated only full-rank states until now,
rank-deficient states can be approximated arbitrarily well
by full-rank ones~\cite{Watanabe2011}.}
of the form
$\rho_\mathrm{min}^{(2)}(p)=p\ketbra{0}+(1-p)\ketbra{1}$
for purity greater than~$1/2$,
$\rho_\mathrm{min}^{(3)}(p)=p\ketbra{0}+p\ketbra{1}+(1-2p)\ketbra{2}$ for purity between~$1/3$ and~$1/2$,
and so on,\footnote{%
We only provide analytic expressions
for~$\rho_\mathrm{min}^{(2)}$
and~$\rho_\mathrm{min}^{(3)}$.}
down to~$\rho_\mathrm{min}^{(d)}$ for purity between~$1/d$ and~$1/(d-1)$.
This change in form of the minimum-ratio state
reflects as the points of non-differentiability
in the minimum-ratio curve
in Fig.~\ref{fig:estimateallarbitstate}.

\begin{figure}[H]
    \centering
    \includegraphics[width=\columnwidth]{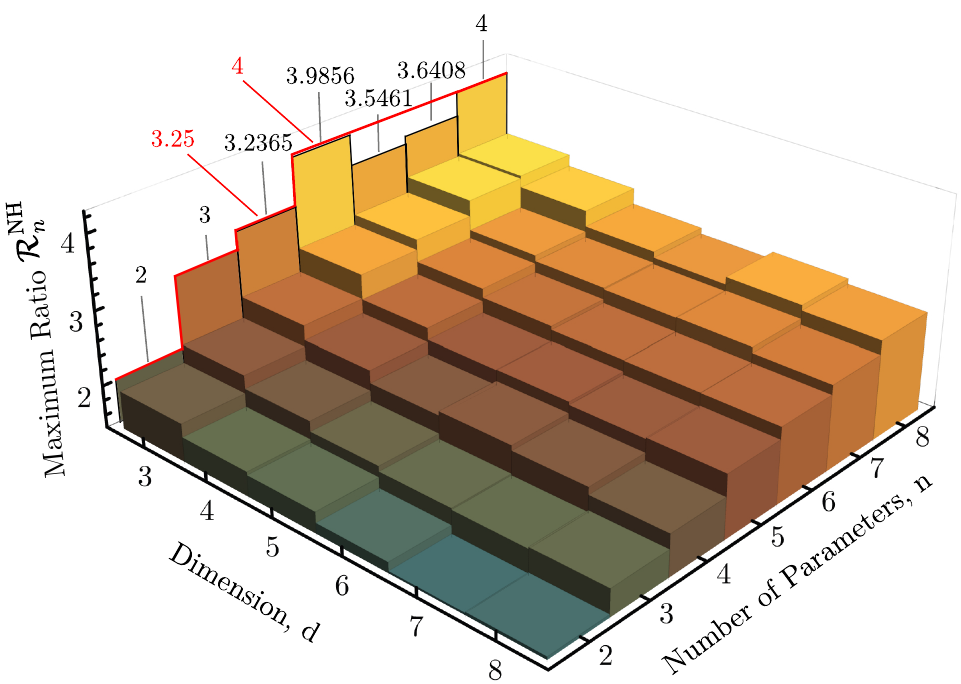}
    \caption{Maximum ratio~$\rat_n$
    between the NHCRB and the HCRB
    over 10,000 random models
    for each dimension,~$d$, from three to eight and
    for each number of parameters,~$n$, from two to eight.
    (See Fig.~\ref{fig:GridPlotDN} in Appendix~\ref{sec:appGridPlot}
    for the distribution of ratios for each~$d$ and~$n$.)
    The bar chart (with black callouts) on the back panel
    depicts the maximum ratio for estimating~$n$
    GMM coefficients from the maximally-mixed qutrit
    (Table~\ref{tab:Ratios} in Appendix~\ref{sec:AppEstimatingFewGGMMs}).
    The red line on the bar chart
    (with red callouts) represents
    the maximum ratio for each~$n$
    from known analytic models,
    applicable for all~$d\geq 3$.}
    \label{fig:estimatearbitstatearbitparas}
\end{figure}

Whereas our analytical results
in Secs.~\ref{subsec:OdRatioMaxMixState}--\ref{sec:trueratGMMmodel}
apply to the linear GMM model and its extensions,
we now consider
arbitrary smooth models
of full-rank qudit states,
i.e., we address the quantities~$\rat_n$ and~$\rat$
from Eqs.~\eqref{eq:defmaxration} and~\eqref{eq:defmaxratio}.
Specifically, we estimate~$n$
arbitrary independent parameters
from arbitrary full-rank~$d$-dimensional qudit states.
In this case,
the parameter derivatives~$\partial_j\rho_\theta$
are arbitrary
traceless Hermitian operators.
Our results for this model
are shown in Fig.~\ref{fig:estimatearbitstatearbitparas}
(and Fig.~\ref{fig:GridPlotDN} in Appendix~\ref{sec:appGridPlot}).
The 3D bar chart in Fig.~\ref{fig:estimatearbitstatearbitparas}
clearly depicts the increase in maximum ratio
with number of parameters,
in agreement with
Secs.~\ref{subsec:ration}
and~\ref{sec:EstimatingFewGGMMs}.
This suggests~$\rat_{n} \leq \rat_{n+1}$.
On the other hand,
in Fig.~\ref{fig:estimatearbitstatearbitparas},
the maximum ratio for a given number of parameters
seems to decrease with increasing
dimension---this is an anomaly
stemming from having an increasingly larger
sample space
of models but a fixed number
of samples ($10^4$) from them.
In fact, the maximum ratio cannot decrease
with increasing~$d$; any
model in~$d$ dimensions
can be extended to~$d+1$
dimensions by padding the state
and its derivatives with zeros.
This preserves both the individual
and the collective precision bounds,
and thus their ratio.

\section{Discussion}
\label{sec:Disc}

In this work,
we explored the
ratio between separable-
and collective-measurement precision bounds
in order to identify
the maximum collective quantum enhancement
in a range of settings.
Considering the ratio between
the NHCRB and the HCRB,
we established that for estimating any~$n$
independent parameters,
the maximum precision-enhancement
from collective measurements
can, in principle, be~$n$.
However, at the maximum value of~$n$,
we proved this maximum enhancement
to be only~$\mathrm{O}(d)$ or~$\mathrm{O}(\sqrt{\nmax})$.
Specifically, for the model of qudit tomography in
the Gell-Mann basis,
we proved the maximum
collective quantum enhancement to be $d+2$,
revealing the SIC POVM to be
an individual-optimal measurement
for the maximally-mixed case.
Based on the specific examples
provided and our numerical results,
we expect the attainable maximum enhancement
to be~$d+1$ instead.
We also established a maximum enhancement
of~$d+1$
for tomography in any other basis,
i.e., for estimating any other~$\nmax$
parameters,
as well as for estimating fewer than~$\nmax$
parameters
of the maximally-mixed state.
Finally, we numerically demonstrated
a maximum enhancement in~$\mathrm{O}(d)$
for states of a fixed known purity.
Our work thus suggests supplementing the known ratio
result~$\SLD \leq \CH \leq 2 \, \SLD$
with~$\CH \leq \CNH \leq (d+1) \, \CH$.
Throughout,
we have demonstrated our analytical findings
via numerics and figures for specific~$d$.

It is, however, important to note
that, though our methodology leveraging
the NHCRB as a separable-measurement precision bound
leads to several analytical results,
the NHCRB is a substitute here
for the most informative bound~$\MI$.
The NHCRB can overestimate the
best attainable separable precision
whereas~$\MI$ is tight by definition,
so the attainable maximum advantage
from collective measurements could be larger than~$d+1$,
as in Eq.~\eqref{eq:ratlessactualrat}.
However, numerical evidence for large qudit dimensions~\cite{HayashiOuyang2023}
suggests the gap~$(\MI-\CNH)/\CNH$ to be small,
so that the NHCRB-maximum ratio should be close
to the MICRB-maximum ratio~\cite{Jun2024Qest}.
Indeed, for qudit tomography in orthonormal bases,
we prove the attainable maximum collective enhancement
to be bounded between~$d+1$ and~$d+2$---the
same bound that applies for the NHCRB-maximum ratio.
We then compare this result to
a bound based on the GMCRB,
demonstrating that our approach
imposes a stronger constraint
on attainable enhancement.
Notably, both approaches predict
a maximum enhancement
linear in dimension, and not quadratic,
at the maximum number of parameters.
However,
beyond full-parameter models,
the NHCRB reflects
more accurately
a non-linear dependence of maximum collective enhancement
on number of parameters~$n$,
whereas the GMCRB predicts a linear relationship
(Fig.~\ref{fig:GMNHratioComp}).
These factors justify our choice of
the NHCRB as the individual measurement precision bound.

Our choice of the Gell-Mann basis
for tomography
was motivated by its symmetry and simplicity,
apart from being a generalisation of Pauli matrices.
The orthonormality of the basis
leads to (local) parameter orthogonality of the basis coefficients
for estimating from the maximally-mixed state~\cite{CR87},
making the 
classical and quantum (SLD)
Fisher information matrices
diagonal.
As further captured by the theory of
Fisher SIC measurements~\cite{Zhu1},
the two Fisher information matrices are proportional.
In general, any full-parameter model is D-invariant~\cite{Holevo2011,ABGG20,Jun2},
leading to~$\CH = \RLD$.
Additionally, for GMM tomography,
the SLD and RLD CRBs equal
the HCRB, which is a signature of
a locally-classical model~\cite{ABGG20,Jun2}.
In fact, the theory of quantum local
asymptotic normality~\cite{Kahn2009,Yang2019,DemkowiczDobrzaski2020,Fujiwara2023}
predicts that
in the asymptotic limit of number of copies,
this model becomes a completely classical Gaussian model~\cite{Jun2016}.
This theory therefore offers a physical
explanation for why collective enhancement
might be maximised in the orthonormal tomography setting:
the single-copy measurements
are subject to maximal parameter incompatibility
whereas all the parameters can be estimated
simultaneously in the asymptotic limit.

However, the significance of considering
the Gell-Mann basis cannot be overstated.
Generally, any parameter estimation problem
may be linearised about the true parameter values
as~$\rho_\theta\approx \rho_{\theta^*} + \sum (\theta_j-\theta_j^*)  \partial_j \rho_\theta^*$,
with the partial derivatives~$\partial_j \rho_\theta$
necessarily traceless and Hermitian,
meaning they are combinations of GMMs
(Sec.~VC in~\cite{Gill2000}).
This estimation problem can then
be linearly transformed
to the equivalent problem of
estimating some number of GMM coefficients~\cite{ABGG20},
precisely the model solved in Sec.~\ref{sec:EstimatingFewGGMMs}.
For example, the simple result of~$\CH[\rho_\mathrm{m}]=n/d$
from Sec.~\ref{subsec:OdRatioMaxMixState}
can be directly linearly transformed to
obtain a closed-form expression for
the HCRB for estimating any~$d^2-1$ independent parameters
from any full-rank qudit state.
Lastly,
although our main analysis
was specific to GMM tomography,
we proved
that the bounds
and their ratios
are invariant for tomography in any other orthonormal
basis for the same space.
These results lay a foundation for the future exploration
of the enhancement potential of entangling measurements
in multi-parameter quantum estimation.

Our approach in this work
was to study quantities~$\rat$\
and~$\ratMI$
to assess collective quantum enhancement.
The collective- and separable-optimal
precisions are identical for single-parameter problems,
but can grow increasingly farther apart
with increasing number of
parameters, disagreeing the most at the
maximum number of incompatible parameters.
Therefore, the maximum ratio of precisions
effectively
compares how parameter incompatibility
affects separable versus collective
measurement strategies.
Our results motivate defining
a finite incompatibility measure
through the gap~$\delta = (\CNH-\CH)/\CH$,
which complements
the normalised gap~$\Delta = (\CH-\SLD)/\SLD$
that
was connected to the asymptotic incompatibility
between parameters in Refs.~\cite{Carollo2019,Candeloro2024}.
In contrast to Ref.~\cite{Candeloro2024}, our results
show that even if the asymptotic incompatibility
vanishes~\cite{Carollo2019},
the individual and collective attainable precisions
can still disagree maximally
and the finite incompatibility~$\delta$
can be as large as the dimension~$d$,
a significant difference from the asymptotic case.

The results in Figs.~\ref{fig:summarymainresult},~\ref{fig:GMNHratioComp}
and~\ref{fig:estimatearbitstatearbitparas}
indicate that the maximum collective enhancement~$\rat_n$
increases with number of parameters,
and hence is largest for state tomography.
Moreover, for tomography, the maximum
enhancement
decreases with purity (Fig.~\ref{fig:estimateallarbitstate})
and is maximised by minimum-purity states.
Based on this,
we conjecture that maximum enhancement~$\rat$
is attained for orthonormal tomography of the maximally-mixed state.
This case was studied analytically
to find a ratio of~$d+1$.
Hence, we conjecture that~$\CNH \leq (d+1) \CH$
for all smooth full-rank models
in the local estimation setting, i.e.,~$\rat=d+1$.
Notably, both the~$n$ bound (Sec.~\ref{subsec:ration})
and the~$d+1$ bound
(Secs.~\ref{subsec:OdRatioMaxMixState}
\&~\ref{subsec:ArbitStates})
can be tight,
and for qubit tomography
(Eq.~\eqref{eq:PatricksEqs}),
they are tight and identical.
Resultantly, if our conjecture holds true,
we would also expect~$\rat_n \leq \min(n, d+1)$
(red line in Figs.~\ref{fig:summarymainresult} \&~\ref{fig:GMNHratioComp})
to hold for qudits.
For qutrits, Fig.~\ref{fig:summarymainresult}
shows this conjectured
upper bound to be attainable for all~$n$
except~$n=4$. The value of~$\rat_4$
for qutrits
remains an open problem
(we report a ratio of~3.25
in Fig.~\ref{fig:estimatearbitstatearbitparas}),
motivating further inquiry
into the utility of collective measurements
in multi-parameter quantum estimation.

In conclusion,
we find that for
local estimation problems
involving~$n$ parameters,
the optimal collective measurements
are at most~$n$ times more precise
than the optimal individual measurements.
Although this suggests
that a collective quantum enhancement of~$n$
is possible, and that
the utility of collective measurements
scales with the number of estimated parameters,
our further analysis indicates otherwise.
By taking the probe dimension~$d$
into account, we upper-bound
the collective enhancement
by $d+1$,
which is a tighter bound for large~$n$ ($n > d$).
Our investigation into the utility
of collective measurements
thus reveals a diminishing payoff
in the asymptotic limit.
Whereas collective measurements
on two copies, three copies, and so on,
are practically feasible
and outperform
the optimal individual measurements,
the optimal collective measurements
that saturate the HCRB require
entangling measurements on
asymptotically-large number of copies
but only enhance precision by a
factor at most linear in dimension,
underscoring their non-utility.

Our results apply to
multi-parameter quantum metrology
and quantum sensing, where
a judicious choice
between
measurement strategies
would be resource-wise beneficial.
Conversely, our work questions
the choice of the HCRB
when benchmarking the performance
of real-world quantum measurements,
and suggests the (finite-copy) NHCRB
as a more suitable alternative.
Investigating the advantage offered
by finite-copy collective measurements,
or extending to Bayesian settings
could offer valuable insights
into the potential of entangling measurements.

\section{Methods}
\label{sec:Methods}

In this section,
we present our methodology
for analytically deriving
the~$\rat_n\leq n$ and the~$\rat\leq d+1$
bounds on the maximum
collective enhancement.
First, in Sec.~\ref{subsec:ration},
we establish a problem-independent upper bound of~$n$
on the collective enhancement~$\rat_n$.
Then, over Secs.~\ref{subsec:model}---\ref{sec:EstimatingFewGGMMs},
we establish the upper bound of~$d+1$
on~$\rat$.
We start by introducing the `linear GMM model'
and identifying some of its simplifying features
in Sec.~\ref{subsec:model}.
Then,
in Sec.~\ref{subsec:OdRatioMaxMixState},
we solve the full linear GMM model,
corresponding to GMM-basis tomography,
for maximally-mixed qudit states,
showing the enhancement here to be exactly~$d+1$.
Next, in Sec.~\ref{subsec:ArbitStates},
we extend our arguments
for the full linear GMM model
to arbitrary qudit states,
establishing a maximum collective enhancement of~$d+2$.
Finally, in Sec.~\ref{sec:EstimatingFewGGMMs},
we explore the related model of estimating
fewer than $\nmax$ parameters
of the maximally-mixed state, proving
that the maximum enhancement remains~$d+1$.
We also establish the maximum enhancement
to be~$d+1$ for the weighted linear GMM model
in Appendix~\ref{sec:arbitweight}
and for a different choice of
the individual-measurement precision bound
in Appendix~\ref{sec:appNHvsGM}.

\subsection{Ratio of~\texorpdfstring{$n$}{n}}
\label{subsec:ration}

We now establish a model-agnostic
(or problem-independent)
upper bound of~$n$ on
the collective enhancement~$\rat_n$.
\begin{theorem}
\label{th:ratngenstate}
For estimating~$n$ parameters of any
qudit state~$\rho_\theta$,
the collective enhancement~$\CNH[\rho_\theta] /\CH[\rho_\theta] \leq n$,
i.e.,
\begin{equation}
\label{eq:simpleupperboundonratio}
    \rat_n \leq n \, .
\end{equation}
\end{theorem}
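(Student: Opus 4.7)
The plan is to bound the two sides of the ratio $\CNH[\rho_\theta]/\CH[\rho_\theta]$ separately, then combine. For the denominator I would use the ordering relation in Eq.~\eqref{eq:orderingprecisionbounds}, which yields the clean lower bound $\CH[\rho_\theta] \geq \SLD[\rho_\theta] = \ctrace(J_\mathrm{SLD}^{-1})$. This reduces the task to upper-bounding $\CNH[\rho_\theta]$ by $n\,\ctrace(J_\mathrm{SLD}^{-1})$.

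For the numerator, rather than trying to work with the ``divide-and-conquer'' asymptotic strategy outlined in the paragraph following Theorem~\ref{th:ratngenstate} (whose local-unbiasedness is delicate, as noted in the footnote referencing Ref.~\cite{Fujiwara1999}), I would invoke the analytic NHCRB upper bound from Ref.~\cite{Suzuki2023}, summarised in Appendix~\ref{sec:summaryJunTR}. This amounts to exhibiting a feasible pair $(\mathbb{X},\mathbb{L})$ in the conic program of Eq.~\eqref{eq:HNCRBdefn}: one takes the SLD-derived locally-unbiased operators $X_j = \sum_k (J_\mathrm{SLD}^{-1})_{jk} L_k$, which satisfy Eq.~\eqref{eq:origunbiaseq}, and completes them to a partial-transpose-symmetric $\mathbb{L}$ dominating $\mathbb{X}\mathbb{X}^\top$ by adding a controlled positive correction absorbing the non-commutativity between the $X_j$.

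The key inequality I would then target is $\CNH[\rho_\theta] \leq n\,\ctrace(J_\mathrm{SLD}^{-1})$. The diagonal (symmetric) part of $\mathbb{T}\mathrm{r}[\mathbb{S}_\theta \mathbb{L}]$ contributes $\ctrace(J_\mathrm{SLD}^{-1})$ directly, while the non-diagonal/antisymmetric contributions — which encode the cost of simultaneously estimating incompatible parameters — can be bounded by $(n-1)$ times the same trace using operator-norm estimates on the commutator block and the fact that, for any positive semidefinite matrix $A$, each diagonal entry satisfies $A_{jj} \leq \ctrace(A)$. Assembling these gives $\CNH[\rho_\theta]/\CH[\rho_\theta] \leq n\,\ctrace(J_\mathrm{SLD}^{-1})/\ctrace(J_\mathrm{SLD}^{-1}) = n$, and since this holds at every $\theta \in \Theta$ and over every $n$-parameter model, taking the maximum yields $\rat_n \leq n$.

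The main obstacle I expect is the middle step: constructing the positive correction in $\mathbb{L}$ so that both (i) the PSD constraint $\mathbb{L} \succcurlyeq \mathbb{X}\mathbb{X}^\top$ and the Hermitian-partial-transpose symmetry hold, and (ii) the extra contribution to $\mathbb{T}\mathrm{r}[\mathbb{S}_\theta \mathbb{L}]$ remains cleanly bounded by $(n-1)\ctrace(J_\mathrm{SLD}^{-1})$. The natural ``symmetrised product'' choice $\mathbb{L}_{jk} = (X_j X_k + X_k X_j)/2$ is not by itself feasible once commutators are non-zero, so the delicate part is packaging the asymmetric remainder in a way that its trace cost is controlled parameter-by-parameter rather than quadratically in $n$. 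I would lean on the explicit construction of Ref.~\cite{Suzuki2023} to sidestep this, which is precisely why the authors cite that result as the technical engine of the proof.
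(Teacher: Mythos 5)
Your skeleton is the same as the paper's: lower-bound the denominator by $\CH[\rho_\theta] \geq \min_{\mathbb{X}} \ctrace(\mathbb{Z}_\theta[\mathbb{X}]) = \SLD[\rho_\theta]$ and upper-bound the numerator via the analytic NHCRB bound of Ref.~\cite{Suzuki2023}, aiming at $\CNH[\rho_\theta] \leq n\,\SLD[\rho_\theta]$. But the step that actually produces the factor $n$ is missing from your argument. The cited result already hands you the bound in the form of Eq.~\eqref{eq:upNHBJunTR}, $\CNH[\rho_\theta] \leq \min_{\mathbb{X}}\{\ctrace(\mathbb{Z}_\theta[\mathbb{X}]) + \sum_{j,k}\Vert \rho_\theta [X_j,X_k]\Vert_1\}$; there is no need to re-construct a feasible partial-transpose-symmetric $\mathbb{L}$, since that construction is precisely what Ref.~\cite{Suzuki2023} supplies. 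The entire remaining content of the proof is therefore to control the commutator penalty, and here your proposal only asserts the needed estimate. The tools you name do not deliver it: the penalty is a sum of quantum trace norms $\Vert \rho_\theta[X_j,X_k]\Vert_1$, one for each pair of parameter indices, not a diagonal entry of a positive matrix on parameter space, so the fact $A_{jj}\leq\ctrace(A)$ for $A\succcurlyeq 0$ (which is of the flavour $\Vert\Im\mathbb{Z}\Vert_1\leq\ctrace(\Re\mathbb{Z})$ that yields $\CH\leq 2\,\SLD$) does not apply, and generic operator-norm estimates give no control in terms of $\ctrace(\mathbb{Z}_\theta[\mathbb{X}])$. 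You acknowledge this as the obstacle and propose to lean on the explicit construction of Ref.~\cite{Suzuki2023} to sidestep it, but that reference only establishes the commutator-penalty form of the upper bound; it does not bound the penalty by $(n-1)\,\SLD$.

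The paper closes exactly this gap with one elementary pairwise inequality, $\Vert\rho_\theta[X_j,X_k]\Vert_1 \leq \tfrac{1}{2}\qtrace[\rho_\theta(X_j^2+X_k^2)]$ (a Cauchy--Schwarz/AM--GM estimate). Since the diagonal terms vanish, summing over the $n(n-1)$ off-diagonal pairs gives $\sum_{j,k}\Vert\rho_\theta[X_j,X_k]\Vert_1 \leq (n-1)\,\ctrace(\mathbb{Z}_\theta[\mathbb{X}])$, hence $\CNH[\rho_\theta]\leq n\min_{\mathbb{X}}\ctrace(\mathbb{Z}_\theta[\mathbb{X}]) = n\,\SLD[\rho_\theta]$, and combining with the denominator bound yields $\rat_n\leq n$. (Your choice of fixing $\mathbb{X}$ to the SLD-derived locally-unbiased operators is harmless --- the pairwise bound holds for any feasible $\mathbb{X}$ --- but without that inequality, or an equivalent, your claim that the antisymmetric contribution is at most $(n-1)$ times the symmetric one remains unproven, so the proof as written has a genuine gap at its quantitative core.)
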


\begin{proof}
Using Ref.~\cite{Suzuki2023}'s upper bound
(based on Ref.~\cite{HayashiOuyang2023})
to the NHCRB
(see summary in Appendix~\ref{sec:summaryJunTR}),
\begin{equation}
\label{eq:upNHBJunTR}
\begin{split}
    \CNH[\rho_\theta] \leq \min_\mathbb{X} \Bigg \{ &\ctrace ( \mathbb{Z}_\theta[\mathbb{X}]) \\
    &+   \sum_{j,k\in[n]} \left \Vert \rho_\theta [X_j, X_k] \right \Vert_1 \Bigg \} \, ,
\end{split}
\end{equation}
and~$ \left \Vert [\rho_\theta [X_j,X_k] \right \Vert_1 \leq 1/2 \qtrace [\rho_\theta(X_j^2+X_k^2)]$,
we get
\begin{equation}
    \CNH[\rho_\theta] \leq n \min_\mathbb{X}   \ctrace \left ( \mathbb{Z}_\theta[\mathbb{X}] \right ) = n \, \SLD[\rho_\theta] \, .
\end{equation}
On the other hand,
from Eq.~\eqref{eq:Holevodefn1},
we have~$\CH[\rho_\theta] \geq \min_\mathbb{X} \ctrace(\mathbb{Z}_\theta [ \mathbb{X}] )$, so that
$\CNH[\rho_\theta]/\CH[\rho_\theta] \leq n$
for all smooth models
and, thus,~$\rat_n \leq n$.
\end{proof}
Note that under the assumption
of independent parameters~$\theta$,
we have $n \leq \nmax$~\cite{Gill2000},
so that Eq.~\eqref{eq:simpleupperboundonratio} implies~$\rat \leq \nmax = d^2-1$.

\subsection{Model: Estimating GMMs from Qudits}
\label{subsec:model}

We now introduce our quantum statistical model,
which is an~$\nmax$-parameter family of
$d$-dimensional qudit states. This model,
which we call the `linear GMM model',
involves estimating the~$\nmax$
coefficients~$\{\theta_j\}_{j\in[\nmax]}$
of the GMMs~$\Lambda_d\coloneqq\{\lambda_j\}_{j\in [\nmax]}$
from the Bloch representation of a qudit state~\cite{Watanabe2011},
\begin{equation}
    \label{eq:quditblochvector}
    \rho_\theta = \mathds{1}_d/d
    + \sum_{j=1}^{\nmax} \theta_j \lambda_j \, .
\end{equation}
The GMMs~$\Lambda_d$ are traceless,
Hermitian generalisations
of the qubit Pauli operators
(see Appendix~\ref{sec:appB}),
and the decomposition in Eq.~\eqref{eq:quditblochvector}
is a one-to-one map between the Hilbert space~$\mathcal{H}_d$ of~$\rho_\theta$
and the parameter space~$\Theta \subset \mathbb{R}^{\nmax}$.
Estimating~$\theta$ is thus equivalent to
qudit state tomography.
Note that we adopt the convention of normalising the GMMs
such that~$\qtrace(\lambda_j \lambda_k)
= \delta_{jk}$.\footnote{Some authors~\cite{Bertlmann2008}
instead normalise as~$\qtrace(\lambda_j \lambda_k) = 2\delta_{jk}$
to be consistent with the~$d=2$ case for Pauli matrices.
Our convention rescales
the parameter values and bounds,
but leaves their ratios invariant.}

It is useful to summarise the~$d=2$ case results here~\cite{BacuiLi};
the HCRB and NHCRB are 
\begin{equation}
\label{eq:PatricksEqs}
\begin{split}
    \CH = \RLD = &= 3-r^2+2r \, , \\
    \CNH = \CGM &= 5-r^2+4\sqrt{1-r^2} \, 
\end{split}
\end{equation}
with~$r^2=\sum_j \theta_j^2=\qtrace(\rho^2)-\nicefrac{1}{2}$
the squared length of the Bloch vector.
In this case,
the NHCRB is attained by measuring
informationally-complete (IC) POVMs,
simplifying to
symmetric informationally-complete (SIC)
POVMs (see Eq.~\eqref{eq:defSICPOVM} for definition) at~$r=0$~\cite{BacuiLi}.
It is straightforward\footnote{%
$\CH$ increases with~$r$ whereas~$\CNH$ decreases.}
to see from Eq.~\eqref{eq:PatricksEqs} that
the ratio~$\CNH/\CH$ is maximised at~$r=0$,
corresponding to estimating parameters of
the maximally-mixed state.
Thus, for the qubit tomography model,
the maximum enhancement~$\rat[\{\rho_\theta\}]$
is three,
and this ratio is attained when
estimating the three Pauli matrix coefficients
of the maximally-mixed qubit state~\cite{BacuiLi}.

An important simplifying feature of
the linear GMM model
is that the LUB operators
$\mathbb{X}=\{X_1, \dots, X_n\}^\top$
are uniquely fixed
to be the GMMs themselves, i.e.,
\begin{equation}
\label{eq:optXsol}
    X_j = \lambda_j \, .
\end{equation}
That there is exactly one feasible
solution for the LUB operators
significantly simplifies the evaluation
of the bounds.\footnote{%
This is not generally true;
in most cases there are multiple
feasible LUB operators~$\mathbb{X}$, which need
to be optimised over to calculate
the bounds,
see for example Refs.~\cite{Lorcan21,Conlon2023} and Appendix~\ref{sec:AppEstimatingFewGGMMs}.}
To see this unique feature of our model,
consider that the true (unknown) state is
\begin{equation*}
    \rho_\theta^* =\mathds{1}_d/d + \sum_{j=1}^{\nmax} \theta_j^* \lambda_j \, .
\end{equation*}
The LUB constraints (Eq.~\eqref{eq:origunbiaseq})
at~$\theta^*$ are then
\begin{equation}
\begin{gathered}
    \qtrace(\rho_\theta X_k){\big\vert}_{\theta=\theta^*} = \theta^*_k \, , \\
       \qtrace(\partial_j \rho_\theta X_k){\big\vert}_{\theta=\theta^*} = \delta_{jk}\, .
    \label{eq:unbiasednessnew}
\end{gathered}
\end{equation}
Writing~$X_j =\sum_k c_{jk} \lambda_k$,
where~$c_{jk}$ are unknown real numbers
(to preserve Hermicity of~$X_j$),
reduces Eq.~\eqref{eq:unbiasednessnew}
to
\begin{equation*}
    \sum_j c_{kj} \theta_j^* = \theta_k^* \quad \& \quad c_{kj} = \delta_{jk},
\end{equation*}
which immediately implies~$X_j = \lambda_j$,
as claimed.

The simplification
from having a unique solution
for feasible LUB operators,~$\mathbb{X}$,
extends more generally to tomography
in any ONB ($n=\nmax$). The model here,
\begin{equation}
\label{eq:ONBmodel}
    \rho_{\theta'} = \mathds{1}_d + \sum_{j=1}^{\nmax} \theta_j' \, \mathcal{B}_j \, ,
\end{equation}
is called the ONB model, and corresponds to
tomography in the basis~$\{\mathcal{B}_j\}_{j\in[\nmax]}$,
which is orthonormal,~$\qtrace(\mathcal{B}_j \mathcal{B}_k) = \delta_{jk}$.
Further,~$\B_j$ are traceless
and Hermitian and,
therefore,
can be written as
a real linear combinations of the GMMs~$\lambda_k$
through~$\B_j = \sum_k \eta_{jk} \lambda_k$,
so that
\begin{equation}
	\begin{bmatrix}
		\B_1 \\
		\vdots\\
		\B_n
	\end{bmatrix} = \begin{bmatrix}
		\eta_{11} &  \dots & \eta_{1n} \\
		\vdots &  \ddots & \vdots \\
		\eta_{n1} & \dots & \eta_{nn}
	\end{bmatrix} \begin{bmatrix}
		\lambda_1 \\ \vdots \\ \lambda_{n}
	\end{bmatrix} \, ,
\end{equation}
or in short,
with vectorised notation~$\mathcal{B}_d$
for the left hand side,
\begin{equation}
	\label{eq:vecnotation}
	\mathcal{B}_d = (\eta \otimes \mathds{1}_d) \Lambda_d \, .
\end{equation}
Now imposing the condition
that~$\mathcal{B}_d$
is orthonormal, i.e.,
\begin{equation}
\label{eq:Bjorthonormal}
	\qtrace(\mathcal{B}_j \mathcal{B}_k) = \delta_{jk} \, ,
\end{equation}
implies that the transformation
matrix~$\eta$ is orthogonal,
so that the inner-products between
its different rows (or different columns) is zero.
This is stated and proved as
Lemma~\ref{lemma:orthtransmat} below.
\begin{lemma}
\label{lemma:orthtransmat}
For transforming from the GMM basis
to any other ONB for
the space of
$d\times d$ traceless Hermitian matrices
over reals,
the transformation matrix~$\eta$ is real
and orthogonal, meaning
\begin{equation}
    \eta \eta^\top = \eta^\top \eta = \mathds{1}_n \, .
\end{equation}
\end{lemma}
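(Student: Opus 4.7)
The plan is to prove the lemma by a direct computation, using the orthonormality of both bases. The content of the claim is essentially that a change-of-basis matrix between two orthonormal bases of the same real inner product space is orthogonal, and the proof should amount to unpacking this fact in the language of the Hilbert--Schmidt inner product on traceless Hermitian matrices.

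First I would set up the ambient real vector space: the $n$-dimensional space of $d\times d$ traceless Hermitian matrices, equipped with the real inner product $\langle A, B\rangle \coloneqq \qtrace(AB)$. Both $\Lambda_d = \{\lambda_k\}_{k\in [n]}$ and $\mathcal{B}_d = \{\mathcal{B}_j\}_{j\in [n]}$ are ONBs of this space by hypothesis (Eq.~\eqref{eq:Bjorthonormal} and the GMM normalisation $\qtrace(\lambda_j \lambda_k) = \delta_{jk}$). Because $\mathcal{B}_j$ is traceless and Hermitian, the expansion coefficients $\eta_{jk}$ obtained from $\mathcal{B}_j = \sum_k \eta_{jk}\lambda_k$ are real: explicitly $\eta_{jk} = \qtrace(\mathcal{B}_j \lambda_k)$, which is real since the product of two Hermitian operators has a real trace.

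Next I would substitute the expansion $\mathcal{B}_j = \sum_l \eta_{jl}\lambda_l$ into the orthonormality condition of $\mathcal{B}_d$:
\begin{equation*}
    \delta_{jk} = \qtrace(\mathcal{B}_j \mathcal{B}_k) = \sum_{l,m} \eta_{jl}\eta_{km}\,\qtrace(\lambda_l \lambda_m).
\end{equation*}
Applying $\qtrace(\lambda_l \lambda_m) = \delta_{lm}$ collapses the double sum to $\sum_l \eta_{jl}\eta_{kl} = (\eta\eta^\top)_{jk}$. Hence $\eta\eta^\top = \mathds{1}_n$. Since $\eta$ is a square $n\times n$ real matrix, a one-sided inverse is a two-sided inverse, and therefore $\eta^\top\eta = \mathds{1}_n$ as well, completing the proof.

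There is no real obstacle here; the statement is essentially the standard fact that the matrix relating two orthonormal bases of a real inner product space is orthogonal, specialised to $(\text{traceless Hermitian }d\times d\text{ matrices},\,\qtrace(\cdot\,\cdot))$. The only subtlety worth being explicit about is the reality of $\eta$, which follows from Hermiticity of both bases; once that is noted, the rest is a one-line inner-product calculation.
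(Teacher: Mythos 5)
Your proof is correct and follows essentially the same route as the paper: expand $\mathcal{B}_j$ in the GMM basis, use the orthonormality of both bases to collapse the double sum to $(\eta\eta^\top)_{jk}=\delta_{jk}$, and then upgrade the one-sided identity to a two-sided one (the paper does this by noting $\eta$ is invertible since both sets are bases, which is the same observation as your square-matrix argument). Your explicit remark that $\eta_{jk}=\qtrace(\mathcal{B}_j\lambda_k)$ is real is a small but welcome addition that the paper leaves implicit.
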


\begin{proof}
    The orthonormality condition
    from Eq.~\eqref{eq:Bjorthonormal}
    in the GMM basis simplifies to
    \begin{align*}
        \qtrace(\B_j \B_k) &= \sum_{a,b=1}^{\nmax} \eta_{ja} \eta_{kb} \qtrace (\lambda_a \lambda_b)\\
        &=  \sum_{a=1}^{\nmax} \eta_{j a} \eta_{k a} =  (\eta \eta^\top)_{j k} = \delta_{jk} \, ,
    \end{align*}
    so that~$\eta \eta^\top = \mathds{1}_n$.
    As both~$\B_d$ and~$\Lambda_d$ are bases for the space,
    the matrix~$\eta$ is full-rank and therefore invertible.
    This makes~$\eta^\top$ the inverse of~$\eta$, so that
    $\eta \eta^\top = \eta^\top \eta = \mathds{1}_n$,
    as claimed.
\end{proof}
Clearly, the GMM model is an ONB model,
and Lemma~\ref{lemma:orthtransmat}
says any two ONB models are related
by a real unitary (orthogonal) transformation
of the parameter derivatives.
Conversely, this connects the ONB parameters~$\theta'$
(in Eq.~\eqref{eq:ONBmodel})
to the GMM parameters~$\theta$
(in Eq.~\eqref{eq:quditblochvector})
through
\begin{equation*}
    \theta_k = \sum_{j=1}^{\nmax} \eta_{jk} \,  \theta_j'  \implies \theta = \eta^\top \theta' \implies \theta' = \eta \,  \theta \, .
\end{equation*}
This relation enables a much stronger connection
between different ONB models.
In particular, any ONB model
can now be considered
as a reparameterised GMM model
with~$\theta' = \eta \, \theta$~\cite{Fujiwara1999,ABGG20,Lorcan21}.
From Sec.~V of Ref.~\cite{Fujiwara1999},
we know that such a reparameterisation of any model
is equivalent to the weighted version of the original model
with weight matrix~$W = \eta^\top \eta$.
For the reparameterised GMM model,
the orthogonality of~$\eta$ (Lemma~\ref{lemma:orthtransmat})
makes this weight~$W = \eta^\top \eta = \mathds{1}_n$,
thereby proving that the GMM model
and the ONB model are completely equivalent.
This means that both individual
(NHCRB, MICRB, GMCRB)
and collective
(SLD CRB, HCRB) precision bounds
are invariant to the particular choice of ONB
for tomography.
Therefore, the ratios~$\rat[\{\rho_\theta\}]$ and~$\ratMI[\{\rho_\theta\}]$
obtained below for the GMM model
also hold for the model of tomography in any other ONB.

\subsection{Ratio of \texorpdfstring{$d+1$}{d+1}: Maximally-mixed State}
\label{subsec:OdRatioMaxMixState}

We now investigate the parameter estimation of~$\theta$
($n=\nmax$)
for the maximally-mixed qudit
state~$\rho_\theta^*=\mathds{1}_d/d \eqqcolon \rho_\mathrm{m}$
in~$d$ dimensions
(corresponding to~$\theta^*=0$).
For~$\rho_\mathrm{m}$, we calculate
the SLD and RLD CRBs, the HCRB, the NHCRB,
and the GMCRB.
We also find
the SIC-POVM in $d$ dimensions
to be an
optimal individual measurement
that attains the NHCRB,
thus establishing~$\MI = \CNH$ for this
case~\cite{Nagaoka2005a}.
Choosing~$\rho_\theta^*=\rho_\mathrm{m}$ simplifies
the evaluation of various CRBs as
this choice of~$\rho_\theta^*$
commutes with every linear operator.

From their definitions,
(see Eqs.~\eqref{eq:SLDops} and~\eqref{eq:RLDops} in Appendix~\ref{sec:appSLDRLD}),
we find both the SLD and the RLD operators
to be simply
\begin{equation}
    \mathrm{L}^{\mathrm{SLD}}_j =\mathrm{L}^{\mathrm{RLD}}_j = d \, \lambda_j \, .
\end{equation}
The two resulting QFI matrices are equal and diagonal
(see Appendix~\ref{sec:appSLDRLD}),
\begin{equation}
\label{eq:JSLDRLDQFI}
    J^{(\mathrm{SLD})} = J^{(\mathrm{RLD})} = \begin{bmatrix}
        d & 0 & \dots & 0 \\
        0 & d & \dots & \vdots \\
        \vdots & \vdots & \ddots & 0\\
        0 & \hdots & 0 & d
    \end{bmatrix}_{n\times n} \, ,
\end{equation}
which is a sign that our model is ``locally classical''~\cite{ABGG20,Jun2}.
The two QFIs then yield their respective CRBs,
\begin{equation}
\label{eq:SLDRLDmodelmaxmix}
    \SLD = \RLD = \frac{\nmax}{d} = \frac{d^2-1}{d} \, .
\end{equation}

As expected of a locally classical model,
the HCRB coincides with the SLD CRB and RLD CRB~\cite{ABGG20,Jun2}.
In fact, any full parameter model~($n=\nmax$)
with linearly-independent parameter derivatives
constitutes a ``D-invariant'' model,
for which~$\CH=\RLD$ is known to hold~\cite{Jun2,Holevo2011}.
Nonetheless, and more directly,
note that the minimisation over~$\mathbb{X}$
in the definition in Eq.~\eqref{eq:Holevodefn1}
is trivial due to the uniqueness
discussed in Sec.~\ref{subsec:model}.
Thus,~$\left (\mathbb{Z}_\theta[\mathbb{X}]\right)_{jk}=\frac{1}{d} \qtrace(\lambda_j \lambda_k) = \delta_{jk}/d$, which is exactly~${J^{(\mathrm{SLD})}}^{-1}$.
Correspondingly,
\begin{equation*}
\begin{split}
\label{eq:CHeq1}
    \CH = \ctrace(\mathbb{Z}_\theta[\mathbb{X}]) &= \ctrace ({J^{(\mathrm{SLD})}}^{-1}) \\
    &= \frac{d^2-1}{d} = \SLD  \, .
\end{split}
\end{equation*}
We write this result as Lemma~\ref{lemma:Holevo},
and defer the detailed proof to Appendix~\ref{subsec:ProofHCRB}.
Note that, more generally,
$\SLD = \min_{\mathbb{X}} \{\ctrace \left (\mathbb{Z}_\theta [ \mathbb{X}]\right )\}$.
Additionally, when~$\mathbb{X}$
is uniquely fixed,
\begin{equation}
\label{eq:CHCSLD}
     \CH \geq \ctrace \left ( \mathbb{Z}_\theta [ \mathbb{X}] \right ) = \bigtrace \left ( \mathbb{S}_\theta \mathbb{X} \mathbb{X}^\top \right ) = \SLD \, .
\end{equation}

\begin{lemma}
\label{lemma:Holevo}
The HCRB for estimating~$\theta\approx 0$ from~$\rho_\mathrm{m}$ is
\begin{equation}
    \CH[\rho_\mathrm{m}] =\frac{d^2-1}{d} .
    \label{eq:Holevo}
\end{equation}
\end{lemma}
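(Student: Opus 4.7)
The plan is to reduce the HCRB's minimization to a direct computation by exploiting two facts already established earlier: (i) the uniqueness of the LUB operators for the linear GMM model, namely $X_j = \lambda_j$ (Eq.~\eqref{eq:optXsol}), and (ii) the maximally-mixed state commutes with every operator, so $\rho_\mathrm{m} X_j X_k = \frac{1}{d} \lambda_j \lambda_k$.

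First, I would invoke the definition of the HCRB in Eq.~\eqref{eq:Holevodefn1} at the point $\theta^* = 0$, where the correction term $-\ctrace(\theta \theta^\top)$ vanishes. Since the LUB constraints (Eq.~\eqref{eq:unbiasednessnew}) admit the unique solution $X_j = \lambda_j$ for the full linear GMM model, the minimization over $\mathbb{X}$ collapses and we need only evaluate $\ctrace(\mathbb{Z}_0[\mathbb{X}]) + \|\Im \mathbb{Z}_0[\mathbb{X}]\|_1$ at this unique $\mathbb{X}$.

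Next, I would compute the matrix entries directly. Using $\rho_\mathrm{m} = \mathds{1}_d/d$ and the orthonormality $\qtrace(\lambda_j \lambda_k) = \delta_{jk}$ fixed by our convention in Sec.~\ref{subsec:model}, we get
\begin{equation*}
    \mathbb{Z}_0[\mathbb{X}]_{jk} = \qtrace(\rho_\mathrm{m} \lambda_j \lambda_k) = \frac{1}{d} \delta_{jk},
\end{equation*}
so $\mathbb{Z}_0[\mathbb{X}] = \mathds{1}_n/d$ is real and diagonal. Consequently $\Im \mathbb{Z}_0[\mathbb{X}] = 0$, its trace norm vanishes, and $\ctrace(\mathbb{Z}_0[\mathbb{X}]) = n/d = (d^2-1)/d$ since $n = \nmax = d^2-1$ for the full model. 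This yields $\CH[\rho_\mathrm{m}] = (d^2-1)/d$, as claimed.

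I do not anticipate a serious obstacle here: the computation is elementary once uniqueness of $\mathbb{X}$ is used. The only mildly delicate point is justifying that the minimization in Eq.~\eqref{eq:Holevodefn1} truly collapses, i.e.\ that Hermitian $X_j$ satisfying the LUB conditions with $n = \nmax$ are literally forced to be the GMMs themselves. This is already argued in Sec.~\ref{subsec:model}, so it can be quoted. The same argument can be reused to cross-check via Eq.~\eqref{eq:CHCSLD}, which shows $\CH[\rho_\mathrm{m}] \geq \SLD[\rho_\mathrm{m}] = (d^2-1)/d$ from Eq.~\eqref{eq:SLDRLDmodelmaxmix}, giving a matching lower bound and closing the proof without needing the general ordering $\CH = \RLD$ for D-invariant models.
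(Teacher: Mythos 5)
Your proof is correct, but it takes a different route from the paper's. You work with the Holevo functional form in Eq.~\eqref{eq:Holevodefn1}: since the LUB conditions force $X_j=\lambda_j$ uniquely, the minimisation collapses, and the direct computation $\mathbb{Z}_0[\mathbb{X}]_{jk}=\qtrace(\rho_\mathrm{m}\lambda_j\lambda_k)=\delta_{jk}/d$ shows the matrix is real and diagonal, so the trace-norm term vanishes and $\CH[\rho_\mathrm{m}]=n/d=(d^2-1)/d$ follows immediately; your cross-check via $\CH\geq\SLD$ from Eq.~\eqref{eq:CHCSLD} is a nice independent confirmation. The paper instead works with the constrained formulation in Eq.~\eqref{eq:Holevodefn2}: after the same uniqueness step (its Part~(i)), it derives the lower bound $\bigtrace[\mathbb{S}_\theta\mathbb{L}]\geq\bigtrace[\mathbb{S}_\theta\mathbb{X}\mathbb{X}^\top]$ by tracing the operator inequality, and then exhibits the feasible choice $\mathbb{L}=\mathbb{X}\mathbb{X}^\top$ (checking that $\qtrace[\mathbb{S}_\theta\mathbb{L}]$ is real and symmetric) to show the bound is attained. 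Your version is shorter and avoids verifying feasibility of any $\mathbb{L}$, which is a genuine simplification for this lemma; what the paper's lower-bound machinery buys is reusability — the inequality in Eq.~\eqref{eq:lowerbound} (see Remark~\ref{re:HCRBlowerbound}) is exactly what is recycled in Sec.~\ref{subsec:ArbitStates} and Appendix~\ref{sec:arbitweight} to bound $\CH[\rho_\theta]$ for arbitrary states and weights, where $\mathbb{L}=\mathbb{X}\mathbb{X}^\top$ is no longer an optimal (or even valid) choice, so the structure of the paper's proof is chosen with those extensions in mind.
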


The NHCRB is not as trivial to compute
because
despite~$\mathbb{X}$ being uniquely fixed,
there is still a minimisation over~$n d\times n d$ matrix~$\mathbb{L}$
in Eq.~\eqref{eq:HNCRBdefn}~\cite{Lorcan21}.
Moreover, directly proving the optimality
of a candidate~$\mathbb{L}$ is
difficult---for this purpose we turn to the
SDP formulation
of the NHCRB~\cite{Lorcan21}
(see Appendix~\ref{sec:ProofLemmaNHCRBmaxmix} for definition).
The SDP approach offers a simple way to prove optimality
via duality:
if we can find a primal-feasible solution
and a dual-feasible solution such that
the primal objective value
equals the dual objective value,
then the solutions are optimal.
In Appendix~\ref{sec:ProofLemmaNHCRBmaxmix},
we present a pair of such solutions
and prove their optimality using this approach.
The optimal argument~$\mathbb{L}^*$
we find to the SDP is
    \begin{equation}
    \label{eq:optLmat}
        \mathbb{L}^*_{jk} = \left ( \frac{d+1}{d+2} \right ) \big ( \{ \lambda_j, \lambda_k\} + \delta_{jk} \mathds{1}_d \big ) 
    \end{equation}
where~$j,k \in [n]$ and~$\{\, , \, \}$
denotes the anti-commutator.
Directly computing~$\mathbb{T}\mathrm{r} [ \mathbb{S}_\theta \mathbb{L}^* ]$
then leads to the following lemma.

\begin{lemma}
\label{theorem:NHCRB}
    The NHCRB for estimating~$\theta_j \approx 0$
    from~$\rho_\mathrm{m}$ is
    \begin{equation}
        \CNH[\rho_\mathrm{m}] = \frac{(d^2-1)(d+1)}{d}.
    \end{equation}
\end{lemma}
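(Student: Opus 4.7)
The plan is to exploit the semi-definite program (SDP) formulation of the NHCRB from Ref.~\cite{Lorcan21} and prove optimality via strong duality. By the uniqueness of the LUB operators for the linear GMM model established in Sec.~\ref{subsec:model}, I may substitute $X_j = \lambda_j$ directly into Eq.~\eqref{eq:HNCRBdefn}, reducing the optimisation to a minimisation over the classical-quantum operator $\mathbb{L}$ subject to Hermitian symmetry in the classical index and the positivity constraint $\mathbb{L} \succcurlyeq \mathbb{X}\mathbb{X}^\top$. Since at $\theta^* = 0$ we have $\ctrace(\theta\theta^\top) = 0$, the NHCRB reduces to $\min \mathbb{T}\mathrm{r}[\mathbb{S}_\theta \mathbb{L}]$ with $\mathbb{S}_\theta = \mathds{1}_n \otimes \mathds{1}_d/d$.

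Next, I would adopt the author's ansatz
\begin{equation*}
\mathbb{L}^*_{jk} = \left(\frac{d+1}{d+2}\right)\bigl(\{\lambda_j,\lambda_k\} + \delta_{jk}\mathds{1}_d\bigr)
\end{equation*}
as a primal-feasible candidate. Verification has two parts: (i) Hermitian symmetry in the classical indices, which follows immediately from the symmetry of the anti-commutator; and (ii) the positivity $\mathbb{L}^* \succcurlyeq \mathbb{X}\mathbb{X}^\top$. For (ii) I would split the block operator as $\mathbb{L}^* - \mathbb{X}\mathbb{X}^\top = \tfrac{d+1}{d+2}(\mathds{1}_n \otimes \mathds{1}_d) - \tfrac{1}{d+2}(\mathbb{X}\mathbb{X}^\top) + \tfrac{d+1}{d+2}(\{\lambda_j,\lambda_k\}-2X_jX_k)_{jk}$, where the last piece is antisymmetric in $j,k$ and hence can be recast using commutators; I would then bound the negative contribution by invoking the Gell-Mann identities, notably $\sum_j \lambda_j^2 = \tfrac{d^2-1}{d}\mathds{1}_d$, and $\qtrace[\lambda_j^2] = 1$.

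I would then construct a dual-feasible point attaining the same objective. The dual SDP carries a Lagrange multiplier $Y \succcurlyeq 0$ coupled to the positivity constraint plus auxiliary variables enforcing unbiasedness and classical-index symmetry. Because $\rho_\mathrm{m}$ commutes with every operator, a highly symmetric dual ansatz built from GMMs and the identity should suffice; the factor $\tfrac{d+1}{d+2}$ in $\mathbb{L}^*$ is expected to emerge precisely from matching the primal and dual objectives. Strong duality (Slater's condition is easily checked by perturbing $\mathbb{L}^*$ by a small positive multiple of the identity) then certifies optimality.

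Finally, computing the objective is a direct calculation:
\begin{equation*}
\mathbb{T}\mathrm{r}[\mathbb{S}_\theta \mathbb{L}^*] = \frac{1}{d}\sum_{j=1}^{\nmax}\qtrace[\mathbb{L}^*_{jj}] = \frac{1}{d}\sum_{j=1}^{\nmax}\left(\frac{d+1}{d+2}\right)(2+d) = \frac{(d^2-1)(d+1)}{d},
\end{equation*}
using $\qtrace[\lambda_j^2]=1$ and $\qtrace[\mathds{1}_d]=d$. The main obstacle I anticipate is constructing a clean dual certificate whose objective matches the primal value, together with the positivity verification of $\mathbb{L}^* - \mathbb{X}\mathbb{X}^\top$, both of which require careful use of the Gell-Mann anti-commutator and structure-constant identities. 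Once these hurdles are cleared, the stated NHCRB value follows immediately, and the ratio $\CNH[\rho_\mathrm{m}]/\CH[\rho_\mathrm{m}] = d+1$ required by Theorem~\ref{th:ratdmms} is obtained by dividing by Eq.~\eqref{eq:Holevo}.
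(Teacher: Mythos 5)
Your overall strategy---fix $X_j=\lambda_j$ by the uniqueness of the LUB operators, take the ansatz $\mathbb{L}^*_{jk}=\tfrac{d+1}{d+2}(\{\lambda_j,\lambda_k\}+\delta_{jk}\mathds{1}_d)$ as a primal candidate, certify optimality with a matching dual point, and read off $\bigtrace[\mathbb{S}_\theta\mathbb{L}^*]=(d^2-1)(d+1)/d$---is exactly the route the paper takes, and your evaluation of the objective is correct. However, as written the proposal leaves the two steps that carry all the mathematical content unproved, and the one step you do sketch is flawed. Your decomposition of $\mathbb{L}^*-\mathbb{X}\mathbb{X}^\top$ does not balance: collecting the $\lambda_j\lambda_k$ terms in $\tfrac{d+1}{d+2}\mathds{1}_{nd}-\tfrac{1}{d+2}\mathbb{X}\mathbb{X}^\top+\tfrac{d+1}{d+2}(\{\lambda_j,\lambda_k\}-2\lambda_j\lambda_k)_{jk}$ gives coefficient $-\tfrac{2d+3}{d+2}$ rather than the required $-1$; the correct split is $(\mathbb{L}^*-\mathbb{X}\mathbb{X}^\top)_{jk}=\tfrac{d+1}{d+2}\bigl(\delta_{jk}\mathds{1}_d+\lambda_k\lambda_j-\tfrac{1}{d+1}\lambda_j\lambda_k\bigr)$. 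More importantly, ``bounding the negative contribution'' via $\sum_j\lambda_j^2=\tfrac{d^2-1}{d}\mathds{1}_d$ cannot work as a loose estimate, because $\mathbb{L}^*-\mathbb{X}\mathbb{X}^\top$ is singular (it has eigenvalue $0$), so any positivity argument must be tight. The paper's proof instead shows that the Hermitian block operator with blocks $\lambda_k\lambda_j-\tfrac{1}{d+1}\lambda_j\lambda_k$ squares to $\mathds{1}_{nd}$, so its eigenvalues are $\pm1$ and adding the identity gives a PSD operator; this hinges on the distinctly nontrivial identity $\sum_m\lambda_m\lambda_j\lambda_m=-\tfrac1d\lambda_j$ (and its corollary $\sum_m\lambda_m\lambda_j\lambda_k\lambda_m=\delta_{jk}\mathds{1}_d-\tfrac1d\lambda_j\lambda_k$), which your sketch never invokes.

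The dual side has the same problem: asserting that ``a highly symmetric dual ansatz built from GMMs and the identity should suffice'' is a hope, not a certificate. The paper's dual point has a specific structure (off-diagonal blocks $-\tfrac1d[\lambda_j,\lambda_k]$, border blocks $\tfrac{d+1}{d}\lambda_j$, corner block $-\tfrac{(d^2-1)(d+1)}{d^2}\mathds{1}_d$), and its feasibility is itself a nontrivial positivity proof via a Schur complement reduction followed by another ``squares-to-identity'' argument, again powered by the same GMM identities and the $\mathfrak{su}(d)$ structure constants. Incidentally, once you have feasible primal and dual points with equal objective values, weak duality already certifies optimality, so the appeal to Slater's condition is unnecessary; what is indispensable---and missing---is the explicit construction and feasibility verification of both points. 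As it stands the proposal is a correct plan with the hard parts deferred, and the one concrete technical claim it makes (the decomposition for PSD-ness) is algebraically incorrect.
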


Our first main result
now follows straightforwardly
from Lemmas~\ref{lemma:Holevo} and~\ref{theorem:NHCRB}.
\begin{theorem}
\label{th:ratdmms}
For ONB tomography of the
maximally-mixed qudit state~$\rho_\mathrm{m}$, the collective enhancement~$\CNH[\rho_\mathrm{m}] /\CH[\rho_\mathrm{m}] = d+1$.
\end{theorem}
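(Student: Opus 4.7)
The plan is to derive the ratio $d+1$ by computing the numerator $\CNH[\rho_\mathrm{m}]$ and the denominator $\CH[\rho_\mathrm{m}]$ separately, which are precisely the contents of Lemmas~\ref{lemma:Holevo} and~\ref{theorem:NHCRB}; the final division is trivial. The real work therefore lies in establishing those two lemmas, and throughout I would exploit the simplification noted in Sec.~\ref{subsec:model} that for the linear GMM model the locally-unbiased condition uniquely pins $X_j = \lambda_j$, so the outer minimisation over $\mathbb{X}$ in Eqs.~\eqref{eq:Holevodefn1} and~\eqref{eq:HNCRBdefn} disappears.

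For the HCRB, the matrix $\mathbb{Z}_0[\mathbb{X}]_{jk} = \qtrace(\rho_\mathrm{m} \lambda_j \lambda_k) = (1/d)\qtrace(\lambda_j \lambda_k) = \delta_{jk}/d$ follows from orthonormality of the GMMs, and its imaginary part vanishes because $\rho_\mathrm{m}$ is central, making $\qtrace(\rho_\mathrm{m}[\lambda_j,\lambda_k]) = (1/d)\qtrace[\lambda_j,\lambda_k] = 0$ by tracelessness. Summing the diagonal yields $\CH[\rho_\mathrm{m}] = \nmax/d$, in agreement with the general fact that full-parameter models are D-invariant and hence $\CH = \RLD$.

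The NHCRB is the hard part. Even with $\mathbb{X}$ fixed, one still has to minimise $\bigtrace[\mathbb{S}_0 \mathbb{L}]$ over partial-transpose-symmetric block operators $\mathbb{L}$ subject to $\mathbb{L}\succcurlyeq \mathbb{X}\mathbb{X}^\top$. My strategy is to exhibit SDP primal and dual feasible solutions whose objective values agree, invoking weak duality to certify optimality. Motivated by the $\mathrm{SU}(d)$ covariance of the setup at $\rho_\mathrm{m}$, I would try the ansatz $\mathbb{L}^*_{jk} = \alpha \bigl(\{\lambda_j,\lambda_k\} + \delta_{jk}\mathds{1}_d\bigr)$, which is manifestly partial-transpose symmetric and Hermitian, and then tune $\alpha$ so that $\mathbb{L}^* \succcurlyeq \mathbb{X}\mathbb{X}^\top$ is saturated on a suitable subspace. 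The trace identities satisfied by products of three GMMs should pin $\alpha = (d+1)/(d+2)$, after which a direct calculation of $\bigtrace[\mathbb{S}_0 \mathbb{L}^*]$ gives $\CNH[\rho_\mathrm{m}] = (d^2-1)(d+1)/d$.

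The principal obstacle will be constructing an explicit dual witness matching this objective value: I would parametrise the dual variable using the same $\mathrm{SU}(d)$ symmetry to collapse the matrix conditions to a small set of scalar equations, then solve and verify positive semi-definiteness. As an independent consistency check, since this model is locally classical we have $\CH=\SLD$, and the Fisher-symmetric identity $J_\mathrm{SLD} = (d+1) J$ satisfied by SIC POVMs on $\rho_\mathrm{m}$~\cite{Zhu1} predicts exactly $\MI/\SLD = d+1$; because the NHCRB is squeezed between $\CH$ and $\MI$, a target ratio of $d+1$ is what we should be aiming for, and would simultaneously establish that the SIC POVM is an individually-optimal measurement. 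Dividing $\CNH[\rho_\mathrm{m}]$ by $\CH[\rho_\mathrm{m}]$ then completes the proof.
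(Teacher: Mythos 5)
Your proposal follows essentially the same route as the paper: it exploits the unique locally-unbiased operators $X_j=\lambda_j$ to get $\CH[\rho_\mathrm{m}]=(d^2-1)/d$ directly, and it certifies $\CNH[\rho_\mathrm{m}]=(d^2-1)(d+1)/d$ with exactly the paper's primal ansatz $\mathbb{L}^*_{jk}=\tfrac{d+1}{d+2}\bigl(\{\lambda_j,\lambda_k\}+\delta_{jk}\mathds{1}_d\bigr)$ together with an SDP duality argument, including the same SIC-POVM consistency check. The only piece you leave as a plan rather than execute---the explicit dual witness and its feasibility, proved via the GMM product identities---is precisely what the paper supplies in Lemma~\ref{lemma:dualsolutionNHCRB} and Appendix~\ref{sec:ProofLemmaNHCRBmaxmix}, so your outline is sound and matches the published proof.
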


The HCRB is already known to be asymptotically attainable,
so we now prove the attainability or tightness of the NHCRB for our model.
Specifically, we show that the NHCRB in Lemma~\ref{theorem:NHCRB}
can be attained by measuring any rank-one
symmetric informationally-complete (SIC) POVM in~$d$ dimensions
(assuming one exists).
The SIC POVM is a set of~$d^2$
measurement operators~$\{\Pi_l\}_{l\in [d^2]}$
that form a POVM and are
completely symmetric between themselves under
the trace inner product,
\begin{equation}
\label{eq:defSICPOVM}
    \qtrace \left ( \Pi_{l_1} \Pi_{l_2} \right ) = \frac{1}{d^2(d+1)} \;  \quad \forall \; l_1 \neq l_2, \, l_1, l_2 \in [d^2] .
\end{equation}

To prove that measuring SIC POVMs
attains the NHCRB,
we show that the measured probabilities~$\qtrace(\rho_\mathrm{m} \Pi_l)$
directly yield a variance
equal to~$\CNH$ from Lemma~\ref{theorem:NHCRB},
establishing~$\MI=\CNH$ in this case.
The CFI matrix~$J_{jk}$ ($j,k\in [n]$),
which in the multi-parameter case is given by
\begin{equation}
    \label{eq:QFImatrix11}
    J_{jk} \left [\{\Pi_l\}_{l\in[m]}\right] = \sum_{l=1}^{m} \frac{ \qtrace  \left [ \partial_{j} \rho_\theta \Pi_l \right ] \qtrace  \left [ \partial_{k} \rho_\theta \Pi_l \right ]}{\qtrace \left [ \rho_\theta \Pi_l \right ] } \, ,
\end{equation}
simplifies to (see Lemma~\ref{th:CFI}
and proof in Appendix~\ref{subsec:ProofCFI})
\begin{equation}
\begin{split}
\label{eq:CFINagaoka}
    J_{jk} &=d^2 \sum_{l=1}^{d^2} \qtrace [ \lambda_j \Pi_l] \qtrace [ \lambda_k \Pi_l] \\
    &= \delta_{jk} \; \frac{d}{d+1}
\end{split}
\end{equation}
in this case, so that Eq.~\eqref{eq:CFICRB} then leads to
\begin{equation}
\label{eq:SICCFIoptimal}
    \ctrace(J^{-1}) = \frac{(d^2-1)(d+1)}{d} = \CNH \, .
\end{equation}
From Eq.~\eqref{eq:CMIdef}, we then have
$\MI \leq \ctrace(J^{-1}) = \CNH \leq \MI$
with the last inequality from Eq.~\eqref{eq:orderingprecisionbounds}.
This proves~$\MI=\CNH$, meaning that the
ultimate individual precision
is attained for this model
by measuring SIC POVMs.
Notably, any rank-one SIC POVM
in~$d$ dimensions,
irrespective of its orientation,
constitutes an optimal individual measurement
in this scenario.

An alternative proof of this attainability
can be furnished using the GM inequality
for individual measurements~\cite{Gill2000},
\begin{equation}
\label{eq:GMinequality}
    \ctrace[ {J^{(\mathrm{SLD})}}^{-1} J] \leq d-1 \, .
\end{equation}
For the SLD QFI~$J^\mathrm{(SLD)}$
in Eq.~\eqref{eq:JSLDRLDQFI},
the inequality in Eq.~\eqref{eq:GMinequality}
implies that
\begin{equation}
    \ctrace(J^{-1})\geq (d^2-1)(d+1)/d \, ,
\end{equation}
which is saturated by the SIC POVM
CFI~$J$ from Eq.~\eqref{eq:CFINagaoka},
as seen in Eq.~\eqref{eq:SICCFIoptimal}.

\subsection{Ratio of~\texorpdfstring{$d+2$}{d+2}: Extension to Arbitrary States}
\label{subsec:ArbitStates}

In this section, we extend Lemmas~\ref{lemma:Holevo}
\&~\ref{theorem:NHCRB} and Theorem~\ref{th:ratdmms}
for~$\rho_\mathrm{m}$ to arbitrary qudit states~$\rho_\theta\neq\rho_\mathrm{m}$.
Such a qudit state can still be written
as in Eq.~\eqref{eq:quditblochvector},
but now the true parameter values~$\theta^*$ are non-zero
and~$\theta\approx\theta^*$.
In this case, we show that
$\CH[\rho_\theta] \geq \CH[\rho_\mathrm{m}]  - \sum_{j\in [n]} {\theta_j^*}^2$
and that~$\CNH[\rho_\theta] \leq \CNH[\rho_\mathrm{m}] - \sum_{j\in [n]} {\theta_j^*}^2$,
which, we then show, imply
\begin{equation*}
    \frac{\CNH[\rho_\theta]}{\CH[\rho_\theta] } \leq d+2 \, .
\end{equation*}
This establishes the maximum collective
quantum enhancement~$\rat[\{\rho_\theta\}]$
for the linear GMM model
to be~$d+2$.
We also argue that
the optimal individual measurements
are now IC POVMs, supported
by numerical results in Appendix~\ref{sec:appOptICPOVMs}.

The HCRB and the NHCRB
involve an additional~$-\ctrace(\theta^* {\theta^*}^\top) = -\sum_{j\in[n]} {\theta^*_j}^2$
term for non-zero~$\theta^*$
(Eqs.~\eqref{eq:HNCRBdefn},~\eqref{eq:Holevodefn1}~\&~\eqref{eq:Holevodefn2}).
For the HCRB, it is simple to see
from Eq.~\ref{eq:lowerbound}
in Appendix~\ref{subsec:ProofHCRB}
that
$\bigtrace(\mathbb{S}_\theta \mathbb{X}\mathbb{X}^\top)$
still lower-bounds~$\min_{\mathbb{L},\mathbb{X}} \mathbb{T}\mathrm{r}(\mathbb{S}_\theta \mathbb{L})$
(see also Remark~\ref{re:HCRBlowerbound} in Appendix~\ref{subsec:ProofHCRB})
so that
\begin{equation}
\label{eq:CHarbitstateineq}
    \CH[\rho_\theta] \geq \mathbb{T}\mathrm{r} ( \mathbb{S}_\theta \mathbb{X}\mathbb{X}^\top) \; -\sum_{j\in[n]} {\theta^*_j}^2 \,
\end{equation}
despite~$\mathbb{L} =\mathbb{X}\mathbb{X}^\top$ not
being the optimal solution anymore.
Note also that the purity of the true state~$\rho_{\theta}^*$ is
\begin{equation*}
    \mathrm{P}(\rho_\theta^*) = \qtrace((\rho_\theta^*)^2) = 1/d + \sum_{j\in [n]} {\theta_j^*}^2.
\end{equation*}
By explicit calculation, we find $\bigtrace ( \mathbb{S}_\theta \mathbb{X}\mathbb{X}^\top) = \nicefrac{d^2-1}{d} = \CH[\rho_\mathrm{m}]$,
and hence,
\begin{equation}
\label{eq:lowerboundHCRB}
    \CH[\rho_\theta] \geq \frac{d^2-1}{d} -\sum_{j\in[n]} {\theta^*_j}^2 = d - \mathrm{P}(\rho_\theta^*) \, .
\end{equation}
From Lemma~\ref{lemma:Holevo}, we know that
this inequality is saturated by the maximally-mixed
state~$\rho_\mathrm{m}$, which has purity~$1/d$.
Figure~\ref{fig:HolveoBoundLowerBound} (a) depicts how $d - \mathrm{P}(\rho_\theta^*)$
compares with the actual HCRB for qutrit states.

\begin{figure*}
    \centering
    \includegraphics[width=\textwidth]{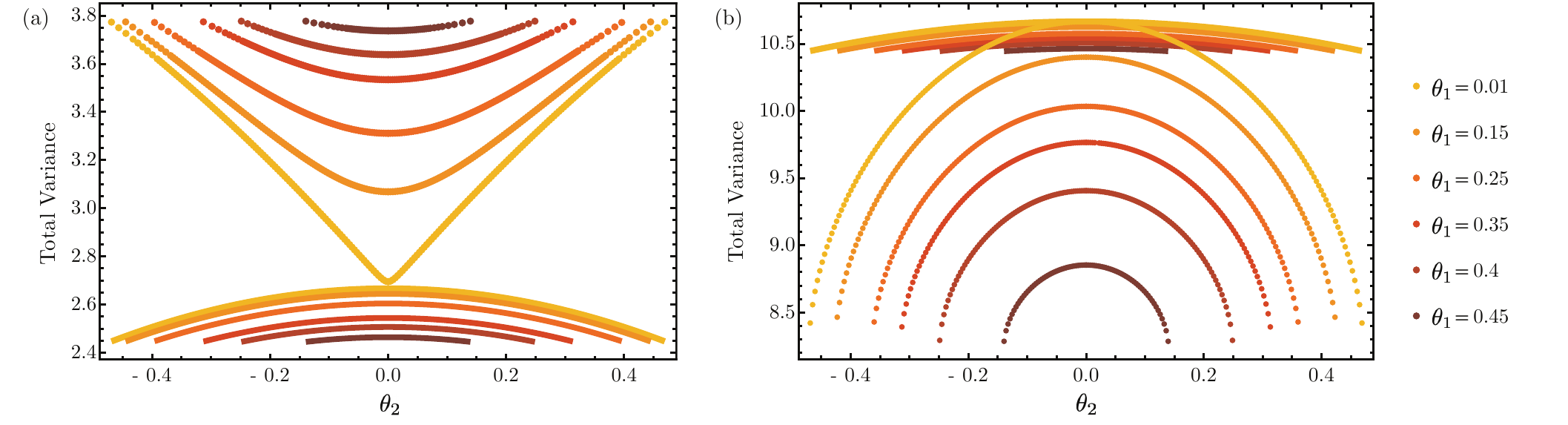}
    \caption{Comparison of the HCRB and the NHCRB
    to their lower and upper bounds, respectively.
    (a) HCRB and its lower bound~$d-\mathrm{P}(\rho_{\theta})$
    (from Eq.~\eqref{eq:lowerboundHCRB}).
    The lower solid parabolic curves show
    the lower bound and
    the upper triangular curves
    (beginning and ending with dots)
    show the numerically-computed HCRB.
    (b) NHCRB and its upper bound~$d^2+d-1-\mathrm{P}(\rho_{\theta})$
    (from Eq.~\eqref{eq:upperboundNHCRB}).
    The lower dotted curves
    show the numerically-computed
    NHCRB and the upper solid curves
    show the upper bound.
    The state chosen in both
    (a) and (b) is a mixed
    qutrit~$\rho_{\theta}=\mathds{1}_d/d + \theta_1 \lambda_2 + \theta_2 \lambda_4$.}
    \label{fig:HolveoBoundLowerBound}
\end{figure*}

For the NHCRB,
our key insight is that the optimal argument
$\mathbb{L}^*_{jk} = \nicefrac{d+1}{d+2} \left ( \{ \lambda_j, \lambda_k\} + \delta_{jk} \mathds{1}_d \right )$
from Lemma~\ref{lemma:primalsolutionNHCRB} in Sec.~\ref{subsec:OdRatioMaxMixState}
is still feasible:
$\mathbb{L}^*$
satisfies the constraints $\mathbb{L}_{jk} = \mathbb{L}_{kj}$ Hermitian and $\mathbb{L}  \succcurlyeq \mathbb{X} \mathbb{X}^\top$,
which are all independent of~$\rho_\theta$.
However,~$\mathbb{L}^*$ is not optimal
so~$\bigtrace[\mathbb{S}_\theta \mathbb{L}^*]$
only upper-bounds~$\min_{\mathbb{L}}\bigtrace [\mathbb{S}_\theta \mathbb{L}]$ in Eq.~\eqref{eq:HNCRBdefn}.
Again, we explicitly calculate~$\bigtrace[\mathbb{S}_\theta \mathbb{L}^*]$
to find
\begin{equation}
\label{eq:calSthetatrace}
    \bigtrace[\mathbb{S}_\theta \mathbb{L}^*] = \frac{(d^2-1)(d+1)}{d} = \CNH[\rho_\mathrm{m}]
\end{equation}
so that we can upper-bound the NHCRB as
\begin{equation}
\label{eq:upperboundNHCRB}
\begin{gathered}
    \CNH[\rho_\theta] \leq \frac{(d^2-1)(d+1)}{d} - \sum_j {\theta^*_j}^2\\
    = d^2+d-1-\mathrm{P}(\rho_{\theta^*}) \, .
\end{gathered}
\end{equation}
From Lemma~\ref{theorem:NHCRB},
we see that the inequality in Eq.~\eqref{eq:upperboundNHCRB}
is saturated
by the maximally-mixed state~$\rho_\mathrm{m}$.
Figure~\ref{fig:HolveoBoundLowerBound} (b)
depicts how $d^2+d-1-\mathrm{P}(\rho_\theta^*)$ compares
with the actual NHCRB for qutrit states.

\begin{theorem}
\label{th:ratdarbstate}
    For ONB tomography of arbitrary~$d$-dimensional qudit state~$\rho_\theta$, the maximum collective enhancement~$\rat[\{\rho_\theta\}]  \leq d+2$. 
\end{theorem}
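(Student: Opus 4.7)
The plan is to combine the two bounds that have just been established in this subsection: the HCRB lower bound $\CH[\rho_\theta] \geq d - \mathrm{P}(\rho_\theta^*)$ from Eq.~\eqref{eq:lowerboundHCRB} and the NHCRB upper bound $\CNH[\rho_\theta] \leq d^2 + d - 1 - \mathrm{P}(\rho_\theta^*)$ from Eq.~\eqref{eq:upperboundNHCRB}. Dividing them produces an upper bound on the ratio that depends only on the (known) purity $\mathrm{P} \equiv \mathrm{P}(\rho_\theta^*) \in [1/d,1]$ of the true state,
\begin{equation*}
    \frac{\CNH[\rho_\theta]}{\CH[\rho_\theta]} \leq \frac{d^2+d-1-\mathrm{P}}{d-\mathrm{P}} \eqqcolon f(\mathrm{P}) \, ,
\end{equation*}
so the whole problem reduces to maximising the one-variable function $f$ over the physical purity interval.

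Next I would differentiate: the quotient rule gives $f'(\mathrm{P}) = (d^2-1)/(d-\mathrm{P})^2$, which is strictly positive for all $d \geq 2$ and all $\mathrm{P} < d$. Hence $f$ is monotonically increasing on $[1/d,1]$, and its supremum on this interval is attained at the pure-state limit $\mathrm{P} = 1$, where
\begin{equation*}
    f(1) = \frac{d^2+d-2}{d-1} = \frac{(d-1)(d+2)}{d-1} = d+2 \, .
\end{equation*}
Taking the maximum over $\theta \in \Theta$ in the definition of $\rat[\{\rho_\theta\}]$ (Eq.~\eqref{eq:ratiodef}) then yields the claimed bound.

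The main obstacle is not the short arithmetic above but the derivation of the two feeding inequalities, already carried out in this section. That work hinged on two structural observations carried over from the maximally-mixed case: (i) the unique LUB choice $X_j = \lambda_j$ still gives a valid HCRB lower bound $\bigtrace(\mathbb{S}_\theta \mathbb{X}\mathbb{X}^\top)$ at arbitrary $\rho_\theta$ (cf.\ Remark~\ref{re:HCRBlowerbound} in Appendix~\ref{subsec:ProofHCRB}), even though $\mathbb{L} = \mathbb{X}\mathbb{X}^\top$ is suboptimal off $\rho_\mathrm{m}$, and (ii) the primal certificate $\mathbb{L}^*_{jk} = \tfrac{d+1}{d+2}(\{\lambda_j,\lambda_k\}+\delta_{jk}\mathds{1}_d)$ optimal at $\rho_\mathrm{m}$ is $\theta$-independent and therefore remains primal-feasible (hence yields a valid upper bound) for every $\rho_\theta$. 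A subtle point worth flagging is that both inequalities are saturated at $\rho_\mathrm{m}$ (jointly giving the sharp ratio $d+1$ there), but generically they are \emph{not} simultaneously tight away from $\rho_\mathrm{m}$; that slack is precisely what produces the $d+2$ here instead of the conjectured sharp $d+1$, and closing the gap would require a state-dependent tightening of either certificate --- the numerics in Fig.~\ref{fig:estimateallarbitstate} suggest this is possible but I do not see an immediate analytic route to it.
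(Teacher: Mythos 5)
Your proposal is correct and follows essentially the same route as the paper: it combines the HCRB lower bound of Eq.~\eqref{eq:lowerboundHCRB} with the NHCRB upper bound of Eq.~\eqref{eq:upperboundNHCRB} (the latter obtained from the $\theta$-independent feasibility of $\mathbb{L}^*$), and then maximises the resulting purity-dependent ratio over $\mathrm{P}\in[1/d,1]$ to obtain $d+2$ at the pure-state limit. Your added monotonicity check of $f(\mathrm{P})$ and the remark on the non-simultaneous tightness of the two bounds match the paper's own discussion of why the bound is loose relative to the expected $d+1$.
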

\begin{proof}
Combining the lower bound for the HCRB in Eq.~\eqref{eq:lowerboundHCRB}
with the upper bound for the NHCRB in Eq.~\eqref{eq:upperboundNHCRB},
we get
\begin{equation}
\label{eq:ratioeq2}
    \frac{\CNH[\rho_\theta]}{\CH[\rho_\theta]} \leq \frac{d^2+d-1-\mathrm{P}(\rho_\theta)}{d-\mathrm{P}(\rho_\theta)} .
\end{equation}
Then, using $1/d\leq \mathrm{P}(\rho_\theta)\leq 1$,
we find the maximum of the right hand side of Eq.~\eqref{eq:ratioeq2}
to be~$d+2$, attained when~$\mathrm{P}(\rho_\theta)=1$,
i.e., when~$\rho_\theta$ is pure.
\end{proof}

Theorem~\ref{th:ratdarbstate} establishes a
loose upper bound that we expect
to never be attained.
The ratio between the two bounds
is actually maximised
by maximally-mixed states
and minimised by pure states,
as seen in Fig.~\ref{fig:estimateallarbitstate}.
This discrepancy is due to
the HCRB lower bound
in Eq.~\eqref{eq:CHarbitstateineq}
being a decreasing function of purity,
whereas numerical results
(Fig.~\ref{fig:estimateallarbitstate} and
weighted version in Appendix~\ref{sec:arbitweight})
and analytical results for the qubit case (Eq.~\eqref{eq:PatricksEqs})
show the HCRB to be an increasing function of purity
for this model.
Figure~\ref{fig:HolveoBoundLowerBound}
(a) and (b)
numerically compare
the lower and upper bounds
from Eqs.~\eqref{eq:lowerboundHCRB}
and~\eqref{eq:upperboundNHCRB}
with the true HCRB and NHCRB,
respectively,
for~$d=3$ and~$\rho_\theta$ close to~$\rho_\mathrm{m}$.
It is evident
that the lower and upper bounds
are valid for all~$\rho_\theta$
but saturated only for~$\rho_\mathrm{m}$.
It is also clear from
Fig.~\ref{fig:HolveoBoundLowerBound} (a)
that the HCRB lower bound in Eq.~\eqref{eq:lowerboundHCRB}
is a decreasing function of purity
whereas
the true HCRB is an increasing function of purity.
And, in fact, using~$\CH[\rho_\theta]\geq \CH[\rho_\mathrm{m}]$
instead of Eq.~\eqref{eq:CHarbitstateineq}
in the proof of Theorem~\ref{th:ratdarbstate}
leads to an upper bound of~$d+1$ on the ratio.
This is a tight bound and is saturated
by the model considered in Sec.~\ref{subsec:OdRatioMaxMixState}.
Nonetheless, Theorems~\ref{th:ratdmms}
and~\ref{th:ratdarbstate} together establish
that for the
linear GMM model,
and any other ONB model,~$\rat[\{\rho_\theta\}] \leq d+2$.

We extend the upper bounds
on the ratio
proved in Secs.~\ref{subsec:OdRatioMaxMixState}
and~\ref{subsec:ArbitStates}
to arbitrary, full-rank,
parameter-independent weight matrices~$W$
in Appendix~\ref{sec:arbitweight}.
Arbitrary weight matrices
correspond to reparameterisations
of the model~\cite{Lorcan21,Fujiwara1999,ABGG20},
i.e., estimating parameters
that are not coefficients of
any particular ONB.
Our results in Appendix~\ref{sec:arbitweight}
prove that for the maximally-mixed state~$\rho_\mathrm{m}$,
and for estimating any~$\nmax$
independent parameters locally,
the maximum collective enhancement is at most~$d+1$.
We also numerically show that
when estimating from any other state~$\rho_\theta$,
the ratio is smaller
than when estimating from~$\rho_\mathrm{m}$
with the same weight~$W$.
This suggests the maximum
enhancement
from collective measurements
over individual measurements
in any local tomography problem is~$d+1$,
i.e.,
\begin{equation}
      \rat_{\nmax} = \max_{\nmax\text{-parameter models}} \; \rat \left [\{\rho_\theta \} \right ] \leq d+1 \, ,
\end{equation}
and this upper bound is saturated by the model
studied in Sec.~\ref{subsec:OdRatioMaxMixState}.

Finally, the optimal separable measurements,
assuming all the~$\theta_j$ to be independent,
are IC-POVMs.
This is because to estimate~$d^2-1$ independent
parameters, one needs~$d^2-1$ independent probabilities
which can only arise from
measuring a POVM with
at least~$d^2$ linearly independent elements.
Having any more than~$d^2$ POVM elements
is also redundant, as the extra elements
cannot be linearly independent from the
first~$d^2$ elements.
In Appendix~\ref{sec:appOptICPOVMs},
we depict the transition
from SIC POVMs to IC POVMs
as the purity of~$\rho_\theta$
increases from~$1/d$ for the
maximally-mixed state to~$1$
for pure states.
This result is in line with previous findings
that IC POVMs are optimal for state estimation and
tomography~\cite{DAriano2000,DAriano2004}.

\subsection{Related Model: Estimating \texorpdfstring{$n < \nmax$}{n < nmax} GMMs}
\label{sec:EstimatingFewGGMMs}

In Secs.~\ref{subsec:OdRatioMaxMixState}
and~\ref{subsec:ArbitStates},
we studied the full-parameter linear GMM model
for the cases~$\theta^*=0$
and~$\theta^*\neq 0$.
In this section, we study the GMM model
with~$n<\nmax$ parameters,
with the remaining~$\nmax-n$ parameters
set to zero,
i.e., estimating~$\{\theta_j\}_{j\in [n]} \in\Theta$ from
\begin{equation}
    \rho_\theta = \mathds{1}_d/d 
    + \sum_{j\in [n]} \theta_j \lambda_j \, .
\end{equation}
The case with~$\{\theta_j\}_{j\in[\nmax]\setminus[n]} \neq 0$
is also interesting
but we do not study that here.
Moreover, we only provide analytic results
for the true state~$\rho_\theta^* = \rho_\mathrm{m}$.
This is because,
numerically, we see that
when the parameters not estimated
are set to zero, the ratio is maximised by~$\rho_\mathrm{m}$.
Although we have not specified
which~$n$ GMMs we choose to estimate,
and despite the NHCRB
(but not the HCRB)
depending on this choice,\footnote{See
Table~\ref{tab:Ratios}
in Appendix~\ref{sec:AppEstimatingFewGGMMs}}
the bounds we provide
on the ratio are independent of
this choice.

The differentiating factor for this model is
that~$X_j=\lambda_j$ are not the
sole possible LUB operators.
Nonetheless,
for the HCRB, this choice
is still optimal,
and the HCRB is~$n/d$,
the same form as Lemma~\ref{lemma:Holevo}.
For the NHCRB,
we find that linearly modifying~$\Lambda_d$
and bilinearly modifying~$\mathbb{L}_{jk}^*$
from Lemma~\ref{lemma:primalsolutionNHCRB}
gives us an upper bound of~$(d+1)n/d$,
also the same form as Lemma~\ref{theorem:NHCRB}.

For estimating parameters of~$\rho_\mathrm{m}$,
we find the SLD-optimal~$\mathbb{X}$,
i.e.,~$\arg \min_{\mathbb{X}} \bigtrace [ \mathbb{S}_\theta \mathbb{X} \mathbb{X}^\top]$,
to be~$X_j=\lambda_j$ for~$j\in[n]$
(see Appendix~\ref{sec:AppEstimatingFewGGMMs}).
This establishes that~$\SLD=n/d$.
Moreover, for this choice of~$\mathbb{X}$,
$\mathbb{L}=\mathbb{X}\mathbb{X}^\top$
satisfies all the HCRB constraints
(Eq.~\eqref{eq:Holevodefn2})
and yields
\begin{equation*}
\bigtrace[\mathbb{S}_\theta \mathbb{L} ]=
\bigtrace [ \mathbb{S}_\theta \mathbb{X} \mathbb{X}^\top] = \frac{n}{d} \, ,
\end{equation*}
so that~$\CH[\rho_\mathrm{m}]=\SLD[\rho_\mathrm{m}]=n/d$, as claimed.

For the NHCRB, the~$X_j$
are linear combinations of the~$d^2-1$ GMMs,
and can be written as

\begin{equation}
    \mathbb{X} = \mathbb{C}^{(2)} \Lambda_d \, ,
\end{equation}
where~$\mathbb{C}^{(2)}$ is a real matrix.
Then, $\mathbb{X}\mathbb{X}^\top = \mathbb{C}^{(2)} \Lambda \Lambda^\top {\mathbb{C}^{(2)}}^\top$.
We similarly modify~$\mathbb{L}^{*}$
from Eq.~\eqref{eq:optLmat}
to define~$\mathbb{L}^{**} \coloneqq  \mathbb{C}^{(2)} \mathbb{L}^* {\mathbb{C}^{(2)}}^\top$,
which ensures~$\mathbb{L}^{**} \succcurlyeq \mathbb{X}\mathbb{X}^\top$
because of~$\mathbb{L}^{*} \succcurlyeq \Lambda \Lambda^\top$
from Lemma~\ref{lemma:primalsolutionNHCRB}
in Appendix~\ref{sec:ProofLemmaNHCRBmaxmix}.
The NHCRB in Eq.~\eqref{eq:HNCRBdefn}
then becomes a minimisation
over~$\mathbb{L}$ and~$\mathbb{C}^{(2)}$.
However, if we choose our ansatz~$\mathbb{L}^{**}$
for~$\mathbb{L}$
and minimise only over~$\mathbb{C}^{(2)}$,
we should get a larger value,
i.e.,
\begin{equation}
\label{eq:inequality2}
    \begin{gathered}
        \CNH[\rho_\mathrm{m}] = \min_{\mathbb{L}, \mathbb{C}^{(2)}} \Big \{ \mathbb{T}\mathrm{r} [ \mathbb{S}_\theta \mathbb{L}] \, \vert \, \mathbb{L}_{jk} = \mathbb{L}_{kj}  \, , \\
         \quad \quad \mathbb{L}_{jk} \, \mathrm{Hermitian} \, , \\
        \qquad \quad \quad \; \; \mathbb{L} \succcurlyeq  \mathbb{C}^{(2)} \; \Lambda \Lambda^\top  {\mathbb{C}^{(2)}}^\top \Big \} \, \\
        \leq \min_{\mathbb{C}^{(2)}} \Big \{ \mathbb{T}\mathrm{r} [ \mathbb{S}_\theta \mathbb{L}^{**} ] \; \vert \;
        \mathbb{L}^{**} =  \mathbb{C}^{(2)} \; \Lambda \Lambda^\top  {\mathbb{C}^{(2)}}^\top \Big \}\\
        = \frac{(d+1)n}{d} \, .
    \end{gathered}
\end{equation}
The inequality in Eq.~\eqref{eq:inequality2} holds
because the second minimisation
is performed over a subset of the set
over which the first minimisation is performed
and the last equality follows after some algebra
(see Appendix~\ref{sec:AppEstimatingFewGGMMs}).
Combining this upper bound on the NHCRB
with~$\CH = n/d$,
we get the following theorem.
\begin{theorem}
\label{th:estfewGMMs}
    For estimating fewer-than-$\nmax$ coefficients
    of GMMs of the maximally-mixed qudit state~$\rho_\mathrm{m}$,
    the collective enhancement~$\CNH[\rho_\mathrm{m}]/\CH[\rho_\mathrm{m}]\leq d+1$.
\end{theorem}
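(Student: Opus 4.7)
My plan is to bound the HCRB from below and the NHCRB from above, both evaluated at the maximally-mixed state $\rho_\mathrm{m}$, and conclude by taking their ratio. Unlike the full-parameter case of Sec.~\ref{subsec:OdRatioMaxMixState}, here the local unbiasedness constraints in Eq.~\eqref{eq:origunbiaseq} no longer uniquely fix the Hermitian LUB operators $X_j$: the $n$ estimated GMMs span only an $n$-dimensional subspace, so arbitrary components along the orthogonal GMMs $\{\lambda_k\}_{k \notin [n]}$ remain available. I would therefore parameterise a general Hermitian LUB operator as $X_j = \sum_{k=1}^{\nmax} c_{jk} \lambda_k$ with $c_{jk} \in \mathbb{R}$, compactly $\mathbb{X} = \mathbb{C}^{(2)} \Lambda_d$, where the unbiasedness equations at $\rho_\mathrm{m}$ pin down $c_{jk} = \delta_{jk}$ for $k \in [n]$ and leave the remaining entries free.

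For the HCRB, substituting this ansatz into $\mathbb{Z}_\theta[\mathbb{X}]_{jk} = \qtrace(\rho_\mathrm{m} X_j X_k) = \frac{1}{d} \sum_l c_{jl} c_{kl}$, the trace $\ctrace(\mathbb{Z}_\theta[\mathbb{X}])$ is a positive quadratic in the free entries and is visibly minimised by zeroing them out, giving $X_j = \lambda_j$ and $\SLD[\rho_\mathrm{m}] = n/d$. Since $\rho_\mathrm{m}$ commutes with every operator, the imaginary parts vanish and $\mathbb{L} = \mathbb{X}\mathbb{X}^\top$ is automatically primal-feasible for the HCRB SDP in Eq.~\eqref{eq:Holevodefn2}, delivering $\CH[\rho_\mathrm{m}] = n/d$.

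For the NHCRB, I would reuse the primal argument $\mathbb{L}^*_{jk} = \tfrac{d+1}{d+2}\bigl(\{\lambda_j, \lambda_k\} + \delta_{jk}\mathds{1}_d\bigr)$ from Eq.~\eqref{eq:optLmat}, which is optimal in the full-parameter case and notably satisfies the block-positivity $\mathbb{L}^* \succcurlyeq \Lambda_d \Lambda_d^\top$ needed in the dual argument of Lemma~\ref{theorem:NHCRB}. Defining $\mathbb{L}^{**} = \mathbb{C}^{(2)} \mathbb{L}^* (\mathbb{C}^{(2)})^\top$, conjugation preserves semidefinite inequalities so $\mathbb{L}^{**} \succcurlyeq \mathbb{X}\mathbb{X}^\top$, and $\mathbb{L}^{**}$ inherits Hermiticity and classical-index symmetry from $\mathbb{L}^*$. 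Hence $(\mathbb{L}^{**}, \mathbb{X})$ is primal-feasible for the NHCRB SDP in Eq.~\eqref{eq:HNCRBdefn}, yielding the upper bound $\CNH[\rho_\mathrm{m}] \leq \min_{\mathbb{C}^{(2)}} \bigtrace[\mathbb{S}_\theta \mathbb{L}^{**}]$. Minimising this surrogate over the free entries of $\mathbb{C}^{(2)}$ should yield $(d+1)n/d$, and dividing by $\CH[\rho_\mathrm{m}] = n/d$ delivers the advertised bound of $d+1$.

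The main obstacle is the closed-form evaluation and minimisation of $\bigtrace[\mathbb{S}_\mathrm{m}\, \mathbb{C}^{(2)} \mathbb{L}^* (\mathbb{C}^{(2)})^\top]$. The anti-commutator structure inside $\mathbb{L}^*$ makes the algebra more intricate than for the HCRB: expanding generates cross terms that a priori couple the fixed diagonal entries to the free off-diagonal components, with coefficients dictated by $\qtrace(\lambda_a \{\lambda_b, \lambda_c\} \lambda_e)$ and hence by the $\mathfrak{su}(d)$ structure constants. I would expect the symmetry of $\rho_\mathrm{m}$ under the orthogonal group of basis rotations (Lemma~\ref{lemma:orthtransmat}) to force the free-entry dependence into a positive quadratic form whose minimum is attained at zero, reducing the problem to the restricted $n$-parameter trace computation and recovering $(d+1)n/d$ by direct analogy with Lemma~\ref{theorem:NHCRB}.
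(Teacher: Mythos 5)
Your proposal follows essentially the same route as the paper's proof (Sec.~\ref{sec:EstimatingFewGGMMs} with details in Appendix~\ref{sec:AppEstimatingFewGGMMs}): parameterise the no-longer-unique LUB operators as $\mathbb{X} = \mathbb{C}^{(2)}\Lambda_d$, obtain $\CH[\rho_\mathrm{m}] = n/d$, and restrict the NHCRB minimisation to the conjugated ansatz $\mathbb{L}^{**} = \mathbb{C}^{(2)}\mathbb{L}^*{\mathbb{C}^{(2)}}^\top$, whose feasibility follows from $\mathbb{L}^*\succcurlyeq\Lambda_d\Lambda_d^\top$, giving the upper bound $(d+1)n/d$ and hence the ratio $d+1$. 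The ``main obstacle'' you flag dissolves immediately: since $\mathbb{S}_\theta=\mathds{1}_n\otimes\mathds{1}_d/d$, the objective only involves $\qtrace[\rho_\mathrm{m}\mathbb{L}^*_{ab}]=\tfrac{d+1}{d}\delta_{ab}$ (because $\qtrace\{\lambda_a,\lambda_b\}=2\delta_{ab}$), so no $\mathfrak{su}(d)$ structure constants ever enter, the free-entry dependence is the manifest sum of squares $\tfrac{d+1}{d}\bigl(n+\sum_{j,m}c_{jm}^2\bigr)$, and its minimum $(d+1)n/d$ at zero free entries is exactly the paper's computation.
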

Numerically,
we see this ratio actually depends on~$n$:
as $n$ increases from~2 to~$d^2-1$,
the ratio increases from~2 to~$d+1$.
Table~\ref{tab:Ratios} in Appendix~\ref{sec:AppEstimatingFewGGMMs}
depicts this increase,
though not monotonic,
by listing the two bounds and their ratios for~$d=3$.
As proven here, the HCRB only depends on~$n$ and~$d$.
Interestingly,
when the true values of the parameters
not being estimated are non-zero,
the maximally-mixed state is no longer
the ratio-maximising state.
However, the~$n$ bound in
Sec.~\ref{subsec:ration}
and numerical results in
Sec.~\ref{sec:numericalresults}
suggest that the maximum enhancement~$\rat_n$
cannot decrease with increasing~$n$ at fixed~$d$,
meaning~$\rat_n\leq\rat_{n+1}$.
And we have analysed the~$n=\nmax$
case in depth, so we expect
that for any~$n < \nmax$ model,
the same bound
of~$d+1$ should hold,
i.e.,~$\rat_n \leq \rat_{\nmax} \leq d+1$.

\subsection{True Ratio for GMM Model using MICRB}
\label{sec:trueratGMMmodel}

To establish an upper bound to
the true collective enhancement~$\ratMI[\{\rho_\theta\}]$,
we use the MICRB,
which quantifies the optimal precision
attainable with separable measurements~\cite{Nagaoka2005b,Hayashi1997,HayashiOuyang2023}.
The MICRB is formulated in Ref.~\cite{HayashiOuyang2023}
as a conic optimisation over an~$(n+1)d \times (n+1)d$-sized
operator~$X$ that lies
in a separable cone, $\mathcal{S}_{\text{SEP}}$,
within the space of positive operators of this size
(see definition in Eq.~\eqref{eq:CMIdef2}
in Appendix~\ref{sec:appMICRB}).
Additionally, the operator~$X$ satisfies linear trace constraints
arising from the POVM condition
and from the local unbiasedness condition
(Eqs.~\eqref{eq:const1},~\eqref{eq:const2}
in Appendix~\ref{sec:appMICRB}).
Subject to these constraints,
the MICRB minimises
the objective~$\bigtrace[(W \otimes \rho_\theta) X]$,
where the typically~$n\times n$ weight matrix~$W$
is modified to be~$(n+1) \times (n+1)$-sized
by padding its first row and column with zeros
($W_{1j} = W_{j1} = 0$ for $j\in[n+1]$).

It is important to note that
one of the main contributions
of Ref.~\cite{HayashiOuyang2023}
was showing that various
precision bounds, including the NHCRB
and the HCRB, can all be recast as
the minimisation of this same objective
but subject to different inclusion constraints
on operator~$X$.
In particular, the NHCRB was shown to
require the positive semi-definiteness of~$X$,
whereas the MICRB was shown to require~$X$
to also be separable,
over an~$(n+1)$-dimensional real space
and a~$d$-dimensional complex space.
Our key observation is to
find that for the linear GMM model,
at the maximally-mixed state,
a solution to the MICRB is given by
\begin{equation}
    X_\mathrm{sol} \coloneqq \begin{pmatrix}
        \mathds{1}_d & \lambda_1 & \lambda_2 & \dots & \lambda_n \\
        \lambda_1 & \mathbb{L}^*_{11} & \mathbb{L}^*_{12}    & \dots  & \mathbb{L}^*_{1n}   \\
        \lambda_2 & \mathbb{L}^*_{21} & \mathbb{L}^*_{22}    & \dots  & \mathbb{L}^*_{2n}   \\
        \vdots & \vdots & \vdots & \ddots & \vdots \\
        \lambda_n & \mathbb{L}^*_{n1} & \mathbb{L}^*_{n2}  &  \dots  & \mathbb{L}^*_{nn}
    \end{pmatrix} \, ,
\end{equation}
where~$\mathbb{L}^*_{jk}$
is the NHCRB-optimal argument
presented in Eq.~\eqref{eq:optLmat} of
Sec.~\ref{subsec:OdRatioMaxMixState}.
This connection between the NHCRB solution
and the MICRB solution is not surprising,
given that two bounds
minimise the same objective subject to different constraints,
and are equal for the linear GMM model
at the maximally-mixed state.

In Appendix~\ref{sec:appMICRB},
we rigorously prove that~$X_\mathrm{sol}$
satisfies all the MICRB constraints,
including the POVM constraint,
the LUB constraints
and the separability constraint
(Lemma~\ref{lemma:proofSeparableCone}).
The MICRB objective value
for~$X_\mathrm{sol}$ equals
the NHCRB,~$\CNH[\rho_\mathrm{m}] = n(d+1)/d$,
as expected.
Further, we show that~$X_\mathrm{sol}$
is a feasible candidate
(but not necessarily optimal)
for the MICRB for arbitrary qudit states,
thereby leading to the upper-bound,
\begin{equation}
    \MI[\rho_\theta] \leq n \left(\frac{d+1}{d}\right) - \sum_j \theta_j^2 \, ,
\end{equation}
which was derived for
the NHCRB in Eq.~\eqref{eq:upperboundNHCRB}
of Sec.~\ref{subsec:ArbitStates}.
Combining this upper bound
for the MICRB
with the lower bound~$\CH[\rho_\theta] \geq n/d - \sum_j \theta_j^2$,
we prove in Theorem~\ref{th:ratdMIarbstate}
of Appendix~\ref{sec:appMICRB}
that the true maximum ratio~$\ratMI[\{\rho_\theta\}]$
in the linear GMM model
is upper-bounded by~$d+2$.
\begin{theorem}
        For ONB tomography of arbitrary~$d$-dimensional qudit state~$\rho_\theta$, the maximum true collective enhancement~$\ratMI[\{\rho_\theta\}]
        \leq d+2$.
\end{theorem}
\addtocounter{theorem}{-1}
This theorem
reproduces the upper bound
derived on the maximum ratio~$\rat[\{\rho_\theta\}]$
in Theorem~\ref{th:ratdarbstate}
and similarly extends to tomography in any ONB.
However,
as with Theorem~\ref{th:ratdarbstate},
we do not expect the upper-bound of~$d+2$
to be attainable.

\section*{Data availability}
The data that support
the findings of this study
are available from the
corresponding author upon reasonable request.

\section*{Code availability}
The codes that support
the findings of this study
are available from the
corresponding author upon reasonable request.


\bibliography{mybibtex} 

\section*{Acknowledgements}
This research was funded by the Australian
Research Council Centre of Excellence CE170100012.
This research was also supported by A*STAR C230917010, Emerging Technology and A*STAR C230917004, Quantum Sensing.
JS is partially supported by
JSPS KAKENHI Grant Numbers JP21K11749, JP24K14816.
We are grateful to the National Computational Infrastructure (NCI)
for their super-computing resources that
were used for numerical investigations.

\section*{Author Contributions}
L.O.C. conceived the project.
A.D., L.O.C., J.S., S.K.Y. and S.M.A. developed the theory
and designed the numerical experiments.
A.D. and L.O.C. wrote the proofs
and A.D. performed
the numerical simulations.
A.D., L.O.C. and S.M.A. wrote the manuscript.
All authors contributed to discussions
regarding the results in this paper.
S.M.A. and P.K.L. supervised the project.

\section*{Competing Interests}
All authors declare no
financial or non-financial
competing interests.

\appendix

\section{Gell-Mann Matrices and Tomography via Parameter Estimation}
\label{sec:appB}

The main advantage of the Bloch representation
for qubits,
\begin{equation}
    \rho = \frac{1}{2} \left ( \mathds{1}_2 + \sum_{j\in\{x,y,z\}} \theta_j \sigma_j \right ) \, ,
    \label{eq:blochvectordecomp}
\end{equation}
where~$\mathcal{P}\coloneqq \{\sigma_x, \sigma_y, \sigma_z\}$
is the Pauli basis,
is the convenience of working with
the real-valued Bloch vector~$\theta\coloneqq \{\theta_x, \theta_y, \theta_z\}\in\mathbb{R}^{3}$
instead of the equivalent
complex operator~$\rho\in\mathbb{C}^{2\times 2}$.
The same convenience is availed in 
three dimensions by replacing~$\mathcal{P}$
with the GMMs,
$\Lambda_3\coloneqq\{\lambda_j\}_{j=1}^8$.
These constitute an orthonormal basis
over the reals
for the space of~$3\times 3$
traceless Hermitian matrices
and generalise the Pauli matrices to three dimensions.
So for a qutrit state~$\rho$, we can write
\begin{equation}
    \label{eq:qutritblochvector}
    \rho = \mathds{1}_3/3 + \sum_{j=1}^8 \theta_j \lambda_j
\end{equation}
with
\begin{equation}
\begin{gathered}
    \lambda_1 = \frac{1}{\sqrt{2}} \begin{bmatrix}0 & 1 & 0 \\ 1 & 0 & 0 \\ 0 & 0 & 0\end{bmatrix}, \quad
    \lambda_2 =\frac{1}{\sqrt{2}} \begin{bmatrix} 0 & -i & 0 \\ i & 0 & 0 \\ 0 & 0 & 0 \end{bmatrix}, \; \\
    \lambda_3 = \frac{1}{\sqrt{2}}\begin{bmatrix} 1 & 0 & 0 \\ 0 & -1 & 0 \\ 0 & 0 & 0 \end{bmatrix}, \quad
    \lambda_4 =\frac{1}{\sqrt{2}} \begin{bmatrix} 0 & 0 & 1 \\ 0 & 0 & 0 \\ 1 & 0 & 0 \end{bmatrix},\; \\
    \lambda_5 = \frac{1}{\sqrt{2}}\begin{bmatrix} 0 & 0 & -i \\ 0 & 0 & 0 \\ i & 0 & 0 \end{bmatrix}, \quad
    \lambda_6 = \frac{1}{\sqrt{2}} \begin{bmatrix} 0 & 0 & 0 \\ 0 & 0 & 1 \\ 0 & 1 & 0 \end{bmatrix},\\
    \lambda_7 =\frac{1}{\sqrt{2}} \begin{bmatrix} 0 & 0 & 0 \\ 0 & 0 & -i \\ 0 & i & 0 \end{bmatrix}, \;
    \lambda_8 =\frac{1}{\sqrt{6}} \begin{bmatrix} 1 & 0 & 0 \\ 0 & 1 & 0 \\ 0 & 0 & -2 \end{bmatrix}  .
\end{gathered}
\label{eq:explicitGMMs}
\end{equation}
Note that
we choose a different convention in Eq.~\eqref{eq:qutritblochvector}
from that in
Eq.~\eqref{eq:blochvectordecomp},
and we set~$\Tr(\lambda_j \lambda_k) = \delta_{jk}$
instead of the standard~$2 \delta_{jk}$ in Eq.~\eqref{eq:explicitGMMs} for convenience.

The eight GMMs in Eq.~\eqref{eq:explicitGMMs}
for~$d=3$
can be extended to~$d>3$
leading to the generalised GMMs~$\Lambda_d$
(that we shall also refer to as GMMs).
In fact,~$\Lambda_d$ consists of~$\binom{d}{2}$ real, symmetric matrices that generalise~$\sigma_x$,
$\binom{d}{2}$ imaginary, skew-symmetric matrices that generalise~$\sigma_y$,
and~$d-1$ real, diagonal matrices that generalise~$\sigma_z$.
In total, we have
$d^2-1$ matrices,~$\{\lambda_j\}_{j=1}^{d^2-1}$,
in~$\Lambda_d$, and,
for arbitrary qudit density matrix~$\rho$ in~$d$ dimensions,
we can write
\begin{equation}
    \label{eq:quditblochvectorapp}
    \rho = \mathds{1}_d/d + \sum_{j=1}^{d^2-1} \theta_j \lambda_j
\end{equation}
to get a one-to-one map between~$\rho\leftrightarrow\theta$.
Resultantly, a qudit state estimation or tomography problem
can be treated as a parameter estimation problem
with~$\theta$ as the unknown parameter.
Note that our convention
in Eq.~\eqref{eq:quditblochvectorapp} is different from that used in some existing literature~\cite{Bertlmann2008}
but
is equivalent up to a re-scaling of the parameters,
which leaves the ratio unchanged.

\section{Proof of Generalised Gell-Mann Matrix Identities}
\label{sec:appIdentities}

In this appendix, we prove the following identities
for~$\Lambda_d=\{\lambda_j\}_{j=1}^{d^2-1}$.
\begin{enumerate}
    \item $\sum_{j\in[n]} \lambda_j^2 = \frac{d^2-1}{d} \mathds{1}_d$
    \item $\sum_{m\in[n]} \lambda_m \lambda_j \lambda_m = -\frac1{d} \lambda_j$
    \item $\sum_{j,k\in[n]} \lambda_j \lambda_k \lambda_j \lambda_k = - \frac{d^2-1}{d^2} \mathds{1}_d$
\end{enumerate}

\textit{Proof of Identity 1.}
It is known that~$\sum_{j\in[n]}\lambda_j^2$ is a group invariant called the Casimir operator~\cite{GMMCasimir}. Thus,~$\sum_{j\in[n]}\lambda_j^2 = C \; \mathds{1}_d$ for some constant~$C$. We use the trace
condition~$\Tr(\lambda_j \lambda_k) = \delta_{jk}$
\begin{equation}
    \qtrace ( \sum_{j\in [n]} \lambda_j^2) = \sum_{j\in [n]} \qtrace (\lambda_j^2) = n = C d,
\end{equation}
which implies~$C=n/d$, proving
$$\sum_{j\in[n]}\lambda_j^2 = \frac{d^2-1}{d} \mathds{1}_d.$$
\qed

\textit{Proof of Identity 2.}
For this proof, we use some properties of GMMs from Ref.~\cite{GMprops}
(see page 17, Sec. \textbf{4.6} \textit{Gell-Mann Matrices in $n$-dimensions}).
Writing the product~$\lambda_m\lambda_j$ in terms of the commutator
and the anti-commutator,
we get
\begin{equation}
\label{eq:gmproduct}
\begin{split}
2 \lambda_m \lambda_j &= \{ \lambda_m, \lambda_j\} + [\lambda_m, \lambda_j]\\
&= \frac2{d} \delta_{mj} \mathds{1}_d + \sum_c d_{mjc} \lambda_c + \sum_c i f_{mjc} \lambda_c \, ,
\end{split}
\end{equation}
where~$d_{jkl} = \qtrace(\{\lambda_j, \lambda_k\}\lambda_l)$
and~$f_{jkl} = -i \qtrace([\lambda_j, \lambda_k]\lambda_l)$
are the fully-symmetric and fully-antisymmetric
structure constants\footnote{%
Owing to different normalisation conventions,
our $d_{abc}$ and $f_{abc}$ are scaled up
by a factor of~$\sqrt{2}$ compared to Ref.~\cite{GMprops}.}
of~$\mathfrak{su}(d)$~\cite{GMMCasimir,GMprops}.
Repeating the process after right-multiplying Eq.~\eqref{eq:gmproduct}
by~$\lambda_m$,
\begin{equation}
\begin{split}
2 &\lambda_m \lambda_j \lambda_m  = \frac{2{\delta_{mj}}}{d}  \lambda_m + \sum_c \left ( d_{mjc} + i f_{mjc} \right ) \lambda_c \lambda_m \\
&=  \frac{2 {\delta_{mj}}}{d}  \lambda_m + \frac12 \sum_c \left ( d_{mjc} + i f_{mjc} \right ) \Bigg ( \frac{2{\delta_{cm}}}{d}  \mathds{1}_d  \\
&\quad \quad \quad \quad \quad \quad \quad \quad  +\sum_p (d_{cmp} + i f_{cmp}) \lambda_p \Bigg )\\
&=  \frac2{d} \delta_{mj} \lambda_m + \frac1{d} (d_{mjm} + i f_{mjm}) \mathds{1}_d \\
& \quad \quad \quad + \frac12 \sum_{c,p} (d_{mjc} + i f_{mjc}) (d_{cmp} + i f_{cmp}) \lambda_p
\end{split}
\end{equation}

Due to anti-symmetry, $f_{mjm} = 0$,
and 
\begin{gather}
    \begin{aligned}
        (d_{mjc} &+ i f_{mjc}) (d_{cmp} + i f_{cmp})\\
        &=\big [ (d_{mjc} d_{cmp} - f_{mjc} f_{cmp})\\
        &+ i (d_{mjc} f_{cmp} + f_{mjc} d_{cmp} )\big ] .
    \end{aligned}
\end{gather}

Thus,
\begin{gather}
        \sum_m \lambda_m \lambda_j \lambda_m = \frac{1}{d} \lambda_j + \frac{1}{2 d} \underbrace{\sum_m d_{mjm}}_{\Circled{1}}  \mathds{1}_d\\
        + \frac14 \sum_p \Bigg [ \underbrace{\sum_{m,c} d_{mjc} d_{cmp}}_{\Circled{2}} - \underbrace{\sum_{m,c} f_{mjc} f_{cmp}}_{\Circled{3}}\\
        + i \Bigg ( \underbrace{\sum_{m,c} d_{mjc} f_{cmp}}_{\Circled{4}} + 
 \underbrace{\sum_{m,c} f_{mjc} d_{cmp}}_{\Circled{5}} \Bigg )\Bigg ] \lambda_p .
\end{gather}

Below we evaluate
terms~$\Circled{1},\Circled{2},\Circled{3},\Circled{4}$
and~$\Circled{5}$ one by one,
using properties of the GMMs
listed in Ref.~\cite{GMprops}.

\begin{gather}
\begin{aligned}
\Circled{1}: \; \sum_m d_{mjm} &= \sum_m d_{jmm} \\
= \frac{1}{4} \qtrace \left  [ \lambda_j \sum_m \{ \lambda_m, \lambda_m\}  \right ] 
&= \frac12 \qtrace \left [ \lambda_j \frac{d^2-1}{d} \mathds{1}_d \right ]\\
= \frac{d^2-1}{2d} \qtrace ( \lambda_j) &= 0
\end{aligned}
\end{gather}

\begin{equation}
\begin{split}
\Circled{2}: \; \sum_{m,c} d_{mjc} d_{cmp} = \sum_{m,c} d_{jmc} d_{pmc} = 2 \frac{d^2-4}{d} \delta_{jp}
\end{split}
\end{equation}

\begin{equation}
\begin{split}
\Circled{3}: \; \sum_{m,c} f_{mjc} f_{cmp} = \sum_{m,c} f_{jmc} f_{pmc} = 2 d \; \delta_{jp}
\end{split}
\end{equation}

The Jacobi identity~\cite{GMprops} reads
$$\sum_k d_{abk} f_{kcl} + d_{bck} f_{kal} + d_{cak} f_{kbl} = 0.$$
If we set $a=c$ and then sum over~$a$, we get
$$2 \sum_{a,k} d_{bak} f_{lak} = \sum_k \left ( \sum_a d_{aak} \right ) f_{blk}.$$
Using this to simplify~$\Circled{4}$,
we get
\begin{equation}
\begin{split}
\Circled{4}: \; \sum_{m,c} d_{mjc} f_{cmp} &= - \sum_{m,c} d_{jmc} f_{pmc} \\
&= -\frac12 \sum_{m,c} d_{mmc} f_{jpc}
\end{split}
\end{equation}
and, similarly,
for~$\Circled{5}$ we get
\begin{equation}
\begin{split}
\Circled{5}: \; &\sum_{m,c} f_{mjc} d_{cmp}
= - \sum_{m,c} d_{pmc} f_{jmc} \\
= &-\frac12 \sum_{m,c} d_{mmc} f_{pjc}
= \frac12 \sum_{m,c} d_{mmc} f_{jpc} \; ,
\end{split}
\end{equation}
so that
\begin{equation}
  \Circled{4}+\Circled{5}:\;  \sum_{m,c} (d_{mjc} f_{cmp}
  + f_{mjc} d_{cmp} ) = 0 \, .
\end{equation}

Combining the expressions for~$\Circled{1},\Circled{2}$ and~$\Circled{3}$,
we get
\begin{equation}
\begin{split}
\sum_m \lambda_m \lambda_j \lambda_m &= \frac1{d} \lambda_j + \frac12 \sum_p \left ( \frac{d^2-4}{d} -  d \right ) \delta_{jp} \lambda_p\\
&= \frac1{d} \lambda_j - \frac{2}{d} \lambda_j = -\frac{1}{d} \lambda_j \, ,
\end{split}
\end{equation}
thus proving Identity~2.
\qed

\begin{corollary}
\label{cor:corrolary}
    By linearity, any~$d \times d$ traceless Hermitian matrix $A$ satisfies
    \begin{equation}
        \sum_m \lambda_m A \lambda_m = -\frac{1}{d} A.
    \end{equation}
\end{corollary}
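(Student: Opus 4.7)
The plan is to use the fact that the generalised Gell-Mann matrices $\Lambda_d = \{\lambda_j\}_{j=1}^{d^2-1}$ form an orthonormal basis (with respect to the Hilbert-Schmidt inner product $\qtrace(\lambda_j \lambda_k) = \delta_{jk}$) for the real vector space of $d \times d$ traceless Hermitian matrices. Therefore any such $A$ admits a unique expansion $A = \sum_j a_j \lambda_j$ with real coefficients $a_j = \qtrace(A \lambda_j)$.

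The key step is then to substitute this expansion into the left-hand side and use linearity together with Identity 2 already proved in Appendix~\ref{sec:appIdentities}. Explicitly,
\begin{equation}
\sum_m \lambda_m A \lambda_m = \sum_m \lambda_m \Bigl( \sum_j a_j \lambda_j \Bigr) \lambda_m = \sum_j a_j \sum_m \lambda_m \lambda_j \lambda_m,
\end{equation}
and applying $\sum_m \lambda_m \lambda_j \lambda_m = -\tfrac{1}{d}\lambda_j$ term-by-term yields
\begin{equation}
\sum_m \lambda_m A \lambda_m = -\frac{1}{d} \sum_j a_j \lambda_j = -\frac{1}{d} A,
\end{equation}
as claimed.

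There is no genuine obstacle here; the whole content of the corollary is bilinearity of the map $A \mapsto \sum_m \lambda_m A \lambda_m$ combined with the already-established Identity 2. The only minor point worth mentioning is that the hypothesis that $A$ is traceless and Hermitian is exactly what guarantees that $A$ lies in the real span of $\Lambda_d$, so that the expansion $A = \sum_j a_j \lambda_j$ exists with real coefficients and Identity 2 can be invoked without modification. If one dropped the tracelessness assumption, an extra term proportional to the identity would appear in the expansion, and one would need to separately evaluate $\sum_m \lambda_m \mathds{1}_d \lambda_m = \sum_m \lambda_m^2 = \tfrac{d^2-1}{d}\mathds{1}_d$ via Identity 1, which does not match the $-\tfrac{1}{d}$ prefactor.
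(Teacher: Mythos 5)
Your proof is correct and is exactly the argument the paper intends: the corollary is stated "by linearity" as an immediate consequence of Identity~2, obtained by expanding the traceless Hermitian matrix $A$ in the GMM basis with real coefficients and applying Identity~2 term by term, just as you do. Your closing remark about why tracelessness is needed (and what the identity component would contribute via Identity~1) is a nice, accurate aside but not required.
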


\begin{corollary}
\label{cor:corollary3}
For any~$j,k\in[n]$,
\begin{equation}
    \sum_m \lambda_m \lambda_j \lambda_k \lambda_m = \delta_{jk} \mathds{1}_d - \frac{1}{d} \lambda_j \lambda_k .
\end{equation}
\end{corollary}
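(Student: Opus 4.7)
The plan is to prove Corollary~\ref{cor:corollary3} by splitting the product $\lambda_j \lambda_k$ into a scalar multiple of the identity plus a traceless remainder, and then applying the two previously established identities separately to each piece.

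First I would observe that while $\lambda_j \lambda_k$ is not traceless in general, we have $\qtrace(\lambda_j \lambda_k) = \delta_{jk}$. So one can decompose uniquely as
\begin{equation*}
\lambda_j \lambda_k = \frac{\delta_{jk}}{d}\, \mathds{1}_d + A_{jk}, \qquad A_{jk} \coloneqq \lambda_j \lambda_k - \frac{\delta_{jk}}{d}\, \mathds{1}_d,
\end{equation*}
where $A_{jk}$ is traceless (but not necessarily Hermitian when $j \neq k$). I would note that Identity~2 extends by linearity over complex coefficients to any linear combination of the $\lambda_l$; i.e., Corollary~\ref{cor:corrolary} holds for any traceless matrix $A$ expressible as $\sum_l c_l \lambda_l$, not merely the Hermitian ones. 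Since $\{\lambda_l\}_{l\in[n]}$ together with $\mathds{1}_d$ span all $d \times d$ complex matrices, $A_{jk}$ admits such an expansion and therefore
\begin{equation*}
\sum_{m\in[n]} \lambda_m A_{jk} \lambda_m = -\frac{1}{d}\, A_{jk}.
\end{equation*}

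Next I would handle the identity piece directly using Identity~1:
\begin{equation*}
\sum_{m\in[n]} \lambda_m \left(\frac{\delta_{jk}}{d}\, \mathds{1}_d \right) \lambda_m = \frac{\delta_{jk}}{d} \sum_{m\in[n]} \lambda_m^2 = \frac{\delta_{jk}(d^2-1)}{d^2}\, \mathds{1}_d.
\end{equation*}
Adding the two contributions, substituting back the definition of $A_{jk}$, and collecting the coefficient of $\mathds{1}_d$, the $\frac{\delta_{jk}(d^2-1)}{d^2}$ and the $+\frac{\delta_{jk}}{d^2}$ from $-\frac{1}{d}\cdot(-\frac{\delta_{jk}}{d})$ combine into $\delta_{jk}$, yielding
\begin{equation*}
\sum_{m\in[n]} \lambda_m \lambda_j \lambda_k \lambda_m = \delta_{jk}\, \mathds{1}_d - \frac{1}{d}\, \lambda_j \lambda_k,
\end{equation*}
as claimed.

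There is no serious obstacle in this argument; the only subtlety worth flagging explicitly is justifying the extension of Identity~2 from the traceless Hermitian basis $\Lambda_d$ to the not-necessarily-Hermitian traceless operator $A_{jk}$. That is a one-line linearity observation, but it is the pivot on which the whole proof rests: once granted, Identities~1 and~2 dispatch the identity and traceless parts of $\lambda_j \lambda_k$ respectively, and the $d$-dependent coefficients assemble neatly into $\delta_{jk}$, which is what makes the final formula so clean.
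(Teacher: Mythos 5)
Your proof is correct and follows essentially the same route as the paper's: the paper also reduces the claim to Identity~2 via linearity (its Corollary~\ref{cor:corrolary}) applied to the traceless part of $\lambda_j\lambda_k$ and uses Identity~1 for the identity part, merely splitting into the cases $j\neq k$ and $j=k$ rather than treating them uniformly as you do. Your explicit remark that the extension of Identity~2 must cover complex (non-Hermitian) traceless combinations of the $\lambda_l$ is a valid and slightly more careful handling of a point the paper glosses over when it calls $\lambda_j\lambda_k$ ``traceless Hermitian'' for $j\neq k$.
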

\begin{proof}
    To see this, start with assuming~$j\neq k$.
    From Eq.~\eqref{eq:gmproduct},
    this implies $\lambda_j \lambda_k$ is traceless Hermitian,
    and thus from Corollary~\ref{cor:corrolary},
    $$ \sum_m \lambda_m \lambda_j \lambda_k \lambda_m =  - \frac{1}{d} \lambda_j \lambda_k .$$
    Similarly, for~$j=k$,
    $\lambda_j \lambda_k - \frac1{d} \mathds{1}_d$ is a traceless, Hermitian matrix (see Eq.~\eqref{eq:gmproduct}).
    Thus, from Corollary~\ref{cor:corrolary},
    \begin{align}
            \sum_m \lambda_m \left (\lambda_j^2 - \frac{1}{d} \mathds{1}_d \right ) \lambda_m &= -\frac{1}{d} \left (\lambda_j^2 - \frac{1}{d} \mathds{1}_d \right ), \nonumber \\
            \intertext{so that}
            \sum_m \lambda_m \lambda_j^2 \lambda_m 
            &= \mathds{1}_d - \frac{1}{d} \lambda_j^2 \, .
    \label{eq:identity5partb}
    \end{align}
This concludes the proof of Corollary~\ref{cor:corollary3}.
\end{proof}

\textit{Proof of Identity 3.}
Using Identity 2, we have 
\begin{equation}
\sum_j \lambda_j \lambda_k \lambda_j \lambda_k = \left ( \sum_j \lambda_j \lambda_k \lambda_j \right ) \lambda_k= -\frac{1}{d} \lambda_k^2 \, .
\end{equation}
Summing over~$k$ and using Identity~1,
we find
\begin{equation}
    \sum_{j,k} \lambda_j \lambda_k \lambda_j \lambda_k  = -\frac{1}{d} \sum_k \lambda_k^2 = - \frac{d^2-1}{d^2} \mathds{1}_d \, ,
\end{equation}
which proves Identity~3.
\qed

\section{Deferred Proofs: Proof of Lemma~\texorpdfstring{\ref{lemma:Holevo}}{1}}
\label{subsec:ProofHCRB}

\begin{proof}[Proof of Lemma~\ref{lemma:Holevo}]
    The proof is segmented into three parts.
    (i) First we establish that the~$X_j$ are completely
    and uniquely determined by
    the local unbiasedness conditions
    to be~$X_j=\lambda_j$.
    This can be traced back to the trace orthonormality
    $\qtrace [ \lambda_j \lambda_k ] = \delta_{jk}$ of GMMs.
    (ii) We establish a lower bound on~$\CH$.
    (iii) We show this lower bound is achieved
    by valid choices of arguments~$\mathbb{S}_\theta$ and~$\mathbb{L}$,
    implying~$\CH$ is equal to the lower bound.

    \textit{Part (i):}
    The local unbiased conditions
    \begin{equation}
        \qtrace [ \rho_\theta X_j] = \theta_j \quad  \text{and} \quad  \qtrace [ \partial_j \rho_\theta X_k] = \delta_{jk}
    \end{equation}
    for~$j,k \in [d^2-1]$ at~$\theta=0$ become
    \begin{equation}
        \qtrace[X_j] = 0 \quad \text{and} \quad \qtrace [ \lambda_j X_k] = \delta_{jk} \, .
    \end{equation}
    It is simple to verify that
    the only solution to this
    is~$X_j=\lambda_j$.
    By virtue of being traceless and Hermitian,
    each~$X_j$ must be a linear combination of~$\lambda_k$s,
    i.e.,
    \begin{equation}
        X_j = \sum_k c_{jk} \lambda_k \quad c_{jk} \in \mathbb{R} \, ,
    \end{equation}
    which then means~$c_{jk}$ must
    satisfy~$\qtrace[ \lambda_j X_k] =  c_{kj} = \delta_{jk}$,
    implying~$X_j = \lambda_j$.

    \textit{Part (ii):}
    Tracing over the parameter indices~$(\ctrace)$ in
    \begin{equation}
        \Tr[ \mathbb{S}_\theta \mathbb{L} ] \succcurlyeq \Tr [ \mathbb{S}_\theta \mathbb{X} \mathbb{X}^\top ] 
    \end{equation}
    yields
    \begin{equation}
                \mathbb{T}\mathrm{r}[ \mathbb{S}_\theta \mathbb{L} ] \geq \mathbb{T}\mathrm{r} [ \mathbb{S}_\theta \mathbb{X} \mathbb{X}^\top ] .
                \label{eq:lowerbound}
    \end{equation}

    \textit{Part (iii):}
    The choice~$\mathbb{L} = \mathbb{X} \mathbb{X}^\top$
    leads to
    \begin{equation}
        \left ( \qtrace[ \mathbb{S}_\theta \mathbb{L} ]\right )_{j,k} = \qtrace[ \mathds{1}/d \;  \lambda_j \lambda_k] = \delta_{jk}/d, 
    \end{equation}
    which is real, symmetric and has trace (over parameter indices)
    \begin{equation}
    \bigtrace [ \mathbb{S}_\theta \mathbb{L} ] = \bigtrace [ \mathbb{S}_\theta \mathbb{X} \mathbb{X}^\top ] = \frac{d^2-1}{d}.  
    \end{equation}
    Finally, from \textit{part (ii)} we know that a lower~$\bigtrace [ \mathbb{S}_\theta \mathbb{L} ]$ is not possible, thus proving Eq.~\eqref{eq:Holevo}.
\end{proof}

    \begin{remark}
    \label{re:HCRBlowerbound}
        When there exists a unique
        set of LUB operators~$\mathbb{X}$,
        the inequality in Eq.~\eqref{eq:lowerbound} holds regardless
        of whether~$\mathbb{L}=\mathbb{X}\mathbb{X}^\top$ is a valid choice according to the HCRB constraints in Eq.~\eqref{eq:Holevodefn2}.
        We utilise this fact to solve the full-parameter
        linear GMM model for arbitrary states
        in Sec.~\ref{subsec:ArbitStates}
        and for arbitrary weight matrices in Appendix~\ref{sec:arbitweight}.
    \end{remark}

\section{SLD \& RLD CRBs}
\label{sec:appSLDRLD}

The two simplest quantum Cram\'er-Rao bounds,
the SLD and the RLD,
generalise the logarithmic derivative
of a parameterised probability distribution,
\begin{equation}
\partial_\theta p_\theta(x) = p_\theta (x) \partial_\theta \left [ \log(p_\theta(x)) \right ] \, ,
\end{equation}
to linear operators
acting on the density matrix~$\rho_\theta$.
The SLD version produces
Hermitian operators~$\{L_j^\mathrm{(SLD)}\}$
and the RLD version produces operators~$\{L_j^\mathrm{(RLD)}\}$
defined implicitly via
\begin{align}
    \label{eq:SLDops}
   2 \partial_j \rho_\theta &\eqqcolon  L_j^\mathrm{(SLD)} \rho_\theta + \rho_\theta L_j^\mathrm{(SLD)} \, , \\
     \partial_j \rho_\theta &\eqqcolon  \rho_\theta L_j^\mathrm{(RLD)} \, .
    \label{eq:RLDops}
\end{align}
Once Eqs.~\eqref{eq:SLDops} and~\eqref{eq:RLDops}
are solved for~$\{L_j^\mathrm{(SLD)}\}$
and~$\{L_j^\mathrm{(RLD)}\}$,
the corresponding QFIs can
be computed using
\begin{align}
\label{eq:SLDQFIdefns}
    \left [ J^\mathrm{(SLD)} \right]_{jk} &\coloneqq \Re \left [  \qtrace \left [ \rho_\theta L_j^\mathrm{(SLD)} L_k^\mathrm{(SLD)} \right ] \right ] \, , \\
    \left [ J^\mathrm{(RLD)} \right]_{jk} &\coloneqq \qtrace \left [ \rho_\theta L_k^\mathrm{(RLD)} {L_j^\mathrm{(RLD)}}^\dagger \right ]  \, . \label{eq:RLDQFIdefns}
\end{align}
Notably,~$ J^\mathrm{(SLD)}$ is real and symmetric
and~$J^\mathrm{(RLD)}$ is complex and Hermitian.
Finally,
the traced versions of the SLD and RLD QFI
matrix inequalities~$V_\theta \succcurlyeq {J^\mathrm{(SLD)}}^{-1}$
and~$V_\theta \succcurlyeq {J^\mathrm{(RLD)}}^{-1}$
yield the scalar
SLD and RLD CRBs
\begin{equation}
\ctrace(V_\theta) \geq \SLD \quad \text{\&} \quad \ctrace(V_\theta) \geq \RLD \, ,
\end{equation}
with
\begin{align}
    \label{eq:SLDbounds}
    \mathrm{C}_\mathrm{SLD} &= \ctrace \left [ {J^\mathrm{(SLD)}}^{-1} \right ] \, , \\
    \mathrm{C}_\mathrm{RLD} &= \ctrace \left [ \Re \left [ J^\mathrm{(RLD)}\right ]^{-1} \right ] \label{eq:RLDbounds}\\
    &\qquad \qquad \qquad  + \;  \left \Vert \Im \left [ J^\mathrm{(RLD)}\right ]^{-1} \right \Vert_1 \, , \nonumber
\end{align}
where~$\Vert X \Vert_1 \coloneqq \Tr (\sqrt{X^\dagger X})$
denotes the trace norm.
The SLD and RLD CRBs are not attainable in general,
especially in multi-parameter contexts.
For more details on the SLD and RLD CRB,
see Ref.~\cite{ABGG20}.

For the model
in Sec.~\ref{subsec:OdRatioMaxMixState},
$\rho_\theta^*=\mathds{1}_d/d$ and~$\partial_j\rho_\theta = \lambda_j$,
so Eqs.~\eqref{eq:SLDops} \& \eqref{eq:RLDops}
become
\begin{equation}
    \begin{split}
        2 \lambda_j = 2/d \;  L_j^\mathrm{SLD} \, , \\
        \lambda_j = 1/d \;  L_j^\mathrm{RLD} \, 
    \end{split}
\end{equation}
implying~$L_j^\mathrm{SLD} = L_j^\mathrm{RLD} = d \lambda_j$.
A direct computation of
Eqs.~\eqref{eq:SLDQFIdefns},~\eqref{eq:RLDQFIdefns},~\eqref{eq:SLDbounds}
\&~\eqref{eq:RLDbounds}
then yields the QFIs
\begin{equation}
    J^{(\mathrm{SLD})} = J^{(\mathrm{RLD})} = \begin{bmatrix}
        d & 0 & \dots & 0 \\
        0 & d & \dots & \vdots \\
        \vdots & \vdots & \ddots & 0\\
        0 & \hdots & 0 & d
    \end{bmatrix}_{n\times n} \, ,
\end{equation}
and the scalar CRBs
\begin{equation}
    \begin{split}
        \SLD = \RLD = \frac{d^2-1}{d} \, ,
    \end{split}
\end{equation}
as in main text Eqs.~\eqref{eq:JSLDRLDQFI} and~\eqref{eq:SLDRLDmodelmaxmix}.

\section{Deferred Proofs: Proof of Lemma~\texorpdfstring{\ref{theorem:NHCRB}}{2}}
\label{sec:ProofLemmaNHCRBmaxmix}

\subsection{SDP Formulation and Dual Problem}
\label{subsec:ProofNHCRB}

The SDP formulation
of the NHCRB~\cite{Lorcan21} is
\begin{gather}
\label{eq:HNCRBdef2SDP}
\begin{aligned}
    \CNH := \min_{\mathbb{Y}} \Big [ \bigtrace [ \mathbb{F}_0 \mathbb{Y} ]
 & \vert \bigtrace [ \mathbb{F}_k \mathbb{Y} ] = c_k \, , \\
      \mathbb{F}_0 = \begin{pmatrix} \mathbb{S}_\theta & 0 \\ 0 & 0 \end{pmatrix} \, , 
      \, &\mathbb{Y}  \succcurlyeq 0 \Big ] \; ,
\end{aligned}
\end{gather}
where $\mathbb{F}_k$ are constant matrices and
$c_k$ are constants,
as defined in Ref.~\cite{Lorcan21} (supplementary note 4).
The dual problem to the SDP in Eq.~\eqref{eq:HNCRBdef2SDP} reads
\begin{gather}
\label{eq:HNCRBdualdef}
\begin{aligned}
    \tilde{\mathrm{C}}_{\mathrm{NHCRB}} := \max_{y} 
 \Big [ \sum_k y_k c_k & \vert \sum_k y_k \mathbb{F}_k \preccurlyeq \mathbb{F}_0   \Big ] \; .
\end{aligned}
\end{gather}

In the following two lemmas,
we present solutions
to the primal and dual problems.

\begin{lemma}
\label{lemma:primalsolutionNHCRB}
    The optimal~$\mathbb{L}$ for the primal problem
    in Eq.~\eqref{eq:HNCRBdefn}
    is
    \begin{equation}
        \mathbb{L}^*_{jk} = \nicefrac{d+1}{d+2} \left ( \{ \lambda_j, \lambda_k\} + \delta_{jk} \mathds{1}_d \right ) 
    \end{equation}
where~$j,k \in [n]$ and~$\{\lambda_j, \lambda_k\}$ is the anti-commutator.
\end{lemma}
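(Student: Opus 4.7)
The plan is to certify $\mathbb{L}^*$ as the primal optimiser of the SDP in Eq.~\eqref{eq:HNCRBdefn} via weak duality: I will first verify primal feasibility, then exhibit a dual-feasible solution to Eq.~\eqref{eq:HNCRBdualdef} whose objective coincides with $\bigtrace[\mathbb{S}_\theta\mathbb{L}^*]$. Weak duality then forces every feasible primal objective to lie above $\bigtrace[\mathbb{S}_\theta\mathbb{L}^*]$, certifying $\mathbb{L}^*$ as optimal.

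For primal feasibility, the classical-index symmetry $\mathbb{L}^*_{jk}=\mathbb{L}^*_{kj}$ and the Hermiticity of each block are immediate from the anticommutator structure. The substantive step is the operator inequality $\mathbb{L}^*\succcurlyeq\mathbb{X}\mathbb{X}^\top$ on $\mathbb{C}^n\otimes\mathcal{H}_d$. I would test this against an arbitrary $|V\rangle=\sum_j|j\rangle\otimes|v_j\rangle$ and expand $\langle V|(\mathbb{L}^*-\mathbb{X}\mathbb{X}^\top)|V\rangle$, using the Hermiticity of the $\lambda_j$ to rewrite it via the GMM identities of Appendix~\ref{sec:appIdentities}---chiefly the Casimir sum $\sum_j\lambda_j^2=(d^2-1)\mathds{1}_d/d$ together with the triple-product relations summarised in Corollaries~\ref{cor:corrolary}--\ref{cor:corollary3}---reducing the quadratic form to a manifestly non-negative expression in the $|v_j\rangle$. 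The primal objective is then a direct computation: $\bigtrace[\mathbb{S}_\theta\mathbb{L}^*]=\frac{d+1}{d+2}\sum_j\qtrace[\rho_\mathrm{m}(2\lambda_j^2+\mathds{1}_d)]$, which with the Casimir identity evaluates to $(d^2-1)(d+1)/d$, matching Lemma~\ref{theorem:NHCRB}.

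For the dual side, I would propose an ansatz $y^*$ dictated by symmetry: the maximally-mixed state is invariant under $\mathrm{SU}(d)$ conjugation and the GMMs carry the (irreducible) adjoint representation, so Schur's lemma restricts $y^*$ to a short list of independent components---essentially a scalar weight on the local-unbiasedness constraints (proportional to the $n\times n$ identity on the parameter index) and a scalar weight on the trace-type constraints (proportional to $\mathds{1}_d$ on the quantum index). I would verify $\sum_k y_k^*\mathbb{F}_k\preccurlyeq\mathbb{F}_0$ block by block and check that the dual objective $\sum_k y_k^*c_k$ also evaluates to $(d^2-1)(d+1)/d$. The main obstacle will be the operator inequality $\mathbb{L}^*\succcurlyeq\mathbb{X}\mathbb{X}^\top$: the coefficient $(d+1)/(d+2)$ is delicately tuned so that the non-commuting cross-terms $\lambda_j\lambda_k$ and $\lambda_k\lambda_j$ balance the diagonal $\delta_{jk}\mathds{1}_d$ offset, and keeping this cancellation visible while pushing through the $\mathfrak{su}(d)$ algebra is the main bookkeeping challenge; the same algebraic structure must then reappear in the dual verification in order for the objective values to coincide.
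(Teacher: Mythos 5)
Your overall strategy---verify primal feasibility of $\mathbb{L}^*$, exhibit a dual-feasible point whose objective equals $\bigtrace[\mathbb{S}_\theta\mathbb{L}^*]=(d^2-1)(d+1)/d$, and conclude by duality---is exactly the paper's route, and your computation of the primal objective via the Casimir identity is correct. The genuine gap is in your dual certificate. The dual optimum (Lemma~\ref{lemma:dualsolutionNHCRB}) is \emph{not} of the two-component form your Schur-lemma argument allows: besides the scalar weights $y^{(2)}_{jk}=\nicefrac{d+1}{d}\,\delta_{jk}$ and $y^{(5)}_j$, it carries components $y^{(4)}_{jkl}=-f_{jkl}/d$ attached to the antisymmetric structure constants, which produce the off-diagonal blocks $-\nicefrac{1}{d}\,[\lambda_j,\lambda_k]$ in $\sum_k y^*_k\mathbb{F}_k$. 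These terms have vanishing $c_k$, so they do not change the dual objective, but they are indispensable for feasibility: dropping them, the Schur-complement condition reduces to $\mathds{1}_{nd}-\frac{d}{d-1}\,\mathbb{X}\mathbb{X}^\top\succcurlyeq 0$, and since the Casimir identity gives $(\mathbb{X}\mathbb{X}^\top)^2=\frac{d^2-1}{d}\,\mathbb{X}\mathbb{X}^\top$, this operator has the eigenvalue $1-(d+1)=-d<0$. So ``a scalar weight on the unbiasedness constraints plus a scalar weight on the trace-type constraints'' cannot be made dual-feasible at the required value; the list of $\mathrm{SU}(d)$ invariants must include the antisymmetric ($f_{jkl}$) component, and proving feasibility of the resulting candidate is where the real work lies (the paper does it via Lemma~\ref{lemma:dualfeasibility}, showing $(\mathbb{M}_1+\mathbb{M}_2)^2=\mathds{1}_{nd}$ so the relevant operator has eigenvalues $0$ and $2$).

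On the primal side you correctly identify $\mathbb{L}^*\succcurlyeq\mathbb{X}\mathbb{X}^\top$ as the substantive step, but you only promise a quadratic-form reduction to ``a manifestly non-negative expression'' without exhibiting it; there is no obvious such rearrangement, and this is precisely the delicate tuning you flag. The paper closes it with a structural trick: write $(\mathbb{L}^*-\mathbb{X}\mathbb{X}^\top)_{jk}\propto\delta_{jk}\mathds{1}_d+(\mathbb{N}_1)_{jk}-(\mathbb{N}_2)_{jk}$ with $(\mathbb{N}_1)_{jk}=\lambda_k\lambda_j$ and $(\mathbb{N}_2)_{jk}=\lambda_j\lambda_k/(d+1)$, and use the identities of Appendix~\ref{sec:appIdentities} to show $(\mathbb{N}_1-\mathbb{N}_2)^2=\mathds{1}_{nd}$, so the Hermitian operator $\mathbb{N}_1-\mathbb{N}_2$ has eigenvalues $\pm1$ and the sum is positive semi-definite. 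As written, your proposal is the right architecture with the two load-bearing verifications missing, and the dual ansatz you describe would in fact fail.
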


\begin{lemma}
\label{lemma:dualsolutionNHCRB}
    The optimal~$y$ for the dual problem in Eq.~\eqref{eq:HNCRBdualdef},
    $y^*$, is
    such that
    \begin{equation}
    \setlength\arraycolsep{2.5pt}
        \sum_{k} y^*_k \mathbb{F}_k = {\begin{bmatrix}
            {0} & {\mathbb{G}^{(1)}_{12}} & {\dots} & {\mathbb{G}^{(1)}_{1n}} & {\mathbb{G}_1^{(2)}} \\
            {\mathbb{G}^{(1)}_{21}} & {0} & {\dots} & {\mathbb{G}^{(1)}_{2n}} & {\vdots} \\
            {\vdots} & {\vdots} & {\ddots} & {\mathbb{G}^{(1)}_{(n-1)n}} & \; \\
            {\mathbb{G}^{(1)}_{n1}} & {\dots} & {\mathbb{G}^{(1)}_{n(n-1)}} & {0} & {\mathbb{G}_n^{(2)}} \\
            {\mathbb{G}_1^{(2)}} & {\dots} & \; & {\mathbb{G}_n^{(2)}} & {\mathbb{G}^{(3)}}
        \end{bmatrix}}
    \end{equation}
    with
    \begin{gather}
        \begin{aligned}
            \mathbb{G}_{jk}^{(1)} &= -\nicefrac{1}{d} \; \left [ \lambda_j, \lambda_k \right ] \\
            \mathbb{G}_j^{(2)} &= \nicefrac{d+1}{d} \; \; \lambda_j \\
            \mathbb{G}^{(3)} &= - \nicefrac{(d^2-1)(d+1)}{d^2} \; \; \mathds{1}_d ,
        \end{aligned}
    \end{gather}
    where~$j,k \in [n]$ and~$ \left [ \lambda_j, \lambda_k \right ]$ is the commutator.
\end{lemma}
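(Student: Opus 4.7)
The plan is to prove dual optimality of $y^*$ by invoking strong SDP duality: I would verify (i) dual feasibility, $\sum_k y^*_k \mathbb{F}_k \preccurlyeq \mathbb{F}_0$, and (ii) that the dual objective $\sum_k y^*_k c_k$ equals the primal optimum $(d^2-1)(d+1)/d$ already established in Lemma~\ref{lemma:primalsolutionNHCRB}. Matching primal and dual values then certifies that the displayed combination is achieved at the dual optimum.

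First, I would write out the slack matrix $M \coloneqq \mathbb{F}_0 - \sum_k y^*_k \mathbb{F}_k$ explicitly as an $(n+1)d \times (n+1)d$ Hermitian block matrix. With $\rho_\theta^* = \rho_\mathrm{m} = \mathds{1}_d/d$, its block structure features $\tfrac{1}{d}\mathds{1}_d$ on the first $n$ diagonal blocks, $\tfrac{(d^2-1)(d+1)}{d^2}\mathds{1}_d$ on the last diagonal block, $\tfrac{1}{d}[\lambda_j,\lambda_k]$ on the parameter-parameter off-diagonals, and $-\tfrac{d+1}{d}\lambda_j$ on the parameter-to-last couplings. Dual feasibility then amounts to showing $M \succcurlyeq 0$.

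Second, to check $M \succcurlyeq 0$, I would take the Schur complement of $M$ with respect to the strictly positive last $d\times d$ diagonal block. This reduces PSD-ness to a condition on an $nd\times nd$ matrix whose $(j,k)$ block combines $\tfrac{1}{d}\delta_{jk}\mathds{1}_d$, the commutator $\tfrac{1}{d}[\lambda_j,\lambda_k]$, and a correction $-\tfrac{1}{d-1}\lambda_j\lambda_k$ arising from the outer-product correction. Decomposing $\lambda_j\lambda_k$ into its symmetric and antisymmetric parts and invoking the Gell-Mann identities of Appendix~\ref{sec:appIdentities}, I would express this Schur complement as a positive multiple of $\mathbb{L}^* - \mathbb{X}\mathbb{X}^\top$---the primal slack from Lemma~\ref{lemma:primalsolutionNHCRB}---whose positivity is guaranteed by primal feasibility. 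Complementary slackness against $\mathbb{L}^*$ would simultaneously pin down the kernel of $M$ and supply the correct factorisation ansatz.

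Third, I would compute $\sum_k y^*_k c_k$ using the explicit SDP constants from Ref.~\cite{Lorcan21}. Strong duality applies since Slater's condition is satisfied by a strictly-feasible primal for $\theta \approx 0$; combined with the tightness guaranteed by the complementary slackness identified above, the dual value matches the primal $(d^2-1)(d+1)/d$, closing the proof. The main obstacle is the PSD verification in the Schur complement step: although it reduces to a finite algebraic identity in $\mathfrak{su}(d)$, the intertwined commutator and anti-commutator structures make a direct eigenvalue decomposition unwieldy. The cleanest route is to organise the whole verification around complementary slackness $M\mathbb{Y}^* = 0$, which both selects the correct factorisation $M = A^\dagger A$ and collapses the remaining algebra onto the identities already catalogued in Appendix~\ref{sec:appIdentities}.
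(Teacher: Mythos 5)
Your overall architecture (write the dual slack $M=\mathbb{F}_0-\sum_k y_k^*\mathbb{F}_k$, reduce via a Schur complement with respect to the last $d\times d$ block, and certify optimality by matching the dual objective to the primal value of Lemma~\ref{lemma:primalsolutionNHCRB}) is the paper's route, and your identification of the reduced $nd\times nd$ matrix, with blocks $\tfrac1d\delta_{jk}\mathds{1}_d+\tfrac1d[\lambda_j,\lambda_k]-\tfrac{1}{d-1}\lambda_j\lambda_k$, is correct. (Minor point: Slater/strong duality is not needed; exhibiting a primal-feasible and a dual-feasible point with equal objective is already a weak-duality certificate, which is all the paper uses.) However, the step on which your whole positivity argument rests is false: the Schur complement of $M$ is \emph{not} a positive multiple of the primal slack $\mathbb{L}^*-\mathbb{X}\mathbb{X}^\top$. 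Blockwise the primal slack is proportional to $\delta_{jk}\mathds{1}_d+\lambda_k\lambda_j-\tfrac{1}{d+1}\lambda_j\lambda_k$, while the (rescaled) dual Schur complement is $\delta_{jk}\mathds{1}_d-\lambda_k\lambda_j-\tfrac{1}{d-1}\lambda_j\lambda_k$; no positive scalar relates the coefficient pairs $(+1,-\tfrac{1}{d+1})$ and $(-1,-\tfrac{1}{d-1})$. A concrete counterexample is $d=2$: there $\{\lambda_j,\lambda_k\}=\delta_{jk}\mathds{1}_2$, so the dual Schur complement vanishes identically, while the primal slack has nonzero diagonal blocks. Both objects turn out to be (scaled) projectors, but onto different subspaces, so positivity of one cannot be imported from the other.

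The fallback you propose, organising the verification around complementary slackness $M\,\mathbb{Y}^*=0$, is circular for this purpose: complementary slackness is an optimality condition that presupposes dual feasibility, whereas $M\succcurlyeq0$ is precisely the feasibility statement you must prove before the value-matching certificate can be invoked; knowing a candidate kernel does not by itself exclude negative eigenvalues. The missing ingredient, which the paper supplies, is a direct algebraic proof that $\mathbb{M}\coloneqq\mathds{1}_{nd}-(\mathbb{M}_1+\mathbb{M}_2)\succcurlyeq0$, where $(\mathbb{M}_1)_{jk}=\tfrac{1}{d-1}\lambda_j\lambda_k$ and $(\mathbb{M}_2)_{jk}=\lambda_k\lambda_j$: using the Gell-Mann identities of Appendix~\ref{sec:appIdentities} one shows $\mathbb{M}_1\mathbb{M}_2=\mathbb{M}_2\mathbb{M}_1$ and $(\mathbb{M}_1+\mathbb{M}_2)^2=\mathds{1}_{nd}$, so the Hermitian operator $\mathbb{M}_1+\mathbb{M}_2$ has eigenvalues $\pm1$ and $\mathbb{M}$ has eigenvalues $0$ or $2$. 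This is the same style of involution argument used for the primal slack, but it must be run separately for the dual matrix; your proposal as written does not establish dual feasibility.
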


The proof of Lemmas~\ref{lemma:primalsolutionNHCRB}
and~\ref{lemma:dualsolutionNHCRB}
is broken up into the following three subsections.
In Subsec.~\ref{subsec:appProofLemmaPrimal},
we prove the feasibility of~$\mathbb{L}^*$
from Lemma~\ref{lemma:primalsolutionNHCRB}.
In Subsec.~\ref{subsec:appProofLemmaDual}
we prove the feasibility of~$y^*$
from Lemma~\ref{lemma:dualsolutionNHCRB}.
Finally in Subsec.~\ref{subsec:ProofOptimality}
we prove that the primal-objective value
from~$\mathbb{L}^*$ equals the
dual objective value from~$y^*$,
thus establishing their optimality
and proving Lemma~\ref{theorem:NHCRB}
from the main text.

\subsection{Feasibility of \texorpdfstring{$\mathbb{L}^*$}{L*} for the Primal Problem}
\label{subsec:appProofLemmaPrimal}
The $\mathbb{L}^*_{jk}$ from Lemma~\ref{lemma:primalsolutionNHCRB} is easily seen to be symmetric in~$j$ and $k$, meaning~$\mathbb{L}^*_{jk} = \mathbb{L}^*_{kj}$.
$\mathbb{L}^*_{jk}$ is also seen to be Hermitian.
To show $\mathbb{L}^*$ is feasible,
it only remains to show $\mathbb{L}^*-\mathbb{X}{\mathbb{X}}^\top\succcurlyeq 0$.
Writing $\mathbb{L}^*-\mathbb{X}{\mathbb{X}}^\top$
as a block matrix,
\begin{gather}
\begin{aligned}
    (\mathbb{L}^*-\mathbb{X}{\mathbb{X}}^\top)_{jk} &= \frac{d+1}{d+2} \left ( \{ \lambda_j, \lambda_k \} + \delta_{jk} \mathds{1}_d \right ) - \lambda_j \lambda_k\\
    &= \frac{d+1}{d+2} \left ( \delta_{jk} \mathds{1}_d + {\mathbb{N}_1}_{jk} - {\mathbb{N}_2}_{jk} \right )
\end{aligned}
\end{gather}
we see that we need to prove~$\mathbb{N}\coloneqq \mathds{1}_{nd}+\mathbb{N}_1 - \mathbb{N}_2\succcurlyeq 0$,
where we have defined block matrices
\begin{equation*}
    {\left (\mathbb{N}_1\right )}_{jk} \coloneqq \lambda_k \lambda_j \quad \& \quad {\left (\mathbb{N}_2\right )}_{jk} \coloneqq \frac{\lambda_j \lambda_k}{d+1}.
\end{equation*}
We first evaluate~$(\mathbb{N}_1 - \mathbb{N}_2)^2=\mathbb{N}_1^2 + \mathbb{N}_2^2 - \mathbb{N}_1\mathbb{N}_2 - \mathbb{N}_2 \mathbb{N}_1$ to find
\begin{gather}
    \begin{aligned}
        (\mathbb{N}_1^2)_{jk} &= \delta_{jk} \mathds{1}_d - \nicefrac{1}{d} \lambda_j \lambda_k\\
        (\mathbb{N}_2^2)_{jk} &= \frac{d-1}{d(d+1)} \lambda_j \lambda_k \\
        (\mathbb{N}_1 \mathbb{N}_2)_{jk} &= -\frac{1}{d(d+1)} \lambda_j \lambda_k\\
        (\mathbb{N}_2 \mathbb{N}_1)_{jk} &= -\frac{1}{d(d+1)} \lambda_j \lambda_k,
    \end{aligned}
\end{gather}
where we have used the identities proven in Appendix~\ref{sec:appIdentities}.
Combining these results we arrive at $\big((\mathbb{N}_1 - \mathbb{N}_2)^2\big)_{jk} = \delta_{jk} \mathds{1}_d$ which means~$(\mathbb{N}_1 - \mathbb{N}_2)^2 = \mathds{1}_{nd}$. From this, and using that $\mathbb{N}_1-\mathbb{N}_2$ is Hermitian,
we can conclude that the eigenvalues of~$\mathbb{N}_1-\mathbb{N}_2$ are~$\pm 1$.
Hence the eigenvalues of~$\mathbb{N}=\mathds{1}_{nd} + \mathbb{N}_1 - \mathbb{N}_2$ are~0 and~2.
This proves that~$\mathbb{N}$ is a positive semi-definite operator,
and that~$\mathbb{L}^* - \mathbb{X} \mathbb{X}^\top\geq 0$.
\qed

\subsection{Feasibility of \texorpdfstring{$y^*$}{y*} for the Dual Problem}
\label{subsec:appProofLemmaDual}
We need to show that the matrix~$\sum_{k} y^*_k \mathbb{F}_k$,
which explicitly is
\begin{widetext}
\begin{equation}
    \renewcommand{\arraystretch}{0.5}
        \begin{bmatrix}
            0 & \frac{1}{d} [\lambda_2,\lambda_1] & \dots & \frac{1}{d} {[\lambda_n,\lambda_1]} & \frac{d+1}{d} \lambda_1 \\
            \frac{1}{d} {[\lambda_1,\lambda_2]} & 0 & \dots & \frac{1}{d} {[\lambda_n,\lambda_2]} & \vdots \\
            \vdots & \vdots & \ddots & \frac{1}{d} {[\lambda_n,\lambda_{n-1}]} & \; \\
            \frac{1}{d} {[\lambda_1,\lambda_n]} & \dots & \frac{1}{d} {[\lambda_{n-1},\lambda_n]} & 0 & \frac{d+1}{d} \lambda_n \\
            \frac{d+1}{d} \lambda_1 & \dots & \; & \frac{d+1}{d} \lambda_n & -\frac{n(d+1)}{d^2} \mathds{1}_d
        \end{bmatrix}
    \renewcommand{\arraystretch}{1},
\end{equation}
\end{widetext}
satisfies~$\mathbb{F}_0-\sum_{k} y^*_k \mathbb{F}_k\succcurlyeq 0$.
Note that this~$\sum_{k} y^*_k \mathbb{F}_k$
corresponds to~$y^*_j$ values
\begin{gather}
    \label{eq:optyvaluesdual}
    y^{(1)}_j=0, \quad y^{(2)}_{jk} = \frac{d+1}{d} \delta_{jk}, \quad y^{(3)}_{jk}=0\\
    y^{(4)}_{jkl} = -f_{jkl}/d, \quad y^{(5)}_{j} = - \frac{n(d+1)}{d \sqrt{d}} \delta_{j1}
\end{gather}
whereas the corresponding~$c_j$ values are
\begin{gather}
    \label{eq:cvaluesdual}
    c^{(1)}_j=0, \quad c^{(2)}_{jk} = 2 \delta_{jk}, \quad c^{(3)}_{jk}=0\\
    c^{(4)}_{jkl} = 0, \quad c^{(5)}_{j} = \sqrt{d} \delta_{j1}    
\end{gather}
so that the dual objective value is
\begin{equation}
\begin{split}
    \sum_j 2 y^{(2)}_{jj} + \sqrt{d} y^{(5)}_j &= \frac{2n (d+1)}{d}-\frac{n(d+1)}{d} \\
    &= \frac{n(d+1)}{d} .
\end{split}
\end{equation}

To show this~$y^*$ is feasible,
note that
proving~$\mathbb{F}_0 - \sum_{k} y^*_k \mathbb{F}_k\succcurlyeq 0$
is equivalent to showing
\begin{widetext}
\begin{equation*}
    \renewcommand{\arraystretch}{0.5}
        \begin{bmatrix}
            \mathds{1}_d &  [\lambda_1,\lambda_2] & \dots &  {[\lambda_1,\lambda_n]} & -(d+1) \lambda_1 \\
             {[\lambda_2,\lambda_1]} & \mathds{1}_d & \dots & {[\lambda_2,\lambda_n]} & \vdots \\
            \vdots & \vdots & \ddots &  {[\lambda_{n-1},\lambda_{n}]} & \; \\
            {[\lambda_n,\lambda_1]} & \dots &  {[\lambda_n,\lambda_{n-1}]} & \mathds{1}_d & -(d+1)\lambda_n \\
            -(d+1)\lambda_1 & \dots & \; & -(d+1) \lambda_n & \frac{n(d+1)}{d} \mathds{1}_d
        \end{bmatrix}
    \renewcommand{\arraystretch}{1}
\end{equation*}
\end{widetext}
is positive semi-definite.
Using Schur's complement lemma,
this can be simplified to showing
\begin{widetext}
\begin{equation}
    \renewcommand{\arraystretch}{0.5}
        \begin{bmatrix}
            \mathds{1}_d &  [\lambda_1,\lambda_2] & \dots &  {[\lambda_1,\lambda_n]}\\
             {[\lambda_2,\lambda_1]} & \mathds{1}_d & \dots & {[\lambda_2,\lambda_n]}\\
            \vdots & \vdots & \ddots &  {[\lambda_{n-1},\lambda_{n}]} \\
            {[\lambda_n,\lambda_1]} & \dots &  {[\lambda_n,\lambda_{n-1}]} & \mathds{1}_d
        \end{bmatrix}
    \renewcommand{\arraystretch}{1} 
    - \frac{d}{d-1} \begin{bmatrix}
        \lambda_1^2 & \lambda_1 \lambda_2 & \dots & \lambda_1 \lambda_n\\
        \lambda_2 \lambda_1 & \ddots & \dots & \vdots \\
        \vdots & \vdots & \; & \vdots \\
        \lambda_n \lambda_1 & \lambda_n \lambda_2 & \dots & \lambda_n^2
    \end{bmatrix} \succcurlyeq 0
    \label{eq:simplformdualfeas}
\end{equation}
\end{widetext}
We rewrite the left hand side
of Eq.~\eqref{eq:simplformdualfeas}
in the block-matrix representation as
\begin{gather*}
    \delta_{jk} \mathds{1}_d + [ \lambda_j, \lambda_k] - \frac{d}{d-1} \lambda_j \lambda_k\\
    = \delta_{jk} \mathds{1}_d - \frac{1}{d-1} \lambda_j \lambda_k - \lambda_k \lambda_j\\
    = (\mathds{1}_{nd})_{jk} - \left ( \left (\mathbb{M}_1\right)_{jk}+\left (\mathbb{M}_2\right)_{jk} \right ) \, ,
\end{gather*}
where we have defined
\begin{equation}
    (\mathbb{M}_1)_{jk} \coloneqq \frac{1}{d-1} \lambda_j \lambda_k \quad \& \quad (\mathbb{M}_2)_{jk} \coloneqq \lambda_k \lambda_j .
\end{equation}
Thus,
we finally need to prove the following theorem to establish the feasibility of~$y^*$.

\begin{lemma}
\label{lemma:dualfeasibility}
The operator~$\mathbb{M}\coloneqq \mathds{1}_{nd} - (\mathbb{M}_1+\mathbb{M}_2)$ is positive semi-definite.
\end{lemma}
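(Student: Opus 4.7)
The plan is to mimic the strategy used for the operator $\mathbb{N}$ in Subsec.~\ref{subsec:appProofLemmaPrimal}: show that $\mathbb{M}_1+\mathbb{M}_2$ is Hermitian with $(\mathbb{M}_1+\mathbb{M}_2)^2=\mathds{1}_{nd}$, so that its eigenvalues lie in $\{-1,+1\}$, whence $\mathbb{M}=\mathds{1}_{nd}-(\mathbb{M}_1+\mathbb{M}_2)$ has spectrum contained in $\{0,2\}$ and is therefore positive semi-definite.

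First I would verify Hermicity at the block-matrix level. For $\mathbb{M}_1$, the $(k,j)$ block of $\mathbb{M}_1^\dagger$ equals $((\mathbb{M}_1)_{jk})^\dagger=\tfrac{1}{d-1}(\lambda_j\lambda_k)^\dagger=\tfrac{1}{d-1}\lambda_k\lambda_j=(\mathbb{M}_1)_{kj}$, so $\mathbb{M}_1=\mathbb{M}_1^\dagger$; the analogous check works for $\mathbb{M}_2$. Then I would compute $(\mathbb{M}_1+\mathbb{M}_2)^2=\mathbb{M}_1^2+\mathbb{M}_2^2+\mathbb{M}_1\mathbb{M}_2+\mathbb{M}_2\mathbb{M}_1$ block by block, invoking exactly the GMM identities assembled in Appendix~\ref{sec:appIdentities}. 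Concretely, Identity~1 gives $(\mathbb{M}_1^2)_{jk}=\tfrac{d+1}{d(d-1)}\lambda_j\lambda_k$; Corollary~\ref{cor:corollary3} gives $(\mathbb{M}_2^2)_{jk}=\delta_{jk}\mathds{1}_d-\tfrac{1}{d}\lambda_j\lambda_k$; and Identity~2 (applied inside the sum on the middle index) yields $(\mathbb{M}_1\mathbb{M}_2)_{jk}=(\mathbb{M}_2\mathbb{M}_1)_{jk}=-\tfrac{1}{d(d-1)}\lambda_j\lambda_k$.

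Assembling these contributions, the coefficient of $\lambda_j\lambda_k$ in $\big((\mathbb{M}_1+\mathbb{M}_2)^2\big)_{jk}$ is
\begin{equation*}
\frac{d+1}{d(d-1)}-\frac{1}{d}-\frac{2}{d(d-1)}=\frac{(d+1)-(d-1)-2}{d(d-1)}=0,
\end{equation*}
so that $\big((\mathbb{M}_1+\mathbb{M}_2)^2\big)_{jk}=\delta_{jk}\mathds{1}_d$, i.e.\ $(\mathbb{M}_1+\mathbb{M}_2)^2=\mathds{1}_{nd}$. Hermicity together with this idempotency-up-to-sign forces $\operatorname{spec}(\mathbb{M}_1+\mathbb{M}_2)\subseteq\{\pm 1\}$, and therefore $\operatorname{spec}(\mathbb{M})\subseteq\{0,2\}$, proving $\mathbb{M}\succcurlyeq 0$.

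The only real obstacle is the book-keeping for the cross terms $\mathbb{M}_1\mathbb{M}_2$ and $\mathbb{M}_2\mathbb{M}_1$, where one must be careful to contract the summed index through the correct factor of $\lambda_m$ so as to invoke Identity~2 (which produces the crucial $-1/d$ factor). Once that is done, the cancellation $(d+1)-(d-1)-2=0$ is exactly the algebraic ``miracle'' that makes the square equal to the identity; this is the same phenomenon that previously made $(\mathbb{N}_1-\mathbb{N}_2)^2=\mathds{1}_{nd}$, and in fact it is why the specific coefficient $\tfrac{1}{d-1}$ was chosen in the definition of $\mathbb{M}_1$. No other ingredients beyond the three identities of Appendix~\ref{sec:appIdentities} and the Schur-complement reduction already performed in Eq.~\eqref{eq:simplformdualfeas} are needed.
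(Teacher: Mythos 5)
Your proposal is correct and follows essentially the same route as the paper: the paper also proves $(\mathbb{M}_1+\mathbb{M}_2)^2=\mathds{1}_{nd}$ block-wise via the Appendix~\ref{sec:appIdentities} identities (splitting the cross-term computation into the auxiliary Lemma~\ref{lemma:dualfeasibilitylemma1} and the square into Lemma~\ref{lemma:dualfeasibilitylemma2}) and then concludes from Hermiticity that the spectrum of $\mathbb{M}_1+\mathbb{M}_2$ is $\{\pm 1\}$, hence $\mathbb{M}$ has eigenvalues in $\{0,2\}$. Your block evaluations, including the cancellation $\tfrac{d+1}{d(d-1)}-\tfrac{1}{d}-\tfrac{2}{d(d-1)}=0$, match the paper's computation exactly.
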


Before we can prove Lemma~\ref{lemma:dualfeasibility},
we first need to prove the following two lemmas.

\begin{lemma}
    \label{lemma:dualfeasibilitylemma1}
    $\mathbb{M}_1$ and~$\mathbb{M}_2$ commute, i.e.,
    $\mathbb{M}_1 \mathbb{M}_2 = \mathbb{M}_2 \mathbb{M}_1$.
\end{lemma}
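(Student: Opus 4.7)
The plan is to prove commutativity block-entry by block-entry, reducing both sides to the same explicit expression using Identity~2 of Appendix~\ref{sec:appIdentities} (equivalently its generalisation Corollary~\ref{cor:corrolary}). Since $\mathbb{M}_1$ and $\mathbb{M}_2$ are $n \times n$ block matrices with $d \times d$ blocks, I compute
\begin{equation*}
    (\mathbb{M}_1 \mathbb{M}_2)_{jk} = \sum_{m\in[n]} (\mathbb{M}_1)_{jm}(\mathbb{M}_2)_{mk} = \frac{1}{d-1}\sum_{m\in[n]} \lambda_j \lambda_m \lambda_k \lambda_m
\end{equation*}
and, symmetrically,
\begin{equation*}
    (\mathbb{M}_2 \mathbb{M}_1)_{jk} = \sum_{m\in[n]} (\mathbb{M}_2)_{jm}(\mathbb{M}_1)_{mk} = \frac{1}{d-1}\sum_{m\in[n]} \lambda_m \lambda_j \lambda_m \lambda_k \, .
\end{equation*}

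Next, I factor the summation in each expression so that the $m$-sum acts on an isolated GMM. In the first expression, I pull $\lambda_j$ out to the left, leaving $\sum_m \lambda_m \lambda_k \lambda_m$; in the second, I pull $\lambda_k$ out to the right, leaving $\sum_m \lambda_m \lambda_j \lambda_m$. Identity~2 of Appendix~\ref{sec:appIdentities} then gives $\sum_m \lambda_m \lambda_a \lambda_m = -\lambda_a / d$ for any $a \in [\nmax]$, so both entries collapse to
\begin{equation*}
    (\mathbb{M}_1 \mathbb{M}_2)_{jk} = -\frac{1}{d(d-1)}\lambda_j \lambda_k = (\mathbb{M}_2 \mathbb{M}_1)_{jk} \, .
\end{equation*}
Since this holds for every $j,k \in [n]$, commutativity follows.

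The only subtlety is whether Identity~2 applies when the summation index $m$ ranges over all of $[\nmax]$ rather than matching the range of the block indices $j,k$; but $\mathbb{M}_1$ and $\mathbb{M}_2$ in Lemma~\ref{lemma:dualfeasibility} are the full-parameter ($n = \nmax$) objects, so the sum is exactly of the form required by Identity~2. There is no real obstacle: the proof is a short, direct consequence of the GMM algebra already established in Appendix~\ref{sec:appIdentities}, with the only bookkeeping being to track which block-index contraction isolates $\lambda_j$ on the left versus $\lambda_k$ on the right.
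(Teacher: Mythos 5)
Your proposal is correct and follows essentially the same route as the paper: compute the $(j,k)$ block of each product, factor out the spectator GMM, and collapse the remaining sum with Identity~2 of Appendix~\ref{sec:appIdentities} to get $-\tfrac{1}{d(d-1)}\lambda_j\lambda_k$ on both sides. Your remark about the summation range is also handled correctly, since Lemma~\ref{lemma:dualfeasibility} concerns the full-parameter ($n=\nmax$) model, exactly the setting of Identity~2.
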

\begin{proof}
Using Identity~2 from Appendix~\ref{sec:appIdentities}, we have 
\begin{equation}
\begin{split}
(\mathbb{M}_1 \mathbb{M}_2)_{jk} &= \sum_l (\mathbb{M}_1)_{jl}  (\mathbb{M}_2)_{lk}\\
&= \frac{1}{d-1} \sum_l \lambda_j \lambda_l \lambda_k \lambda_l \\
& = \frac{1}{d-1} \lambda_j \left (\frac{-1}{d} \lambda_k\right ) \\
&= -\frac{1}{d(d-1)} \lambda_j \lambda_k \, ,
\end{split}
\end{equation}
whereas
\begin{equation}
\begin{split}
(\mathbb{M}_2 \mathbb{M}_1)_{jk} &= \sum_l (\mathbb{M}_2)_{jl}  (\mathbb{M}_1)_{lk}\\
&=\frac{1}{d-1} \sum_l \lambda_l \lambda_j \lambda_l \lambda_k \\
&= \frac{1}{d-1} \left ( \frac{-1}{d} \lambda_j \right ) \lambda_k \\
&= -\frac{1}{d(d-1)} \lambda_j \lambda_k .
\end{split}
\end{equation}
Hence $\mathbb{M}_1 \mathbb{M}_2 = \mathbb{M}_2 \mathbb{M}_1$,
which also implies that $\mathbb{M}_1$ and $\mathbb{M}_2$ share some eigenvectors.
\end{proof}

\begin{lemma}
    \label{lemma:dualfeasibilitylemma2}
    $\mathbb{M}_1+\mathbb{M}_2$ satisfies
    $(\mathbb{M}_1 + \mathbb{M}_2)^2 = \mathds{1}_{nd}$
    or, equivalently, 
    $$ \left [(\mathbb{M}_1 + \mathbb{M}_2)^2 \right ]_{jk} = \delta_{jk} \mathds{1}_d$$
where~$j,k\in[n]$.
\end{lemma}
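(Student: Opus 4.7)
The plan is to expand the square $(\mathbb{M}_1+\mathbb{M}_2)^2 = \mathbb{M}_1^2 + \mathbb{M}_2^2 + \mathbb{M}_1\mathbb{M}_2 + \mathbb{M}_2\mathbb{M}_1$ block-by-block and verify that the $\lambda_j\lambda_k$ pieces cancel, leaving only the identity term $\delta_{jk}\mathds{1}_d$. Two of the four terms have already been computed in Lemma~\ref{lemma:dualfeasibilitylemma1}, so only $\mathbb{M}_1^2$ and $\mathbb{M}_2^2$ remain. Both reduce to identities from Appendix~\ref{sec:appIdentities} applied to a single ``sandwich'' sum over an internal GMM index $l$.

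For $\mathbb{M}_1^2$, I would write $(\mathbb{M}_1^2)_{jk} = \frac{1}{(d-1)^2}\sum_l \lambda_j\lambda_l\lambda_l\lambda_k$, pull $\lambda_j$ and $\lambda_k$ outside the sum, and invoke Identity~1 ($\sum_l \lambda_l^2 = \frac{d^2-1}{d}\mathds{1}_d$) to obtain $(\mathbb{M}_1^2)_{jk} = \frac{d+1}{d(d-1)}\lambda_j\lambda_k$. For $\mathbb{M}_2^2$, the sandwich $\sum_l \lambda_l\lambda_j\lambda_k\lambda_l$ is exactly the expression handled by Corollary~\ref{cor:corollary3}, giving $(\mathbb{M}_2^2)_{jk} = \delta_{jk}\mathds{1}_d - \frac{1}{d}\lambda_j\lambda_k$. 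Combining with $(\mathbb{M}_1\mathbb{M}_2+\mathbb{M}_2\mathbb{M}_1)_{jk} = -\frac{2}{d(d-1)}\lambda_j\lambda_k$ from the previous lemma, the total coefficient of $\lambda_j\lambda_k$ is
\begin{equation*}
\frac{d+1}{d(d-1)} - \frac{1}{d} - \frac{2}{d(d-1)} = \frac{(d+1)-(d-1)-2}{d(d-1)} = 0,
\end{equation*}
so only the $\delta_{jk}\mathds{1}_d$ term survives, which is precisely the $(j,k)$ block of $\mathds{1}_{nd}$.

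There is no real obstacle here: the argument is a direct block-wise computation once the correct GMM contraction identities are quoted. The only subtle point is ensuring that the internal index $l$ in each sandwich sum runs over the full set $[n_{\max}]=[d^2-1]$, which is exactly the range for which Identity~1 and Corollary~\ref{cor:corollary3} are stated; this matches the setting of Lemma~\ref{theorem:NHCRB}, where $n=n_{\max}$. Given Lemma~\ref{lemma:dualfeasibilitylemma2} (this claim), Lemma~\ref{lemma:dualfeasibility} follows immediately: since $\mathbb{M}_1+\mathbb{M}_2$ is Hermitian and squares to the identity, its spectrum lies in $\{\pm1\}$, hence $\mathds{1}_{nd}-(\mathbb{M}_1+\mathbb{M}_2)\succcurlyeq 0$, completing the dual-feasibility argument.
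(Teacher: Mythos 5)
Your proposal is correct and follows essentially the same route as the paper: a block-wise expansion of $(\mathbb{M}_1+\mathbb{M}_2)^2$ reduced to the GMM contraction identities (Identity~1, Identity~2, and Corollary~\ref{cor:corollary3}), with the cross terms handled via Lemma~\ref{lemma:dualfeasibilitylemma1}; the paper simply expands all four sums in one display rather than quoting the previous lemma, but the cancellation of the $\lambda_j\lambda_k$ coefficient is identical. Your observation that the internal index runs over all $d^2-1$ GMMs (the $n=\nmax$ setting) is exactly the condition needed for the identities to apply.
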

\begin{proof}
\begin{equation*}
\begin{split}
&\left [(\mathbb{M}_1 + \mathbb{M}_2)^2 \right ]_{jk}\\
&= \sum_l \left (\mathbb{M}_1 + \mathbb{M}_2 \right)_{jl} \left (\mathbb{M}_1 + \mathbb{M}_2 \right)_{lk} \\
&= \sum_l \left ( \frac{1}{d-1} \lambda_j \lambda_l + \lambda_l \lambda_j\right ) \left ( \frac{1}{d-1} \lambda_l \lambda_k + \lambda_k \lambda_l\right ) \\
&= \frac{1}{(d-1)^2} \lambda_j \left ( \sum_l \lambda_l^2 \right ) \lambda_k +  \sum_l \lambda_l \lambda_j \lambda_k \lambda_l\\
&+ \frac{1}{d-1} \left [ \lambda_j \left (\sum_l \lambda_l \lambda_k \lambda_l \right ) + \left ( \sum_l \lambda_l \lambda_j \lambda_l \right ) \lambda_k \right ] \\
\intertext{so using Corollary~\ref{cor:corollary3} and Identities~1 and~2,}
&=  \delta_{jk} \mathds{1}_d \\
& \; + {\left ( {\frac{d^2-1}{d(d-1)^2}} - {\frac{1}{d}} - {\frac{1}{d(d-1)}} - {\frac{1}{d(d-1)}} \right )} {\lambda_j} {\lambda_k}\\
&=  \delta_{jk} \mathds{1}_d.
\end{split}
\end{equation*}
\end{proof}

Now we can prove Lemma~\ref{lemma:dualfeasibility} as follows.
\begin{proof}[Proof of Lemma~\ref{lemma:dualfeasibility}]
From Lemma~\ref{lemma:dualfeasibilitylemma2},
the eigenvalues of~$(\mathbb{M}_1 + \mathbb{M}_2)^2$ must all be~$1$. As $(\mathbb{M}_1 + \mathbb{M}_2)$ is Hermitian, its eigenvalues must be~$\pm 1$.
It follows that the eigenvalues of~$\mathbb{M}=\mathds{1}_{nd} - (\mathbb{M}_1 + \mathbb{M}_2)$ must be either~$2$ or~$0$.
Hence~$\mathbb{M}$, being a Hermitian matrix with non-negative eigenvalues, must be positive semi-definite.
\end{proof}

\subsection{Optimality of Solutions \&
Proof of Lemma~\texorpdfstring{\ref{theorem:NHCRB}}{2}}
\label{subsec:ProofOptimality}

\begin{proof}[Proof of Lemma~\ref{theorem:NHCRB}]
Note that, by direct calculation,
\begin{equation}
\bigtrace(\mathbb{S}_\theta \mathbb{L}^*) = \sum_k y^*_k c_k = \frac{(d^2-1)(d+1)}{d} \; . 
\end{equation}
In other words,~$\mathbb{L}^*$ is primal-feasible
and~$y^*$ is dual-feasible
and the primal value equals the dual value.
This lets us conclude that~$\nicefrac{(d^2-1)(d+1)}{d}$
is the true optimal value of the primal and dual problems,
and that~$\mathbb{L}^*$ and~$y^*$ are optimal solutions
to the primal and dual problems, respectively.
As a result, we have
    \begin{equation}
        \CNH = \frac{(d^2-1)(d+1)}{d}.
    \end{equation}
\end{proof}

\subsection{Attainability of NHCRB via SIC POVMs}
\label{subsec:ProofCFI}

\begin{lemma}
\label{th:CFI}
The CFI matrix
for estimating all GMMs from the maximally-mixed state~$\rho_\mathrm{m}$
by measuring the SIC POVM in~$d$ dimensions is
\begin{equation}
    J = \begin{bmatrix}
        \frac{d}{d+1} & 0 &\cdots & 0 \\
        0 & \frac{d}{d+1} & \cdots & 0 \\
        \vdots & \vdots & \ddots & \vdots \\
        0 & 0  & \dots & \frac{d}{d+1} 
    \end{bmatrix}_{n\times n} .
    \label{eq:theorem1}
\end{equation}
\end{lemma}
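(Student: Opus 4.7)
The plan is to evaluate Eq.~\eqref{eq:CFINagaoka} directly, using only the defining symmetries of a rank-one SIC POVM. First, I would note that each SIC POVM element $\Pi_l = \frac{1}{d}|\psi_l\rangle\langle\psi_l|$ has $\qtrace(\Pi_l) = 1/d$, so the denominator in Eq.~\eqref{eq:QFImatrix11} evaluates to $\qtrace(\rho_\mathrm{m}\Pi_l) = 1/d^2$, which produces the $d^2$ prefactor already displayed in Eq.~\eqref{eq:CFINagaoka}. I would then expand each POVM element in the basis $\{\mathds{1}_d\}\cup\Lambda_d$, writing $\Pi_l = \tfrac{1}{d^2}\mathds{1}_d + \sum_{a=1}^{n}\pi_{l,a}\,\lambda_a$ with $\pi_{l,a}\coloneqq\qtrace(\lambda_a\Pi_l)$, and collect the coefficients into a $d^2\times n$ matrix $P$ with entries $P_{l,a}=\pi_{l,a}$. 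In these terms, Eq.~\eqref{eq:CFINagaoka} becomes $J_{jk}=d^2\,(P^\top P)_{jk}$, so the task reduces to showing $P^\top P = \frac{1}{d(d+1)}\mathds{1}_n$.

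Rather than attacking $P^\top P$ directly, I would compute $PP^\top$ from the SIC inner-product relations. Expanding $\qtrace(\Pi_{l_1}\Pi_{l_2})$ using the orthonormality $\qtrace(\lambda_a\lambda_b)=\delta_{ab}$ and the tracelessness of the GMMs yields $\qtrace(\Pi_{l_1}\Pi_{l_2}) = 1/d^3 + \sum_a\pi_{l_1,a}\pi_{l_2,a}$. Substituting the SIC values---$\qtrace(\Pi_l^2)=1/d^2$ from rank-oneness and $\qtrace(\Pi_{l_1}\Pi_{l_2})=1/(d^2(d+1))$ from Eq.~\eqref{eq:defSICPOVM}---determines $(PP^\top)_{l_1,l_2} = \frac{1}{d(d+1)}\delta_{l_1,l_2} - \frac{1}{d^3(d+1)}$.

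Next, the completeness relation $\sum_l\Pi_l=\mathds{1}_d$ together with $\qtrace(\lambda_a)=0$ implies $\sum_l\pi_{l,a}=0$ for every $a$, so the all-ones vector $\mathbf{1}\in\mathbb{R}^{d^2}$ lies in $\ker P^\top$ and is a zero eigenvector of $PP^\top$. On the orthogonal complement $\mathbf{1}^\perp$, the rank-one correction $-\tfrac{1}{d^3(d+1)}\mathbf{1}\mathbf{1}^\top$ acts trivially, so every vector there is an eigenvector with eigenvalue $\frac{1}{d(d+1)}$. Hence $PP^\top$ has rank $n=d^2-1$ with a single nonzero eigenvalue $\frac{1}{d(d+1)}$ of multiplicity $n$. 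Since $P^\top P$ is $n\times n$ and shares the nonzero spectrum of $PP^\top$, it must equal the scalar matrix $\frac{1}{d(d+1)}\mathds{1}_n$. Multiplying by the $d^2$ prefactor yields $J = \frac{d}{d+1}\mathds{1}_n$, as claimed.

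The only real obstacle is the algebraic bookkeeping between the $d^2$-dimensional SIC label space and the $n$-dimensional GMM coefficient space. This is resolved by the two ingredients doing complementary work: completeness removes exactly one dimension from the SIC space (supplying the single zero eigenvalue of $PP^\top$), while the uniform SIC overlap enforces scalar isotropy on the surviving $n$-dimensional subspace (supplying the common nonzero eigenvalue). Together they force $P^\top P$ to be proportional to the identity, after which the proportionality constant is fixed by a single arithmetic step.
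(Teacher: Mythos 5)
Your proof is correct, and it takes a genuinely different route from the paper's. The paper's proof leans on an external identity (cited from the SIC literature) stating that $\sum_{l}\qtrace[\rho\,\Pi_l]^2=\bigl(\qtrace[\rho^2]+1\bigr)/\bigl(d(d+1)\bigr)$ for any rank-one SIC POVM, and then extracts the diagonal and off-diagonal CFI entries by substituting $\rho=\mathds{1}/d+\theta_j\lambda_j$ and $\rho=\mathds{1}/d+\theta_j\lambda_j+\theta_k\lambda_k$ respectively. You instead work self-containedly from the defining relations: expanding the POVM elements in the $\{\mathds{1}_d\}\cup\Lambda_d$ basis, computing the Gram matrix $PP^\top=\frac{1}{d(d+1)}\mathds{1}_{d^2}-\frac{1}{d^3(d+1)}\mathbf{1}\mathbf{1}^\top$ from the SIC overlaps, using completeness to place $\mathbf{1}$ in $\ker P^\top$, and transferring the resulting flat nonzero spectrum to $P^\top P$, which handles diagonal and off-diagonal entries of $J$ in one stroke. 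What your approach buys is independence from the cited index-of-coincidence formula (you essentially re-derive the needed piece of it) and a transparent structural explanation of where the factor $1/(d(d+1))$ and the exact multiplicity $n=d^2-1$ come from; what the paper's approach buys is brevity by outsourcing the combinatorics to a known result. One small point to make explicit: you use $\qtrace(\Pi_l)=1/d$ and $\qtrace(\Pi_l^2)=1/d^2$, i.e.\ the standard equal-weight normalisation of a rank-one SIC POVM; this is the same implicit assumption the paper makes (and it in fact follows for $d>2$ from the overlap condition together with $\sum_l\Pi_l=\mathds{1}_d$), so it is a convention rather than a gap, but stating it would make the argument airtight.
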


\begin{proof}[Proof of Lemma~\ref{th:CFI}]
    In the multi-parameter case,
    the CFI matrix~$J_{jk}$ ($j,k\in [n]$)
    is given by
    \begin{equation}
    \label{eq:QFImatrix112}
    J_{jk} \left [\{\Pi_l\}\right] = \sum_{l=1}^{d^2} \frac{ \qtrace  \left [ \partial_j \rho_\theta \Pi_l \right ] \qtrace  \left [ \partial_k \rho_\theta \Pi_l \right ]}{\qtrace \left [ \rho_\theta \Pi_l \right ] } \, ,
    \end{equation}

    From \cite{R13} we have that for any (rank-one)
    SIC POVM $\{\Pi_l\}_{l=1}^{d^2}$,
    \begin{equation}
        \sum_{l=1}^{d^2} \qtrace [\rho \Pi_l  ]^2 = \frac{\qtrace[\rho^2] +1}{d(d+1)}
        \label{eq:indexofcoincidence}
    \end{equation}
    for arbitrary density matrix~$\rho$.
    For the diagonal elements in Eq.~\eqref{eq:QFImatrix112},
    substituting~$\rho = \mathds{1}/d +\theta_j \lambda_j$ into Eq.~\eqref{eq:indexofcoincidence}
    and using~$\qtrace(\rho^2) = 1/d+\theta_j^2$ gives
    \begin{equation}
     J_{jj}=d^2 \sum_{l=1}^{d^2} \qtrace \left [ \lambda_j \Pi_l \right ]^2 = \frac{d}{d+1}
     \label{eq:diagonal} ,
    \end{equation}
    whereas for the off-diagonal elements,
    substituting~$\rho = \mathds{1}/d + \theta_j \lambda_j + \theta_k \lambda_k$ into Eq.~\eqref{eq:indexofcoincidence} and using~$\qtrace(\rho^2) = 1/d+\theta_j^2+\theta_k^2$ gives
    \begin{equation}
        J_{jk} = d^2 \sum_{l=1}^{d^2} \qtrace[ \lambda_j \Pi_l] \qtrace[ \lambda_k \Pi_l] = 0 \quad (j\neq k),
    \end{equation}
    thus proving Eq.~\eqref{eq:theorem1}.
\end{proof}

\section{Estimating a Subset of GMMs}
\label{sec:AppEstimatingFewGGMMs}

Consider estimating a subset~$\{\lambda_j\}_{j\in K}$
of GMMs from the maximally-mixed state~$\rho_\mathrm{m}$.
Here~$K$ denotes a subset of~$n$ indices from~$1$ to~$\nmax$
($K\subseteq [\nmax], \vert K \vert = n$).
We denote the maximum NHCRB-to-HCRB
ratio here as~$\rat_n[\{\rho_\mathrm{m}\}]$,
referring to the model~$\{\rho_\mathrm{m}\}$ of
estimating~$n$ GMM coefficients from~$\rho_\mathrm{m}$.
Now, the corresponding unbiased operators
can be written as
\begin{equation}
    X_j = \lambda_j + \sum_{m\in [\nmax]\setminus K} c_{jm} \lambda_m, \quad j \in K, \, c_{jm}\in \mathbb{R},
\end{equation}
which follows from
the unbiasedness conditions in Eq.~\eqref{eq:unbiasednessnew}.
Specifically,
$\Tr(\partial_j \rho_\theta X_k) = \Tr(\lambda_j X_k) =\delta_{jk}$
forces each~$X_j$ to contain a unit
contribution from~$\lambda_j$
due to the orthonormality of the GMMs
and~$\Tr(\rho_\theta X_j) = \theta_j$
implies the only other GMMs contributing
to~$X_j$ must be the ones not being estimated.

Notice that at the block-matrix level,
this can be rewritten as
\begin{equation}
    \mathbb{X} = \bigg [\begin{array}{c|c}
    \mathds{1}_n & \mathbb{C}^{(1)}_{n  \times (\nmax-n)}
    \end{array} \bigg ]_{n \times \nmax} \Lambda_d \, ,
\end{equation}
where~$\mathbb{X} \coloneqq [ X_1, \dots, X_n]^\top, \Lambda_d\coloneqq [ \lambda_1, \dots, \lambda_{\nmax} ]^\top$
and~$\mathbb{C}^{(1)}_{ab} = c_{ab}$.
For convenience, we also define
\begin{equation}
\mathbb{C}^{(2)} \coloneqq \bigg [\begin{array}{c|c}
    \mathds{1}_n & \mathbb{C}^{(1)}_{n  \times (\nmax-n)}
    \end{array} \bigg ]_{n \times \nmax}    
\end{equation}
so that~$\mathbb{X} \mathbb{X}^\top = \mathbb{C}^{(2)} \Lambda_d \Lambda_d^\top {\mathbb{C}^{(2)}}^\top$.

We can now use~$\CH[\rho_\mathrm{m}] \geq \SLD[\rho_\mathrm{m}] = \min_{\mathbb{X}} \bigtrace \left [  \mathbb{S}_\theta \mathbb{X} \mathbb{X}^\top \right ]$
to get
\begin{equation}
\begin{split}
    \CH[\rho_\mathrm{m}] &\geq \frac{1}{d} \min_{\mathbb{C}^{(1)}} \left ( n + \sum_{a,b} (\mathbb{C}^{(1)}_{ab})^2\right ) \\
    &= \frac{n}{d} = \SLD[\rho_\mathrm{m}] \, .
\end{split}
\end{equation}
Moreover,
as in Appendix~\ref{subsec:ProofHCRB},
$\mathbb{L}=\mathbb{X}\mathbb{X}^\top$
is a valid choice leading to
\begin{equation}
    \left ( \qtrace[\mathbb{S}_\theta \mathbb{L} ]\right )_{j,k} = \frac1{d} \left ( \delta_{jk} + \sum_{l\in [\nmax]\setminus K} \mathbb{C}^{(1)}_{jl} \mathbb{C}^{(1)}_{kl} \right ) \, ,
\end{equation}
which is real, symmetric and gives~$\bigtrace[\mathbb{S}_\theta \mathbb{L} ] = \bigtrace[\mathbb{S}_\theta \mathbb{X} \mathbb{X}^\top ]$. This proves
\begin{equation}
    \CH[\rho_\mathrm{m}] = \frac{n}{d}
\end{equation}
following the same arguments as in Appendix~\ref{subsec:ProofHCRB}.
For estimating all~$\nmax$ parameters
this reduces to Lemma~\ref{lemma:Holevo}.
Numerical checks also verify this result,
as shown, e.g., in Table~\ref{tab:Ratios}.

Notice that~$\Lambda_d \Lambda_d^\top$ is the same as~$\mathbb{X} \mathbb{X}^\top$
from Lemma~\ref{lemma:primalsolutionNHCRB},
so that,
using the fact that~${\mathbb{C}^{(2)}}^\top \mathbb{C}^{(2)} \succcurlyeq 0$,
we can modify~$\mathbb{L}^*$ from Lemma~\ref{lemma:primalsolutionNHCRB}
as
\begin{equation}
    \mathbb{L}^{**}(\mathbb{C}^{(1)}) \coloneqq \mathbb{C}^{(2)} \mathbb{L}^* {\mathbb{C}^{(2)}}^\top \, .
\end{equation}
It then follows from Lemma~\ref{lemma:primalsolutionNHCRB}
that
\begin{equation}
\begin{gathered}
    \mathbb{L}^{*} - \Lambda \Lambda^\top \succcurlyeq 0 \\
    \implies \mathbb{C}^{(2)} \left ( \mathbb{L}^{*} - \Lambda \Lambda^\top \right ) {\mathbb{C}^{(2)}}^\top \succcurlyeq 0 \\
    \implies  \mathbb{L}^{**} - \mathbb{X} \mathbb{X}^\top \succcurlyeq 0 \, .
\end{gathered}
\end{equation}
That this~$\mathbb{L}^{**}$ satisfies
the other NHCRB constraints
($\mathbb{L}^{**}_{jk}  = \mathbb{L}^{**}_{kj}$ Hermitian
from Eq.~\eqref{eq:HNCRBdefn})
for all~$\mathbb{C}^{(1)}$
is also easy to check.

Note that~$\CNH[\rho_\mathrm{m}]$ is now defined by the following minimisation:
\begin{equation}
    \begin{gathered}
        \CNH \coloneqq \min_{\mathbb{L}, \mathbb{C}^{(1)}} \Big \{ \mathbb{T}\mathrm{r} [ \mathbb{S}_\theta \mathbb{L} \; \vert \; \mathbb{L}_{jk} = \mathbb{L}_{kj} \; \mathrm{Hermitian} \, , \\
        \mathbb{L} \succcurlyeq  [\begin{array}{c|c} \mathds{1}_n & \mathbb{C}^{(1)}\end{array}  ] \; \Lambda \Lambda^\top \; { [\begin{array}{c|c} \mathds{1}_n & \mathbb{C}^{(1)}\end{array}  ]}^\top \Big \} \, ,
    \end{gathered}
\end{equation}
whereas if we restrict the minimisation over~$\mathbb{L}$
to a minimisation over our ansatz~$\mathbb{L}^{**}(\mathbb{C}^{(1)})$,
we should get a larger value than~$\CNH$,
i.e.,
\begin{equation}
\label{eq:appinequality2}
    \begin{gathered}
        \min_{\mathbb{L}, \mathbb{C}^{(1)}} \Big \{ \mathbb{T}\mathrm{r} [ \mathbb{S}_\theta \mathbb{L} \; \vert \; \mathbb{L}_{jk} = \mathbb{L}_{kj} \; \mathrm{Hermitian} \, , \\
        \mathbb{L} \succcurlyeq  [\begin{array}{c|c} \mathds{1}_n & \mathbb{C}^{(1)}\end{array}  ] \; \Lambda \Lambda^\top \; { [\begin{array}{c|c} \mathds{1}_n & \mathbb{C}^{(1)}\end{array}  ]}^\top \Big \} \, \\
        \leq \min_{\mathbb{C}^{(1)}} \Big \{ \mathbb{T}\mathrm{r} [ \mathbb{S}_\theta \mathbb{L}^{**} \; \vert \; \mathbb{L}^{**}_{jk} = \mathbb{L}^{**}_{kj} \; \mathrm{Hermitian} \, , \\
        \mathbb{L}^{**} \succcurlyeq  [\begin{array}{c|c} \mathds{1}_n & \mathbb{C}^{(1)}\end{array}  ] \; \Lambda \Lambda^\top \; { [\begin{array}{c|c} \mathds{1}_n & \mathbb{C}^{(1)}\end{array}  ]}^\top \Big \} \, .
    \end{gathered}
\end{equation}
This is because the minimisation on the RHS of Eq.~\eqref{eq:appinequality2} is over a subset of the set over which the minimisation on the LHS is performed.
The quantity on the RHS of Eq.~\eqref{eq:appinequality2}
can then be simplified to
$$\min_{\mathbb{C}^{(1)}} \left \{ \frac{d+1}{d} \left ( n + \sum_{a,b} \left (\mathbb{C}^{(1)}_{ab}\right )^2\right ) \right \} = \frac{(d+1)n}{d} \, .$$
This lets us upper-bound~$\CNH[\rho_\mathrm{m}]$ as
\begin{equation}
    \CNH[\rho_\mathrm{m}] \leq \frac{(d+1) n}{d} \, ,
\end{equation}
which for estimating all~$\nmax$ parameters reduces to
Eq.~\eqref{eq:upperboundNHCRB} from Subsec.~\ref{subsec:ArbitStates}.
Combining with~$\CH[\rho_\mathrm{m}] = n/d$,
we find
\begin{equation}
    \rat_n[\{\rho_\mathrm{m}\}] = \max_{K\subseteq [\nmax], \vert K \vert = n}\frac{\CNH[\rho_\mathrm{m}]}{\CH[\rho_\mathrm{m}]} \leq d+1 \, ,
\end{equation}
as claimed in Theorem~\ref{th:estfewGMMs}.
Numerically,
we see the ratio~$\rat_n[\{\rho_\mathrm{m}\}]$ actually depends on~$n$:
as~$n$ increases up to~$\nmax$,
the ratio increases up to~$d+1$.
\renewcommand{\arraystretch}{1.5}
\begin{table}[htbp]
    \centering
    \begin{tabular}{|c|c|c|c|}
        \hline
         n & $\CH$ & Range:$\CNH$ & \makecell{Max Ratio\\ $\rat_n[\{\rho_\mathrm{m}\}]$} \\ \hline
         2 & $2/3$ & (2/3, 4/3) & 2 \\ \hline         
         3 & $1$ & (3/2, 3) & 3\\ \hline
         4 & $4/3$ & (2.8270, 4.3154) & 3.2365\\ \hline
         5 & $5/3$ & ($25/6$, 6.6427) & 3.9856 \\ \hline
         6 & $2$ & (6, 7.0921) & 3.5461 \\ \hline
         7 & $7/3$  & (8.4369, 8.4951) & 3.6408\\ \hline
         8 & $8/3$  & $32/3$ &  4 \\ \hline
    \end{tabular}
    \caption{HCRB and NHCRB for estimating
    a subset~$\{\lambda_j\}_{j\in K}\subseteq \Lambda_3$
    of GMMs from the maximally-mixed qutrit state~$\rho_\mathrm{m}$.
    The HCRB depends only on the number of parameters,
    $\vert K\vert=n$, but
    the NHCRB depends on the subset~$K$ chosen,
    so we tabulate its range in the third column,
    as (Min NHCRB, Max NHCRB).
    The fourth column lists the maximum
    ratio,~$\rat_n[\{\rho_\mathrm{m}\}]$,
    between the NHCRB and the HCRB,
    taking into account all possible subsets~$\{\lambda_j\}_{j\in K}$.}
    \label{tab:Ratios}
\end{table}
\renewcommand{\arraystretch}{1}
Table~\ref{tab:Ratios}
lists out~$\CH$, the minimum and maximum values
of~$\CNH$ and the maximum ratio~$\rat_n[\{\rho_\mathrm{m}\}]$
for estimating a given number,~$n$, of GMMs from qutrits.
The HCRB only depends on~$n$ but
not on which GMMs are chosen and is equal to~$n/d$.

\section{Summary of Relevant Results from Ref.~\texorpdfstring{\cite{Suzuki2023}}{[46]}}
\label{sec:summaryJunTR}

In this appendix, we summarize the approach
and results from Ref.~\cite{Suzuki2023}
that are relevant for proving
the upper bound to the NHCRB
in main-text Eq.~\eqref{eq:upNHBJunTR},
\begin{equation}
\begin{split}
    \CNH[\rho_\theta] \leq \min_\mathbb{X} \bigg \{ &\ctrace ( \mathbb{Z}_\theta[\mathbb{X}]) \\
    &+   \sum_{j,k\in[n]} \left \Vert \rho_\theta [X_j, X_k] \right \Vert_1 \bigg \} \, .
\end{split}
\end{equation}
Ref.~\cite{Suzuki2023} analyses the concept of gap persistence
between the NHCRB
and the HCRB with increasing
number of copies in multi-parameter quantum estimation.
In doing so, the authors
upper-bound and lower-bound the NHCRB
(Secs.~3.2 \&~3.3, pg. 58 in~\cite{Suzuki2023}).

Say the~$d$-dimensional Hilbert space
of the qudit,~$\mathcal{H}_d$
is combined with the~$n$-dimensional
(complex) parameter space
to define an extended
Hilbert space~$\mathbb{H} = \mathbb{C}^n \otimes \mathcal{H}_d$.
Ref.~\cite{Suzuki2023} then
defines the NHCRB as
\begin{equation}
    \CNH \coloneqq \min_{\mathbb{X}} F_{\mathrm{NH}}(\mathbb{X}) 
\end{equation}
subject to~$\mathbb{X}$ being LUB operators,
and with the Nagaoka-Hayashi (NH)
function~$F_{\mathrm{NH}}(\mathbb{X})$ given by
\begin{equation}
\begin{gathered}
    F_{\mathrm{NH}}(\mathbb{X}) \coloneqq \min_{\mathbb{L}}\big\{\bigtrace[\mathbb{S}\mathbb{L}] \, \vert \,  \mathbb{L} \in \mathcal{L}_{+, \operatorname{sym}}(\mathbb{H}), \\
    \quad \quad \quad \quad \quad \quad \quad \quad \quad \, \mathbb{L} \geq \mathbb{X}\mathbb{X}^{\top}\big\} \, ,
\end{gathered}
\end{equation}
where~$\mathcal{L}_{+, \operatorname{sym}}(\mathbb{H})$
denotes the set of all positive semidefinite
operators~$\mathbb{L}\succcurlyeq 0$ on~$\mathbb{H}$
that are also symmetric under
the partial transpose with respect to the
first Hilbert space, i.e.,~$\mathbb{L}_{jk} = \mathbb{L}_{kj}$ for all~$j,k\in[n]$.

Ref.~\cite{Suzuki2023} then proves that
the NH function can be rewritten as
\begin{equation}
    \begin{split}
   F_{\mathrm{NH}}(\mathbb{X}) &= \ctrace\{\operatorname{Re} \mathbb{Z}_\theta[\mathbb{X}]\} + F_{\mathrm{NH}, 2}(\mathbb{X}) \, , \\
   F_{\mathrm{NH}, 2}(\mathbb{X}) &\coloneqq \min_{\mathbb{V}}\big\{\mathbb{T}\mathrm{r} [\mathbb{V}] \, \vert \,  \mathbb{V} \in \mathcal{L}_{+, \operatorname{sym}}(\mathbb{H}), \\
   & \qquad \quad   \mathbb{V} \geq \operatorname{sym}_{-}(\sqrt{\mathbb{S}_\theta} \mathbb{X} \mathbb{X}^{\top}\sqrt{\mathbb{S}_\theta}) \big\} \, ,
    \end{split}
\end{equation}
where~$\mathbb{Z}_\theta[\mathbb{X}]_{jk} = \qtrace ( \rho_\theta X_j X_k )$ as in Eq.~\eqref{eq:Holevodefn1},
$\mathbb{S}_\theta = \mathds{1}_n\otimes \rho_\theta$,
and~$\mathrm{sym}_{-}(\mathbb{A})=\frac12 (\mathbb{A}-\mathbb{A}^\top)$, with~$^\top$ denoting
partial transpose with respect to parameter space.
This rearrangement makes
\begin{equation}
    \CNH = \min_{\mathbb{X}} \left \{   \ctrace\{\operatorname{Re} \mathbb{Z}_\theta[\mathbb{X}]\} + F_{\mathrm{NH}, 2}(\mathbb{X}) \right \} \, ,
\end{equation}
in which the first term
is equal to the Holevo
objective function in Eq.~\eqref{eq:Holevodefn1}.

Finally, the authors prove
as one of their results
(Theorem 1, Sec.~3.2 in Ref.~\cite{Suzuki2023})
that
the second term of the
NH function is bounded from above as
\begin{equation}
   F_{\mathrm{NH}, 2}(\mathbb{X}) \leq 
   \sum_{j, k} \left \Vert \sqrt{\rho_\theta}[X_{j}, X_{k}] \sqrt{\rho_\theta}\right \Vert_1 \, ,
\end{equation}
so that the NHCRB can be upper-bounded
as
\begin{equation}
\begin{split}
   \CNH \leq \min_{\mathbb{X}}\bigg\{&\ctrace\{\Re \mathbb{Z}_\theta[\mathbb{X}]\}\\  
   & \quad \,  + \sum_{j, k}  \left \Vert \sqrt{\rho_\theta} [X_{j}, X_{k}] \sqrt{\rho_\theta} \right \Vert_1 \bigg \}\, ,
\end{split}
\end{equation}
from which
main-text Eq.~\eqref{eq:upNHBJunTR}
follows.
Physically, this argument reveals
the difference between the HCRB and the NHCRB
to originate from the second term,~$F_{\mathrm{NH}, 2}(\mathbb{X})$;
this term captures the non-commutativity inherent in quantum measurements for multi-parameter estimation.
This result from Ref.~\cite{Suzuki2023} forms
a basis of our proof of Theorem~\ref{th:ratngenstate}
showing that the ratio $\CNH[\rho_\theta]/\CH[\rho_\theta]$ is upper-bounded by~$n$.
For further details, readers are encouraged to consult Ref.~\cite{Suzuki2023}, which
presents an in-depth exploration
of the relationship between these two bounds.

\section{Extension to Arbitrary Weight Matrices}
\label{sec:arbitweight}

In this section, we extend the
ratio bound of~$d+1$ for
the linear GMM model
to arbitrary, parameter-independent, positive weight matrices~$W$.
For fair comparison with the
unweighted case,
corresponding to~$W=\mathds{1}_n$,
we trace-normalise~$\ctrace(W)=n$.
Additionally,
$W$ must be real, symmetric
and positive~$(W\succ 0)$.
Below, we shall refer to
estimating from~$\rho_\theta$
under weight matrix~$W$
as estimating from~$(\rho_\theta, W)$
and denote the corresponding
precision bounds by~$\CH^W[\rho_\theta]$
and~$\CNH^W[\rho_\theta]$.

This weighted model corresponds
to reparameterisations
of the linear GMM model~\cite{Lorcan21,Fujiwara1999,ABGG20},
i.e., estimating any~$\nmax$ parameters
that are not necessarily coefficients of the GMMs.
Similar to the other cases
where all~$\nmax$
parameters are estimated,
the unbiased operators
are uniquely fixed to be~$X_j = \lambda_j$.
We first bound the weighted
HCRB and the weighted NHCRB
to prove that the collective enhancement
is at most~$d+1$ for
estimating from~$(\rho_\mathrm{m}, W)$
for any~$W$.
Then, to extend to arbitrary states~$\rho_\theta \neq \rho_\mathrm{m}$,
we numerically
demonstrate that
the collective enhancement for estimating from~$(\rho_\theta, W)$
is always smaller
than the collective enhancement
for estimating from~$(\rho_\mathrm{m}, W)$.
However, we do not prove this.

The weighted HCRB is defined via~\cite{Albarelli2019}
\begin{equation}
    \CH^W[\rho_\theta] \coloneqq \min_{\substack{V\in \mathbb{R}^{n\times n},\\ V=V^\top}} \left \{ \ctrace[W V] \; \vert \; V \succcurlyeq \mathbb{Z}_\theta[\mathbb{X}] \right \} \, ,
\end{equation}
where, by explicit computation
for the maximally-mixed case,~$\mathbb{Z}_\theta[\mathbb{X}]_{jk}=\Tr [ \rho_\mathrm{m} X_j X_k] = \delta_{jk}/d$
or~$\mathbb{Z}_\theta[\mathbb{X}] = 1/d \, \mathds{1}_n$.
Then, it follows
from the positivity of~$W$
that~$V\succcurlyeq 1/d \; \mathds{1}_n$ implies
\begin{equation}
    W V \succcurlyeq \frac{1}{d} W \implies \ctrace[W V] \geq \ctrace[W]/d = \frac{n}{d} \, .
\end{equation}
This proves~$\CH^W[\rho_\mathrm{m}] \geq n/d$.

The weighted NHCRB is defined via~\cite{Lorcan21}
\begin{equation}
\label{eq:weightedNagaoka}
    \begin{split}
    \CNH^W[\rho_\theta] \coloneqq \min_{\mathbb{L}} \Bigg \{ \ctrace[W V] \; \vert \; V = \qtrace[\mathbb{S}_\theta \mathbb{L}] \, \\
    \mathbb{S}_\theta = \mathds{1}_n\otimes \rho_\theta, \, \mathbb{L}_{jk} = \mathbb{L}_{kj} \mathrm{Hermitian} \, ,\\
    \mathbb{L} \succcurlyeq \mathbb{X} \mathbb{X}^\top \Bigg \} \, .
    \end{split}
\end{equation}
Notably, the feasibility constraints on~$\mathbb{L}$
are unchanged from the unweighted case,
i.e.,
the optimal~$\mathbb{L}^*$ from
Lemma~\ref{lemma:primalsolutionNHCRB}
still satisfies~$\mathbb{L}^*_{jk} = \mathbb{L}^*_{kj}$
Hermitian
and~$\mathbb{L}^* \succcurlyeq \mathbb{X} \mathbb{X}^\top$,
despite not being optimal for
the minimisation in Eq.~\eqref{eq:weightedNagaoka}.
This sub-optimal~$\mathbb{L}^*$
thus yields an upper bound to the minimum
in Eq.~\eqref{eq:weightedNagaoka},
\begin{equation}
\label{eq:mmmsnhw}
\begin{split}
    \CNH^W[\rho_\mathrm{m}] \leq \ctrace\left [W \qtrace \left [\frac{1}{d} \mathds{1}_{nd} \mathbb{L}^* \right ] \right ] \\
    = \frac{d+1}{d} \ctrace[W] = \frac{n(d+1)}{d}  \, ,
\end{split}    
\end{equation}
which proves~$\CNH^W[\rho_\mathrm{m}] \leq n (d+1)/d$.
Combining with~$\CH^W[\rho_\mathrm{m}] \geq n/d$
then proves the claim,
\begin{equation}
    \frac{\CNH^W[\rho_\mathrm{m}]}{\CH^W[\rho_\mathrm{m}]} \leq d+1 \, .
\end{equation}

\begin{figure}[htb]
    \centering
    \includegraphics[width=0.8\columnwidth]{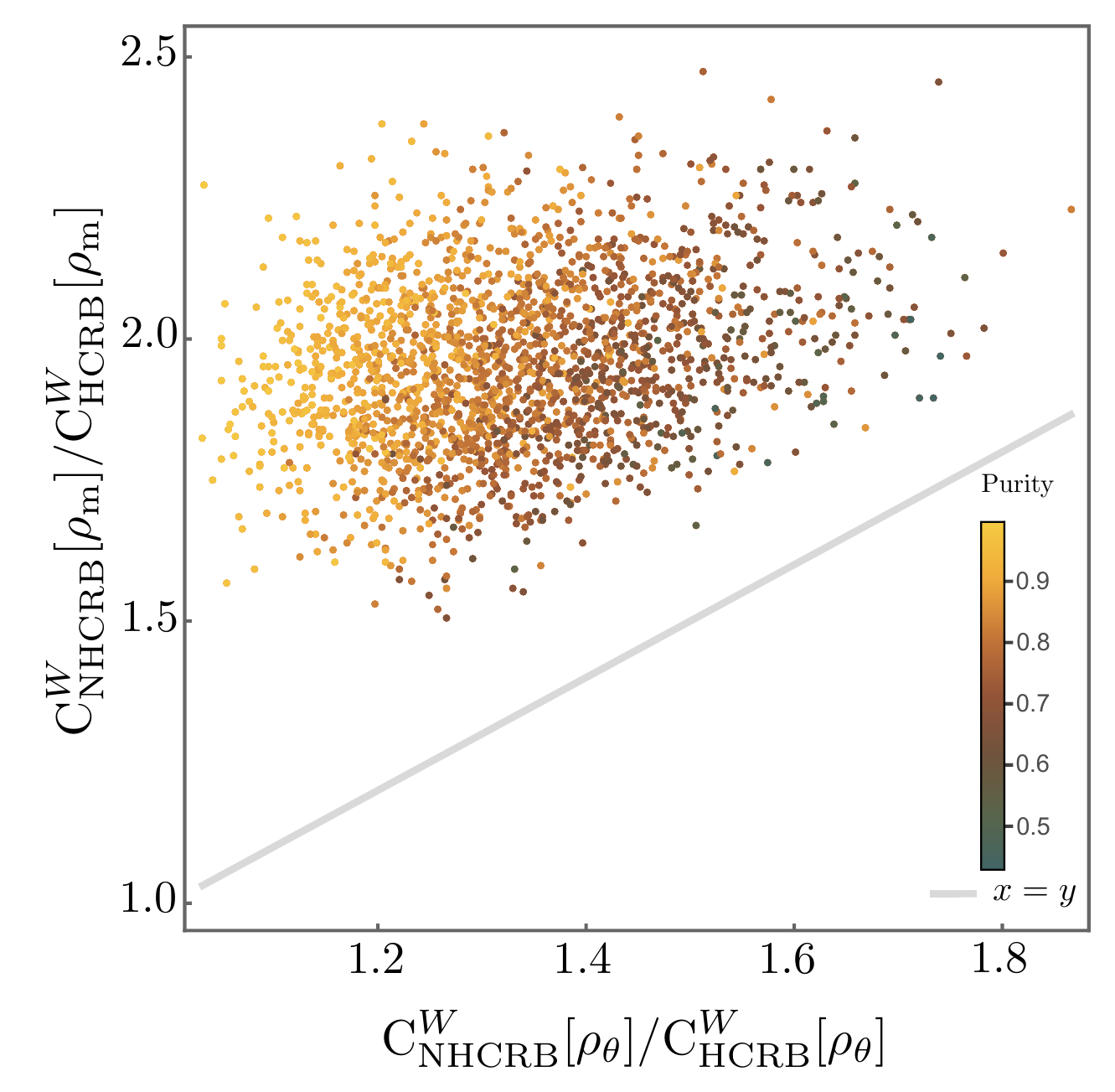}
    \caption{Comparison of the weighted ratio~$\CNH^W/\CH^W$
    for estimating from~$(\rho_\theta, W)$
    to that from~$(\rho_\mathrm{m}, W)$,
    for the full-parameter linear GMM model (5000 samples).
    The points are colour-coded
    by the purity of~$\rho_\theta$
    and the gray line corresponds
    to~$y=x$.}
    \label{fig:weightedtomo}
\end{figure}

So far, in this weighted tomography setting,
which is equivalent to full tomography in arbitrary basis,
we have established the
ratio to be at most~$d+1$
only for the maximally-mixed state.
We do not prove the bound
for arbitrary states
but numerically demonstrate
its validity in Fig.~\ref{fig:weightedtomo}.
By generating random
full-rank, real, symmetric and positive~$W$,
such that~$\ctrace(W)=n$,
and random full-rank states~$\rho_\theta$,
we compare the ratio
for~$(\rho_\theta, W)$,~$\CNH^W[\rho_\theta]/\CH^W[\rho_\theta]$,
to the ratio
for~$(\rho_\mathrm{m},W)$,~$\CNH^W[\rho_\mathrm{m}]/\CH^W[\rho_\mathrm{m}]$.
Repeating this over~5000
random samples of~$W$ and~$\rho_\theta$,
we find the ratio for~$(\rho_\theta,W)$
to always be smaller than the
ratio for~$(\rho_\mathrm{m}, W)$.
This means
\begin{equation}
    \frac{\CNH^W[\rho_\theta]}{\CH^W[\rho_\theta]}
    \leq \frac{\CNH^W[\rho_\mathrm{m}]}{\CH^W[\rho_\mathrm{m}]} \leq d+1 \, ,
\end{equation}
thus establishing the upper bound
of~$d+1$ for arbitrary full-parameter estimation
(or tomography in arbitrary basis)
from any state,
\begin{equation}
    \rat_{\nmax} \leq d+1 \, .
\end{equation}

\section{True Maximum Collective Enhancement in GMM Model}
\label{sec:appMICRB}

The tight bound for individual precision
called the MICRB~$\MI$
was reformulated in Ref.~\cite{HayashiOuyang2023}
using the following conic
optimisation problem (Eq.~(39) in Ref.~\cite{HayashiOuyang2023}),
\begin{equation}
\label{eq:CMIdef2}
    S(P_1) \coloneqq \min_{X\in\mathcal{S}_{SEP}} \left \{ \bigtrace  \left [ \left ( W \otimes \rho \right ) X \right ]  {\big \vert} \mathsf{C}_1, \mathsf{C}_2 \mathrm{hold}. \right \}, 
\end{equation}
where~$W$ is the~$(n+1)\times (n+1)$ weight matrix
defined with~$W_{11} = W_{1j} = W_{j1} = 0$
for all~$j \in \{1, \dots, n+1\}$,
and~$\mathsf{C}_1,\mathsf{C}_2$
refer to two equality constraints
on the~$(n+1)d\times(n+1)d$-sized
operator~$X$,
\begin{equation}
\label{eq:const1}
    \mathsf{C}_1: \, \, \,    \Tr_\mathcal{X}\left [ ( \ketbra{0} \otimes \mathds{1}_d) X \right ] = \mathds{1}_d \, ,
\end{equation}
and
\begin{equation}
\begin{split}    
    \label{eq:const2}
    \mathsf{C}_2: \, \, \bigtrace  \left [ \left ( \frac{\ket{0}\bra{j} + \ket{j}\bra{0}}{2} \otimes \partial_k \rho \right ) X\right ] &=  \delta_{jk} \\
        (\text{for all} \; 1 < j, k \leq n&+1)  \, ,
\end{split}
\end{equation}
which ensure the POVM constraint
and the local unbiasedness constraint, respectively.
The constraint~$\mathsf{C}_1$
involves partial tracing
over space~$\mathcal{X}$, which
denotes an~$(n+1)$-dimensional
real vector space spanned by~$\{\ket{0}, \dots, \ket{n}\}$.
Lastly, the
optimisation domain in Eq.~\eqref{eq:CMIdef2}
is the separable cone~$\mathcal{S}_{SEP}$,
denoting
the convex hull of~$(n+1)d$-dimensional operators
that are tensor products of
real, symmetric, positive-semidefinite operators on the space~$\mathcal{X}$
(denoted~$\mathcal{M}_{rs,+}(\mathcal{X})$ in Ref.~\cite{HayashiOuyang2023}),
and complex, Hermitian, positive-semi-definite operators
on the~$d$-dimensional Hilbert space~$\mathcal{H}_d$
(denoted~$\mathcal{B}_{sa,+}(\mathcal{H}_d)$ in Ref.~\cite{HayashiOuyang2023}).
In the notation of Ref.~\cite{HayashiOuyang2023},
this cone is denoted~$\mathcal{S}_{SEP} \coloneqq \mathrm{conv}(\mathcal{M}_{rs,+}(\mathcal{X}) \otimes \mathcal{B}_{sa,+}(\mathcal{H}_d))$.

For the problem of GMM tomography
of arbitrary qudit states,
the derivatives appearing in constraint~$\mathsf{C}_2$,
$\partial_k \rho = \lambda_k$,
are parameter independent.
Thus, feasibility for the MICRB minimisation
(comprising inclusion in the separable cone~$X\in\mathcal{S}_{SEP}$,
and constraints~$\mathsf{C}_1$ \&~$\mathsf{C}_2$)
does not depend on the state~$\rho$ being estimated,
i.e., the true values of the parameters~$\theta$.
Let us define a candidate solution~$X_\mathrm{sol}$
to the MICRB in a block-wise manner,
\begin{equation}
\label{eq:candSoldef}
    (X_\mathrm{sol})_{jk} \coloneqq \begin{cases}
        \mathds{1}_d & j=k=1 \\
        \lambda_j & j > k = 1 \\
        \lambda_k & k > j = 1 \\
        \frac{d+1}{d+2} \left ( \{\lambda_j, \lambda_k\} + \delta_{jk} \mathds{1}_d \right) & j,k > 1
    \end{cases} \, ,
\end{equation}
where block indices~$j,k$ run from~1
to~$n+1$.
In fact, this candidate solution
can be rewritten as
\begin{equation}
    X_\mathrm{sol} = \begin{bmatrix}
        \mathds{1}_d & \mathbb{X}^\top \\
        \mathbb{X} & \mathbb{L}^*
    \end{bmatrix}
\end{equation}
where~$\mathbb{X}=\{\lambda_1, \dots, \lambda_n\}^\top$
(from Eq.~\eqref{eq:optXsol}) and~$\mathbb{L}^*$
(from Eq.~\eqref{eq:optLmat})
are the NHCRB-optimal solutions
(see Lemma~\ref{lemma:primalsolutionNHCRB}).
The NHCRB optimisation constraint
was the positivity of~$X_\mathrm{sol}$ above~\cite{HayashiOuyang2023},
which was proven
through~$\mathbb{L}^* - \mathbb{X}\mathbb{X}^\top \succcurlyeq 0$
in Lemma~\ref{lemma:primalsolutionNHCRB} (Appendix~\ref{sec:ProofLemmaNHCRBmaxmix}).
Thus, Lemma~\ref{lemma:primalsolutionNHCRB} solved
the NHCRB for the maximally-mixed state.

It can be easily checked
that~$X_\mathrm{sol}$
satisfies both constraints~$\mathsf{C}_1$,
through the $j=k=1$ term,
and~$\mathsf{C}_2$,
through the~$j>k=1$ and the~$k>j=1$ terms.
The proof that~$X_\mathrm{sol}$
belongs to the separable cone~$\mathcal{S}_{SEP}$
(deferred to Lemma~\ref{lemma:proofSeparableCone} below)
follows by expressing~$X_\mathrm{sol}$
as the sum over~$d^2$ operators,~$\sum_{l\in[d^2]} \Xi_l \otimes \Pi_l$,
where each~$\Xi_l$ is an estimator
matrix in~$\mathcal{M}_{rs,+}(\mathcal{X})$
and~$\Pi_l$ is the~$l^\text{th}$ element
of a SIC POVM, therefore belonging to~$\mathcal{B}_{sa,+}(\mathcal{H}_d)$.
Thus,~$X_\mathrm{sol}$ is feasible for the MICRB
minimisation for GMM tomography of arbitrary qudits.

In fact, while~$X_\mathrm{sol}$
is feasible (not necessarily optimal) for all true states,
it is the MICRB-optimal solution for the maximally-mixed state.
Tomography in the GMM basis corresponds to
the identity-weighted full GMM model,
so the appropriate matrix~$W$ is
\begin{equation}
    W_\mathrm{id} = \begin{bmatrix}
        0 & 0 & \dots 0 \\
        0 & 1 & \dots 0 \\
        0 & 0 & \dots 0 \\
        0 & 0 & \dots 1 
    \end{bmatrix}_{(n+1)\times (n+1)} \, .
\end{equation}
In this case,
the objective in Eq.~\eqref{eq:CMIdef2}
equals~$\bigtrace \left [ (\mathds{1}_n \otimes \rho ) X_{22} \right ]$
where~$X_{22}$ denotes the block of~$X$
starting from row and column indices~$d+1$
up to indices~$(n+1)d$.
For the candidate solution,~$X_{22}$ is~$\mathbb{L}^*$,
so the objective equals~$\bigtrace (\mathbb{S}_\theta \mathbb{L}^*)$
where~$\mathbb{S}_\theta = \mathds{1}_n \otimes \rho$
as previously defined.
This objective value is identical to
the NHCRB objective (see Eq.~\eqref{eq:HNCRBdefn})
that was computed in Lemma~\ref{theorem:NHCRB}
(and proved in Appendix~\ref{subsec:ProofOptimality}).
For the maximally-mixed state~$\rho_\mathrm{m} = \mathds{1}_d/d$,
the objective value attained by~$X_\mathrm{sol}$
therefore equals
\begin{equation}
\label{eq:Xsolobjective}
\begin{split}
    \bigtrace \left [ (W_\mathrm{id} \otimes \rho_\mathrm{m}) X_\mathrm{sol} \right ] = \frac{\bigtrace[\mathbb{L}^*]}{d}  \\
    = \frac{n(d+1)}{d}  = \CNH[\rho_\mathrm{m}] \, .
\end{split}
\end{equation}
As~$X_\mathrm{sol}$ is feasible,
this upper-bounds the minimum in Eq.~\eqref{eq:CMIdef2}
as~$\MI[\rho_\mathrm{m}] \leq \CNH[\rho_\mathrm{m}]$,
whereas, by definition,~$\CNH[\rho_\mathrm{m}] \leq \MI[\rho_\mathrm{m}]$,
thus proving~$\MI[\rho_\mathrm{m}]=\CNH[\rho_\mathrm{m}]$
and the optimality of $X_\mathrm{sol}$.
Further, since~$X_\mathrm{sol}$ is feasible for any~$\rho$,
the objective~$\bigtrace [ (W_\mathrm{id} \otimes \rho) X_\mathrm{sol}]$
also upper-bounds the minimisation
in Eq.~\eqref{eq:CMIdef2}
for any other qudit state~$\rho$.
By direct computation (see Eq.~\eqref{eq:calSthetatrace})
we have
\begin{equation*}
    \bigtrace \left [ (W_\mathrm{id} \otimes \rho) X_\mathrm{sol} \right ] = \frac{n(d+1)}{d} \, ,
\end{equation*}
so that,
taking into account
the correction for
non-zero true parameter values,
this proves~$\MI[\rho_\theta] \leq \frac{n(d+1)}{d} - \sum_j \theta_j^2$,
similar to Eq.~\eqref{eq:upperboundNHCRB} for the NHCRB.

Then, the same argument used
to prove Theorem~\ref{th:ratdarbstate}
for the NHCRB can be used to prove
the analogue theorem below for the MICRB.
\begin{theorem}
\label{th:ratdMIarbstate}
    For tomography on arbitrary~$d$-dimensional qudit states~$\rho_\theta$, the maximum true collective enhancement~$\ratMI[\{\rho_\theta\}] = \max_{\theta \in \Theta} \MI[\rho_\theta]/\CH[\rho_\theta] \leq d+2$. 
\end{theorem}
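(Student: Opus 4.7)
The plan is to mirror the proof of Theorem~\ref{th:ratdarbstate} but with the MICRB in place of the NHCRB, relying on the candidate solution~$X_\mathrm{sol}$ constructed in the paragraphs immediately preceding the theorem statement. First, I would record the upper bound on~$\MI[\rho_\theta]$ obtained by evaluating the conic program in Eq.~\eqref{eq:CMIdef2} at~$X_\mathrm{sol}$. Since~$X_\mathrm{sol}$ is feasible for arbitrary~$\rho$ (the constraints~$\mathsf{C}_1$,~$\mathsf{C}_2$ and~$X\in\mathcal{S}_{SEP}$ are~$\rho$-independent), and since~$\bigtrace[(W_\mathrm{id}\otimes\rho)X_\mathrm{sol}]=\bigtrace(\mathbb{S}_\theta\mathbb{L}^*)$ was already computed in Eq.~\eqref{eq:calSthetatrace} to be~$\nmax(d+1)/d$, accounting for the~$-\sum_j\theta_j^2$ shift gives
\begin{equation*}
    \MI[\rho_\theta] \leq \frac{(d^2-1)(d+1)}{d} - \sum_j \theta_j^2 \, .
\end{equation*}

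Next, I would pair this with the HCRB lower bound~$\CH[\rho_\theta]\geq (d^2-1)/d - \sum_j\theta_j^2$ from Eq.~\eqref{eq:lowerboundHCRB} in Sec.~\ref{subsec:ArbitStates}. Introducing the purity~$\mathrm{P}(\rho_\theta)=1/d+\sum_j\theta_j^2$, the two bounds rewrite as~$d^2+d-1-\mathrm{P}(\rho_\theta)$ and~$d-\mathrm{P}(\rho_\theta)$ respectively, so the ratio of MICRB over HCRB is at most
\begin{equation*}
    \frac{\MI[\rho_\theta]}{\CH[\rho_\theta]} \leq \frac{d^2+d-1-\mathrm{P}(\rho_\theta)}{d-\mathrm{P}(\rho_\theta)} \, .
\end{equation*}
A direct monotonicity check (the right-hand side is an increasing function of~$\mathrm{P}$ on~$[1/d,1]$) shows that the maximum over the physical purity range~$1/d\leq \mathrm{P}(\rho_\theta)\leq 1$ is attained at~$\mathrm{P}(\rho_\theta)=1$, yielding the value~$d+2$. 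This establishes~$\ratMI[\{\rho_\theta\}]\leq d+2$, and the same orthogonal-transformation argument used in Sec.~\ref{subsec:model} (via Lemma~\ref{lemma:orthtransmat}) transports the bound to tomography in any ONB.

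The only genuinely new technical work for this theorem is already discharged: verifying that~$X_\mathrm{sol}$ lies in the separable cone~$\mathcal{S}_{SEP}$, which is handled by the separate Lemma~\ref{lemma:proofSeparableCone} via an explicit decomposition~$X_\mathrm{sol}=\sum_{l\in[d^2]} \Xi_l\otimes\Pi_l$ with~$\Pi_l$ a SIC POVM and~$\Xi_l\in\mathcal{M}_{rs,+}(\mathcal{X})$. I would therefore expect the main obstacle to be checking this separable decomposition carefully (in particular matching the off-diagonal~$\{\lambda_j,\lambda_k\}$ blocks of~$\mathbb{L}^*$ to bilinear combinations of the SIC operators~$\Pi_l$), but once that is in hand, the proof of Theorem~\ref{th:ratdMIarbstate} itself reduces to the short purity-extremization argument sketched above, exactly parallel to Theorem~\ref{th:ratdarbstate}. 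As noted in the discussion after Theorem~\ref{th:ratdarbstate}, one expects the attainable bound to be~$d+1$ rather than~$d+2$, the looseness originating from the decreasing-in-purity HCRB lower bound used here.
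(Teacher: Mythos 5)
Your proposal is correct and follows essentially the same route as the paper: feasibility of~$X_\mathrm{sol}$ for arbitrary~$\rho_\theta$ (with the separability discharged by Lemma~\ref{lemma:proofSeparableCone}) gives~$\MI[\rho_\theta]\leq n(d+1)/d-\sum_j\theta_j^2$, which is combined with the HCRB lower bound of Eq.~\eqref{eq:lowerboundHCRB} and the purity extremization at~$\mathrm{P}(\rho_\theta)=1$ to yield~$d+2$, exactly as in Appendix~\ref{sec:appMICRB}. Your added monotonicity check of the ratio in~$\mathrm{P}$ and the remark on transporting the bound to any ONB via Lemma~\ref{lemma:orthtransmat} are consistent with, and slightly more explicit than, the paper's argument.
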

\begin{proof}
Combining the lower bound for the HCRB in Eq.~\eqref{eq:lowerboundHCRB}
with the upper bound for the MICRB
given by~$\MI[\rho_\theta] \leq \frac{n(d+1)}{d} - \sum_j \theta_j^2$,
we get
\begin{equation}
\label{eq:ratioMIeq2}
    \frac{\MI[\rho_\theta]}{\CH[\rho_\theta]} \leq \frac{d^2+d-1-\mathrm{P}(\rho_\theta)}{d-\mathrm{P}(\rho_\theta)} .
\end{equation}
Then, using $1/d\leq \mathrm{P}(\rho_\theta)\leq 1$,
we find the maximum of the right hand side of Eq.~\eqref{eq:ratioMIeq2}
to be~$d+2$, attained when~$\mathrm{P}(\rho_\theta)=1$,
i.e., when~$\rho_\theta$ is pure.
\end{proof}

We now prove that~$X_\mathrm{sol}$
belongs to the separable cone~$\mathcal{S}_{SEP}$.
The proof proceeds utilising the fact
that~$\mathbb{X}$ and~$\mathbb{L}^*$ appearing in~$X_\mathrm{sol}$
originate from the optimal solution
to the NHCRB for the maximally-mixed state,
where SIC POVMs (if they exist) are the optimal measurements
(proved in Lemma~\ref{th:CFI}, Appendix~\ref{subsec:ProofCFI}).
This fact connects the SIC POVM elements,
denoted~$\Pi^*_l$ for~$1\leq l \leq d^2$,
to the optimal unbiased operators,~$X_j = \lambda_j$,
and the optimal operator~$\mathbb{L}^*$
through a real, linear transform.
This transformation can then be used
to construct classical matrices~$\Xi_l \in \mathcal{M}_{rs,+}(\mathcal{X})$
that prove~$X_\mathrm{sol} = \sum_{l\in[d^2]} \Xi_l \otimes \Pi^*_l$.

\begin{lemma}
\label{lemma:proofSeparableCone}
    The candidate solution~$X_\mathrm{sol}$
    defined in Eq.~\eqref{eq:candSoldef}
    belongs to the
    separable cone~$\mathcal{S}_{SEP}$.
\end{lemma}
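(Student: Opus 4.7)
The plan is to exhibit $X_\mathrm{sol}$ as a finite convex combination of tensor products $\Xi \otimes B$ with $\Xi \in \mathcal{M}_{rs,+}(\mathcal{X})$ and $B \in \mathcal{B}_{sa,+}(\mathcal{H}_d)$, placing it inside $\mathcal{S}_{SEP}$ by definition. The intermediate step is an integral representation over pure states in which the ``classical'' factor is rank one and real; the integral is then discretised using a complex projective $3$-design in $\mathcal{H}_d$.

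To each pure state $|\phi\rangle \in \mathcal{H}_d$ associate the real vector
\[
    u(\phi) \;\coloneqq\; \bigl(1,\; (d{+}1)\langle\phi|\lambda_1|\phi\rangle,\;\ldots,\; (d{+}1)\langle\phi|\lambda_n|\phi\rangle\bigr)^{\!\top} \;\in\; \mathbb{R}^{n+1},
\]
so that each term $u(\phi)u(\phi)^\top \otimes |\phi\rangle\langle\phi|$ is manifestly a rank-one element of $\mathcal{M}_{rs,+}(\mathcal{X})\otimes\mathcal{B}_{sa,+}(\mathcal{H}_d)$. The central claim is
\[
    X_\mathrm{sol} \;=\; d \int d\mu(\phi)\; u(\phi)\,u(\phi)^\top \otimes |\phi\rangle\langle\phi|,
\]
with $d\mu$ the normalised Fubini--Study measure. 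Verifying this reduces to three block-wise identities: the $(1,1)$ block is the $1$-design statement $d\int|\phi\rangle\langle\phi|\,d\mu = \mathds{1}_d$; the $(1,j{+}1)$ block reduces via the $2$-design moment $\int|\phi\rangle\langle\phi|^{\otimes 2}d\mu = \tfrac{1}{d(d+1)}(\mathds{1}+\mathrm{SWAP})$, partial-traced against $\lambda_j\otimes\mathds{1}$ and using tracelessness of $\lambda_j$, to $\lambda_j$; and the $(j{+}1,k{+}1)$ block requires the $3$-design moment $\int|\phi\rangle\langle\phi|^{\otimes 3}d\mu = \tfrac{1}{d(d+1)(d+2)}\sum_{\pi\in S_3} P_\pi$, whose partial trace against $\lambda_j\otimes\lambda_k\otimes\mathds{1}$ should equal $\delta_{jk}\mathds{1}_d + \{\lambda_j,\lambda_k\}$, reproducing $\mathbb{L}^*_{jk}$ after the prefactor $(d{+}1)/(d{+}2)$.

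Since complex projective $3$-designs $\{(|\phi_i\rangle, w_i)\}_{i\in I}$ are known to exist in every finite dimension (through Clifford-group orbits when $d$ is a prime power, and through explicit constructions more generally), the Haar integral can be replaced by the finite weighted sum $X_\mathrm{sol} = d\sum_i w_i\,u(\phi_i)u(\phi_i)^\top \otimes |\phi_i\rangle\langle\phi_i|$ without changing its value. This displays $X_\mathrm{sol}$ as an explicit finite conic combination of elements of $\mathcal{M}_{rs,+}(\mathcal{X}) \otimes \mathcal{B}_{sa,+}(\mathcal{H}_d)$, hence $X_\mathrm{sol} \in \mathcal{S}_{SEP}$. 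The main obstacle I anticipate is the $(j{+}1,k{+}1)$-block identity: one must enumerate the six partial traces $\mathrm{Tr}_{1,2}[(\lambda_j\otimes\lambda_k\otimes\mathds{1})P_\pi]$, show that tracelessness of the Gell--Mann matrices kills the three permutations fixing the third tensor slot (identity, $(13)$ and $(23)$), and verify that the surviving transposition $(12)$ together with the two $3$-cycles contribute exactly $\delta_{jk}\mathds{1}_d$, $\lambda_j\lambda_k$ and $\lambda_k\lambda_j$ respectively. This cycle-by-cycle accounting is the tedious step; once it is done, the rest of the argument is pure bookkeeping.
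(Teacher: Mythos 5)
Your proposal is correct, and I verified the moment computations you flagged as the tedious step: with the Haar weights you chose, the $(1,1)$ block gives $\mathds{1}_d$, the $(1,j{+}1)$ blocks give $\lambda_j$ via the 2-design moment and tracelessness, and in the $(j{+}1,k{+}1)$ blocks the permutations $e$, $(13)$, $(23)$ indeed vanish by tracelessness while $(12)$ and the two 3-cycles contribute $\delta_{jk}\mathds{1}_d$, $\lambda_k\lambda_j$ and $\lambda_j\lambda_k$, so that $d(d+1)^2\cdot\frac{1}{d(d+1)(d+2)}\big(\{\lambda_j,\lambda_k\}+\delta_{jk}\mathds{1}_d\big)=\mathbb{L}^*_{jk}$ exactly. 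However, your route is genuinely different from the paper's. The paper builds the separable decomposition operationally: it takes an NHCRB-attaining rank-one SIC POVM $\{\Pi_l^*\}_{l\in[d^2]}$ and the associated real estimator coefficients $\xi_{jl}$ satisfying $\lambda_j=\sum_l\xi_{jl}\Pi_l^*$ and $\mathbb{L}^*=\sum_l \xi_l\xi_l^\top\otimes\Pi_l^*$, and writes $X_\mathrm{sol}=\sum_{l\in[d^2]}\Xi_l\otimes\Pi_l^*$ with rank-one real PSD matrices $\Xi_l$ built from $(1,\xi_{1l},\dots,\xi_{nl})$; this gives a compact $d^2$-term decomposition with a clear measurement-theoretic meaning, but it is conditional on the existence of a SIC POVM in dimension $d$ (an open conjecture in general). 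Your construction replaces the SIC by the full Fubini--Study average of rank-one separable terms $u(\phi)u(\phi)^\top\otimes\ketbra{\phi}$ and then discretises by a weighted complex projective 3-design; since weighted 3-designs exist in every finite dimension (and, alternatively, $\mathcal{S}_{SEP}$ is a closed convex cone in finite dimension, so the Haar integral itself already lies in it), your argument is unconditional, at the cost of more terms and of not exhibiting the optimal individual measurement explicitly. It is worth noting that the two decompositions are closely related: for a SIC element $\Pi_l^*=\ketbra{\phi_l}/d$ one has $\xi_{jl}=(d+1)\langle\phi_l|\lambda_j|\phi_l\rangle$, so your $u(\phi)$ is the continuum analogue of the paper's classical factor, with the 3-design property standing in for the special third-moment structure of SICs.
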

\begin{proof}
Let~$\{\Pi_1^*, \dots, \Pi_{d^2}^*\}$ denote an optimal SIC POVM
(if one exists) attaining the NHCRB
for the identity-weighed
full GMM model on maximally-mixed state.
Then, from the NHCRB definition~\cite{Lorcan21},
there exists an~$n\times d^2$ real matrix~$\xi$
(given by~$\hat{\theta}_{jl}^*-\theta_j$,
where~$\hat{\theta}_{jl}$ is the classical
estimator function that assigns a value
to~$\theta_j$ based on outcome~$l$)
that simultaneously satisfies
the relations
(Eqs. (12) \& (13) in Ref.~\cite{Lorcan21})
\begin{equation}
    \mathbb{L}^* = \sum_{l\in[d^2]} \begin{pmatrix} \xi_{1l} \\ \xi_{2l} \\ \vdots \\ \xi_{nl} \end{pmatrix}\begin{pmatrix} \xi_{1l} & \xi_{2l} & \dots & \xi_{nl} \end{pmatrix} \otimes \Pi_l^*
\end{equation}
and
(Eq.~(17) in Ref.~\cite{Lorcan21})
\begin{equation}
    X_j = \lambda_j = \sum_{l\in[d^2]} \xi_{jl} \Pi^*_l  \quad (\text{for all} \; j \in [n]) \, ,
\end{equation}
thus connecting the optimal solution~$\mathbb{L}^*$,
the optimal unbiased operators~$\mathbb{X}$ and
the optimal SIC POVM~$\{\Pi_l^*\}$.
Finally, we can decompose~$\mathds{1}_d = \sum_{l\in[d^2]} \Pi_l^*$
using the POVM.
This lets us rewrite the candidate solution as
\begin{equation}
\begin{gathered}
    X_\mathrm{sol}  = \begin{bmatrix}
        \mathds{1}_d & \mathbb{X} \\
        \mathbb{X}^\top & \mathbb{L}^*
    \end{bmatrix} \\
    = \sum_{l\in[d^2]} \begin{bmatrix}
        \Pi_l^* &  \xi_{1l} \Pi_l^*    & \dots & \xi_{nl} \Pi_l^* \\
        \xi_{1l} \Pi_l^* & \xi_{1l}^2 \Pi_l^* & \dots & \xi_{1l} \xi_{nl} \Pi_l^*  \\
        \vdots & \vdots &\ddots & \vdots \\
        \xi_{nl} \Pi_l^* & \xi_{1l} \xi_{nl} \Pi_l^* & \dots & \xi_{nl}^2 \Pi_l^*
    \end{bmatrix}    \\
    {=} {\sum_{l\in[d^2]}} \underbrace{\begin{bmatrix}
        1 &  \xi_{1l}   & \dots & \xi_{nl} \\
        \xi_{1l}  & \xi_{1l}^2   & \dots & \xi_{1l} \xi_{nl}  \\
        \vdots & \vdots &\ddots & \vdots \\
        \xi_{nl}  & \xi_{1l} \xi_{nl} & \dots & \xi_{nl}^2 
    \end{bmatrix}}_{\Xi_l \in \mathcal{M}_{rs,+}(\mathcal{X})}   {\otimes} \underbrace{\Pi_l^*}_{\mathcal{B}_{sa,+}(\mathcal{H}_d)}
\end{gathered}
\end{equation}
thus proving~$X_\mathrm{sol} \in \mathcal{S}_{SEP}$.
\end{proof}

\section{Gill-Massar Cram\'{e}r-Rao Bound}
\label{sec:appNHvsGM}

In fact,
the Gill-Massar CRB (GMCRB)~\cite{Gill2000},
\begin{equation}
 \CGM[\rho_\theta] \coloneqq \frac{\left ( \ctrace[J_\mathrm{SLD}^{-1/2}]\right)^2}{d-1} \, ,
\end{equation}
which is obtained
by inserting the classical CRB,~$\ctrace(V_\theta) \geq \ctrace(J^{-1})$,
into Eq.~\eqref{eq:GMinequality},
is identical to the NHCRB
for the full-parameter linear GMM model.
This follows from
the inequality in Eq.~\eqref{eq:GMinequality}
being saturated in this case
(Sec.~VC below Eq.~(54) in \cite{Gill2000}).
We further numerically verify this equivalence
in Fig.~\ref{fig:GMNHcomp},
where we plot the two bounds
for estimating all~8 GMM coefficients
from 2000 random qutrit states.
Both bounds agree for this model,
as evidenced by the points all
lying on the~$y=x$ line.
However, this equivalence
raises the question of why
we choose the NHCRB over the GMCRB
as our main tool
to quantify finite-copy precision,
which we now answer.

\begin{figure}[htb]
    \centering
    \includegraphics[width=0.8\columnwidth]{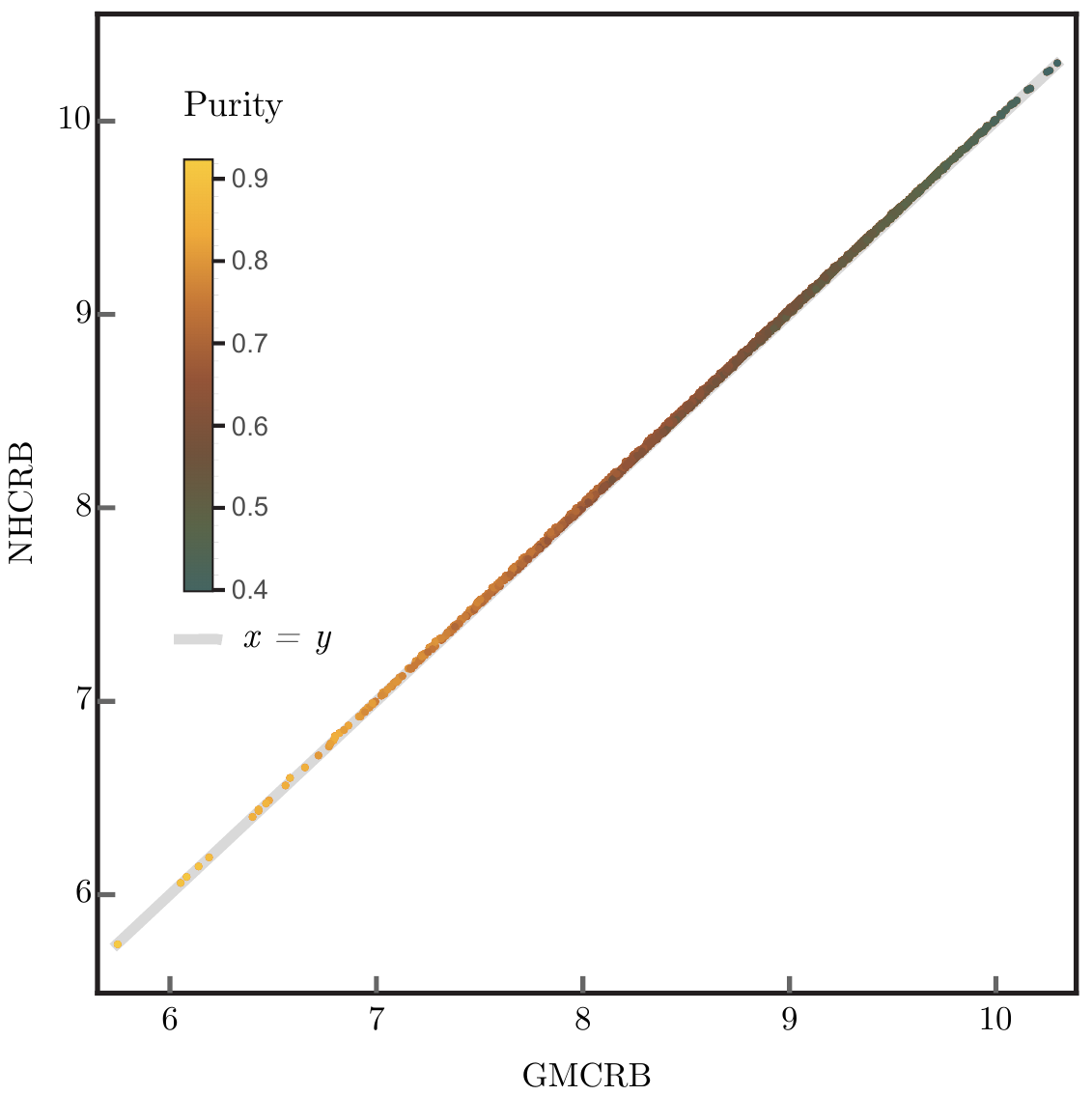}
    \caption{Comparison of the NHCRB and the GMCRB for tomography
    in the GMM basis~($n=8$, $d=3$). The two bounds are equal for the 2000 randomly-generated states and are color-coded by purity of the state.}
    \label{fig:GMNHcomp}
\end{figure}

\begin{figure}
    \centering
    \includegraphics[width=0.9\columnwidth]{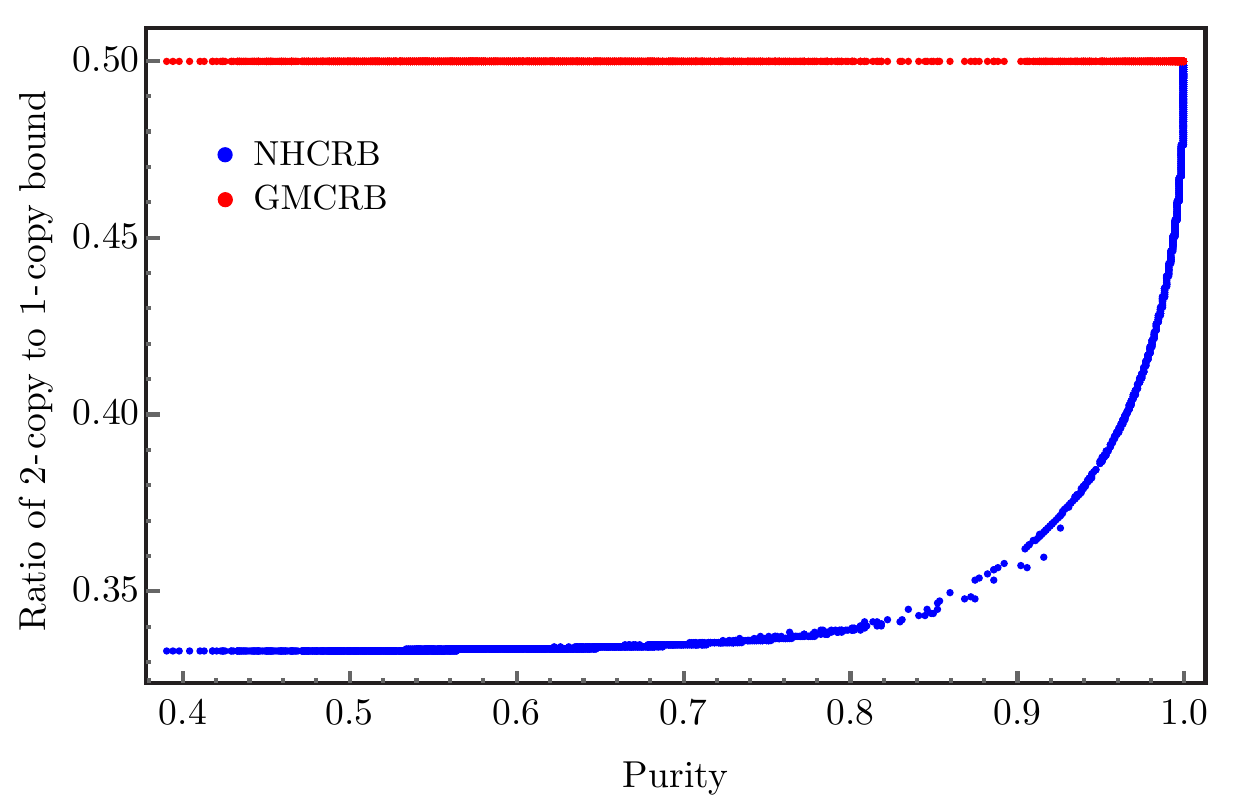}
    \caption{Comparison of the two-copy to one-copy ratio
    for the NHCRB and the GMCRB.
    Bounds correspond to tomography in the GMM basis
    ($n=8$, $d=3$)
    for 5000 random states.
    The GMCRB is additive and underestimates the two-copy enhancement
    except for pure states, where the two bounds agree
    and there is no two-copy enhancement.}
    \label{fig:GMNHcomp2}
\end{figure}

\begin{figure}
    \centering
    \includegraphics[width=0.8\columnwidth]{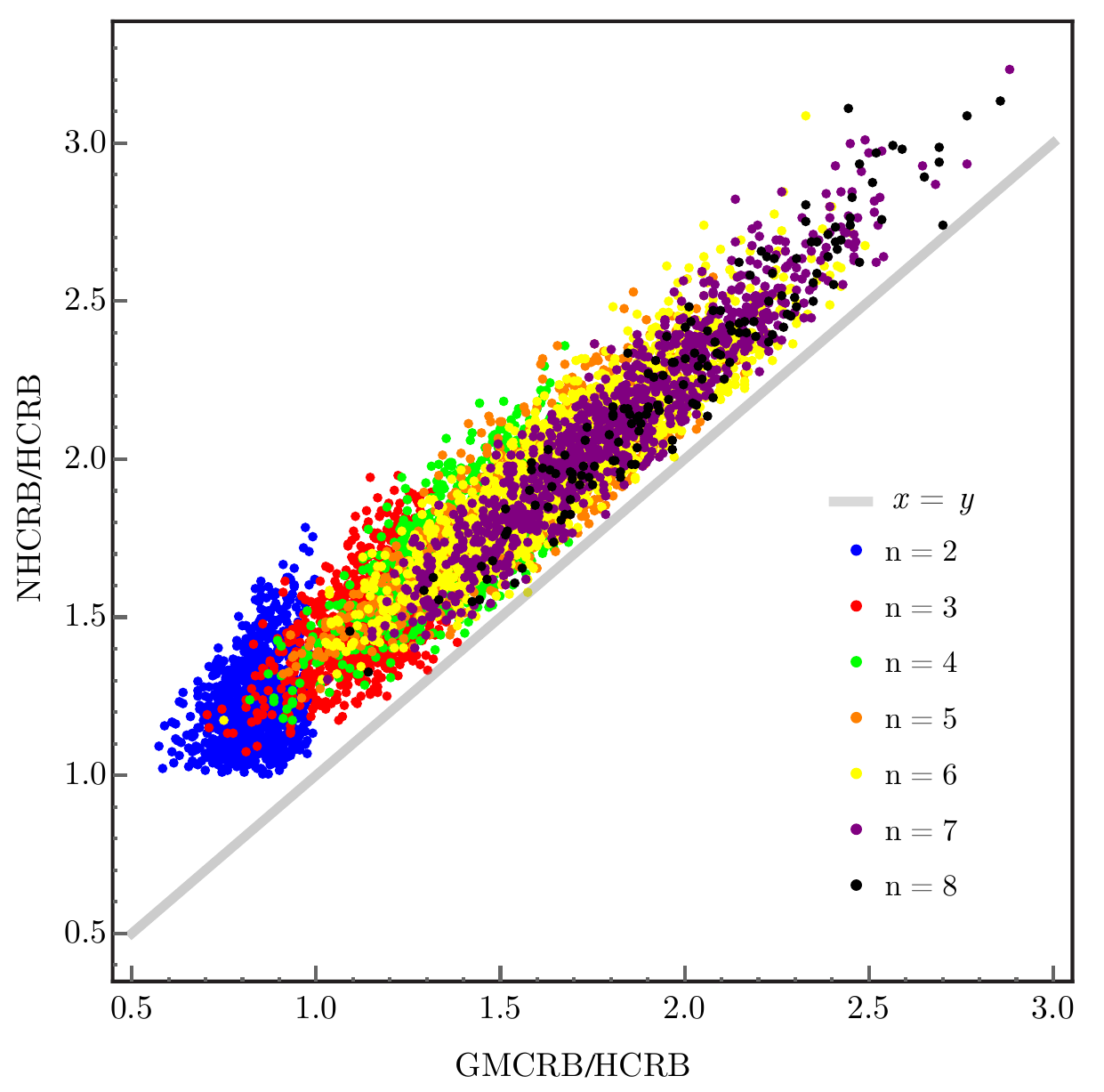}
    \caption{Comparison of the NHCRB and the GMCRB,
    normalised by the HCRB, for estimating fewer than~$\nmax$
    independent (but non-orthogonal)
    parameters via individual measurements. The bounds are
    calculated for 2000 random qutrit states with number of
    parameters~$n$ ranging from two to eight.
    The NHCRB is tighter than the GMCRB in this case,
    even for full-parameter models (black points).}
    \label{fig:GMNHfewpara}
\end{figure}

In short, the NHCRB is
generally tighter
than the GMCRB across a range
of qudit estimation models
(e.g., models comprising a few parameters,~$n<\nmax$,
see Figs.~\ref{fig:GMNHratioComp},~\ref{fig:GMNHfewpara})
and thus is better-suited
for the exploration of model-free quantities
like~$\ratMI_n$ and~$\ratMI$.
In Fig.~\ref{fig:GMNHratioComp},
we compare the GMCRB-to-HCRB ratio~$\ratGM_n$
(dark gray bars, blue line)
and the NHCRB-to-HCRB ratio~$\rat_n$
(light gray bars, red line)
by combining ratios from known analytic models
and from~1300
randomly-sampled numerical models
for each~$n$ from 1 to~$\nmax=8$ and~$d=3$.
The NHCRB ratio satisfies~$\rat_n\leq n$,
whereas the GMCRB ratio satisfies~$\ratGM_n \leq n/(d-1)$,
proved in Sec.~\ref{subsec:analresultsothrbnds}.

Similarly, the NHCRB
shows a sub-additive scaling with number of copies
similar to~$\MI$, whereas
the standard GMCRB is additive with number of copies.
This means that NHCRB ratios
can be directly used to compare multi-copy collective precision
to separably-attainable precision,
whereas the GMCRB requires a modification
for the two-copy setting~\cite{Zhu1}.
The multi-copy GMCRB~\cite{Gill2000} is
defined via
\begin{equation}
\begin{split}
    \CGM[\rho_\theta^{\otimes k}] &\coloneqq \min_{V_\theta^{(k)} \succcurlyeq 0} \Big \{ \ctrace( V_\theta^{(k)}) {\Big \vert}  \\
    &\ctrace (J_\mathrm{SLD}^{-1} (k V^{(k)}_\theta )^{-1} ) \leq d-1 \Big \} \, ,
\end{split}
\label{eq:GMCRBdefn}
\end{equation}
where~$^{(k)}$ represents $k$-copy quantities.
Rephrasing the minimisation
in Eq.~\eqref{eq:GMCRBdefn}
in terms of~$k V_\theta^{(k)}$
directly leads to
\begin{equation}
    \CGM[\rho_\theta^{\otimes k}] = \frac{1}{k} \CGM[\rho_\theta] \, ,
    \label{eq:GMadditive}
\end{equation}
meaning the GMCRB is additive
for measuring~$k$ copies of~$\rho_\theta$
simultaneously.
This complements
the well-known additivity of the SLD QFI
(Eqs.~(72)~\&~(73) in~\cite{TA14}),
on which the GMCRB is based.
We note that
the minimisation
in Eq.~\eqref{eq:GMCRBdefn}
has the closed-form solution:
\begin{equation}
 \CGM[\rho_\theta^{\otimes k}] = \frac{\left ( \ctrace[J_\mathrm{SLD}^{-1/2}]\right)^2}{k(d-1)} \, .
\end{equation}

For~$k=2$, Eq.~\eqref{eq:GMadditive}
implies that the ratio of the two-copy bound
to the one-copy bound is exactly half
for the GMCRB,
as can be seen in Fig.~\ref{fig:GMNHcomp2}.
In Fig.~\ref{fig:GMNHcomp2},
we compare the ratio of two-copy to one-copy
bounds for the NHCRB and the GMCRB
over 5000 randomly generated qutrit states.
It is clear that the NHCRB
is not additive with respect to number of copies;
instead, the two-copy NHCRB is always
smaller than the two-copy GMCRB,
except for pure states where the two bounds agree.
This subadditivity of the NHCRB
and additivity of the GMCRB
can be attributed to the fact
that
the $k$-copy GMCRB considers
individually measuring each
of the~$k$ copies,
whereas the~$k$-copy NHCRB
considers measuring the $k$-copies
simultaneously or collectively.
As a result, the gap between
two-copy NHCRB and GMCRB represents
the increase in precision from
two-copy measurements compared to
one-copy measurements.
Notably the optimal Fisher information
is also not additive under tensoring.

Moreover, for estimating fewer than~$\nmax$ parameters,
the NHCRB is strictly higher than the GMCRB
even in the one-copy case,
i.e., the former is a tighter bound.
Figure~\ref{fig:GMNHfewpara} depicts this
by considering the estimation of~2 to~8
arbitrary parameters from 2000 randomly generated
qutrit states
(following the same methodology
as used in Fig.~\ref{fig:estimatearbitstatearbitparas}
to generate the states and parameters).
The GMCRB and NHCRB are computed for this model
and are both normalised by the HCRB.
It is clear that all the plotted points
lie above the~$y=x$ line,
numerically demonstrating that the NHCRB
is tighter than the GMCRB in this case.
Nonetheless, Fig.~\ref{fig:GMNHfewpara} also
reveals an increasing trend of the
ratio between the individual-optimal
and collective-optimal precisions
with number of parameters,
irrespective of the particular choice
of the individual-precision bound.

\begin{figure}[h]
    \centering
    \includegraphics[width=\columnwidth]{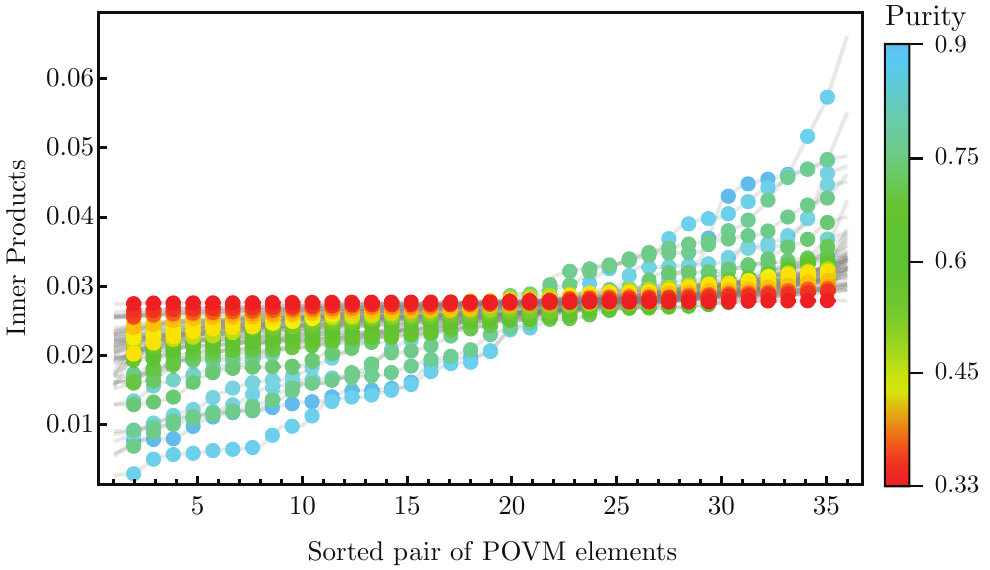}
    \caption{Evolution of the qutrit individual-optimal measurements
    from SIC POVM (red) to IC POVMs (all other colours)
    as purity of~$\rho_\theta$
    increases from 1/3 for the maximally-mixed state to
    1 for pure states.}
    \label{fig:InnerProds}
\end{figure}

\section{Optimal IC POVMs for Arbitrary States}
\label{sec:appOptICPOVMs}

In this section we numerically investigate
the optimal POVMs saturating the NHCRB
for the full-parameter linear GMM model
and for arbitrary
states~$\rho_\theta^*$.
As the purity of~$\rho_\theta^*$
increases from~$1/d$ to~$1$,
the optimal individual measurements
evolve from SIC POVMs to distorted IC POVMs.
This transition is depicted in Fig.~\ref{fig:InnerProds},
where the
inner products between the POVM vectors
are equal at minimum purity but spread out
with increasing purity.
For Fig.~\ref{fig:InnerProds},
we first generate 500 random mixed qutrit states
by uniformly-randomly choosing
the parameters~$\{\theta_j\}$
and rejection-sampling
to ensure the positivity of~$\rho_\theta$.
For each state,
we numerically solve for the optimal
one-copy,~$d^2$-element, rank-one POVM
and ensure that it saturates the NHCRB.
Then we compute the inner-product between
every pair of elements of this optimal POVM.
We then bin the states into 57 purity
intervals and average the
sorted list of inner products
over each interval.
Finally, we plot these sorted inner-products
for each purity interval, colour-coded by
the average purity of that interval.

\begin{figure*}[htbp]
    \centering
    \includegraphics[width=0.8\textwidth]{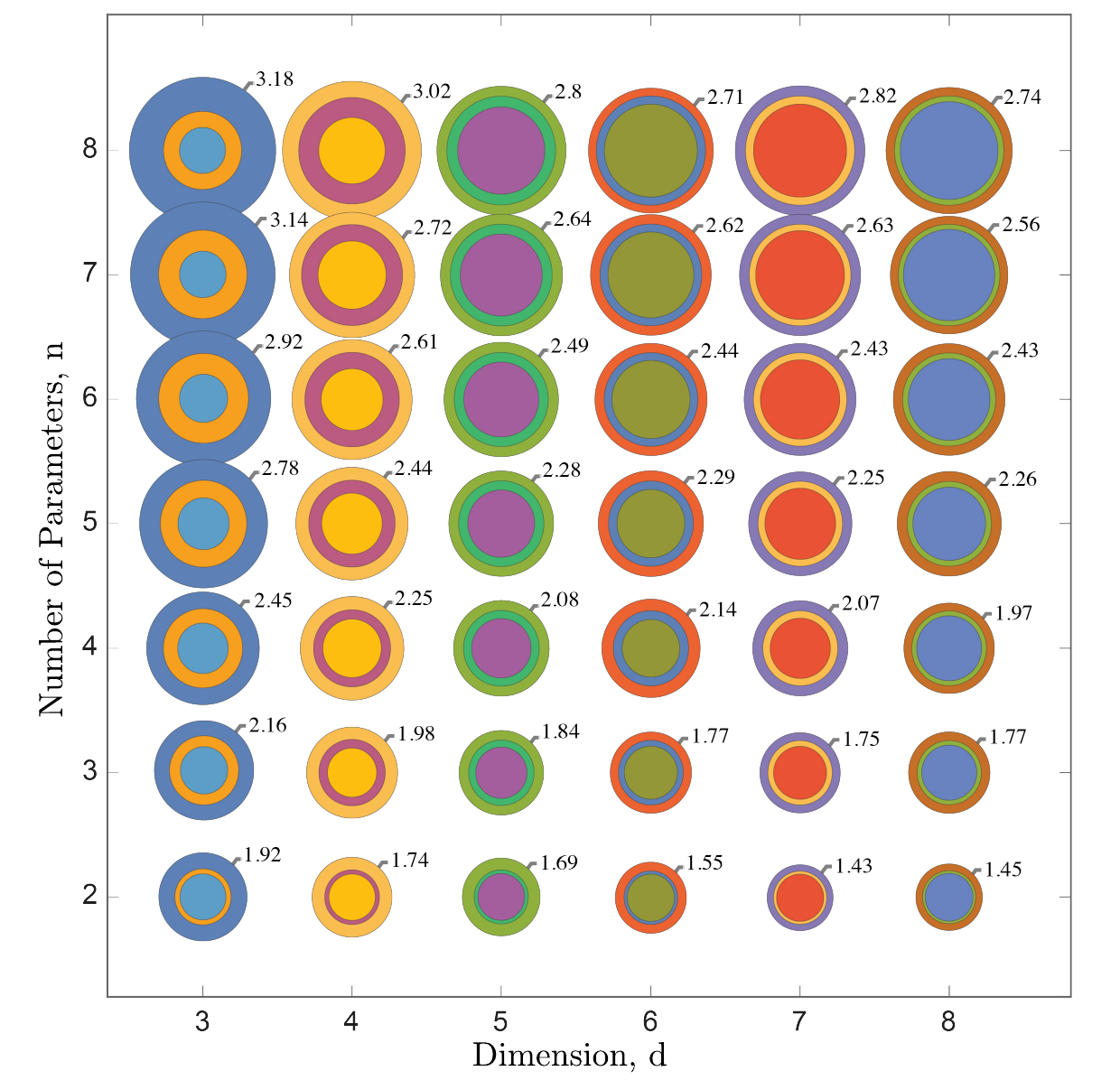}
    \caption{%
    Bubble plot of random sampling
    data for the ratio between the NHCRB and the HCRB
    for estimating arbitrary parameters
    from arbitrary qudit states.
    Bubbles are plotted on a grid
    over qudit dimension~$d$
    and number of parameters~$n$.
    The size (diameter) of the bubbles
    indicate the minimum,
    the average and the maximum ratios sampled
    for each~$d$ and~$n$, and
    the numerical labels are the
      maximum ratio up to three significant figures.}
    \label{fig:GridPlotDN}
\end{figure*}

\section{Random-Sampling of States and Parameters}
\label{sec:appGridPlot}

For the random-sampling experiments
in Fig.~\ref{fig:estimateallarbitstate},
we generate random mixed qudit states
by first generating
an entry-wise random~$d\times d$ complex matrix~$S$,
and then assigning~$\rho_\theta =
S S^\dagger/\qtrace(S S^\dagger)$.
This procedure ensures~$\rho_\theta =
\rho_\theta^\dagger$,
$\rho_\theta \succcurlyeq 0$
and~$\qtrace(\rho_\theta) = 1$.
The true GMM coefficients
($\theta^*$ for the GMM model)
can be found via~$\Tr(\rho_\theta \lambda_j)$.
Unfortunately, this procedure
generates low-purity states
with a much higher probability
than high-purity states, which becomes
a problem for~$d=3$ and~$4$.
We circumvent this issue
by generating additional samples
of the form~$(1-p) \rho_\theta + p \, \mathds{1}_d/d$
and~$(1-p)\rho_\theta + p \ketbra{+}_d$,
where~$p \in [0,1]$.
This sampling method is
non-uniform but our aim here is
not to sample uniformly
according to some measure,
but rather to find models
with extremal properties.
We compute
the ratio for the full-parameter
linear GMM model for all these states,
the random samples and their convex combinations,
to produce the yellow points in Fig.~\ref{fig:estimateallarbitstate}.
The ratio-maximising (red) and ratio-minimising (blue) states
at fixed purity are found by numerically
maximising and minimising the ratio
over the state space.

For the random-sampling experiments
in Figs.~\ref{fig:estimatearbitstatearbitparas}
and~\ref{fig:GridPlotDN},
we generate random mixed qudit states
by the following technique.
For each~$d$ and~$n$,
we uniformly-randomly choose~$\nmax$
coefficients~$\{\phi_j\}_{j\in [\nmax]}$
from the interval~$\left [-\sqrt{(d-1)/d}, \sqrt{(d-1)/d} \right ]$.
These define a
random state~$\rho_\theta = \mathds{1}_d + \sum_{j\in [\nmax]} \phi_j \lambda_j$
which is guaranteed to be trace-one and Hermitian,
but not positive.
We ensure the positive semi-definiteness of~$\rho_\theta$
by rejection sampling (discarding if it is not positive).
This process generates a valid random qudit state.
Next we generate the~$n$ arbitrary
parameters~$\{\theta_j\}_{j\in [n]}$
by generating at random the parameter derivatives
$\partial_j \rho_\theta$,
which must be Hermitian and traceless.
We do this by writing each~$\partial_j \rho_\theta$
in the GMM basis and randomly generating
the coefficients in this basis.
Then we rejection-sample to ensure
the~$n$ parameter derivatives are linearly-independent,
and lead to a valid model.

Figure~\ref{fig:GridPlotDN} indicates that
the~$10^4$ number of samples is relatively small
for higher~$d$ and~$n$---the minimum ratio observed,
which should be close to one, is much larger
for large~$d$ and~$n$.
This is because our sampling method
generates states with low purity with higher probability
and states with high purity with lower probability.
As a result, the increasing or decreasing trends
of the maximum observed ratio
with~$n$ or~$d$ are not perfect
for large~$d$ and~$n$ in Fig.~\ref{fig:estimatearbitstatearbitparas}.

\end{document}